\numberwithin{equation}{section}
\newcommand{\Null}{\text{Null}}
\newcommand{\md}{d}
\newcommand{\corr}{\text{correlation}}
\newcommand{\bDb}{{\bar{\Db}}}
\newcommand{\Rc}{{\cal {R}}}
\newcommand{\cn}{{\operatorname{cone}}}
\definecolor{darkred}{RGB}{150,0,0}
\definecolor{darkgreen}{RGB}{0,150,0}
\definecolor{darkblue}{RGB}{0,0,200}
\definecolor{orange}{RGB}{205, 140,0}
\newcommand{\red}[1]{\textcolor{red}{ #1}}
\newtheorem{theorem}{Theorem}[section]
\newtheorem{lemma}{Lemma}[section]
\newtheorem{corollary}{Corollary}[section]
\newtheorem{proposition}{Proposition}[section]
\newtheorem{definition}{Definition}[section]
\def \endprf{\hfill {\vrule height6pt width6pt depth0pt}\medskip}
\newenvironment{proof}{\noindent {\it Proof.} }{\endprf\par}
\newcommand{\beq}{\begin{equation}}
\newcommand{\eeq}{\end{equation}}
\newcommand{\beqa}{\begin{equation} \begin{aligned}}
\newcommand{\eeqa}{\end{aligned} \end{equation}}
\newcommand{\beqas}{\begin{equation*} \begin{aligned}}
\newcommand{\eeqas}{\end{aligned} \end{equation*}}
\newcommand{\nn}{\nonumber}
\newcommand{\bmat}{\begin{bmatrix}}
\newcommand{\emat}{\end{bmatrix}}
\newcommand{\Meb}{{\bf{A}}}
\newcommand{\Mec}{{\cal{A}}}
\newcommand{\Iden}{{\bf{I}}}
\newcommand{\tr}[1]{\operatorname{tr}\left(#1\right)}
\newcommand{\sgn}[1]{\operatorname{sgn}\left(#1\right)}
\newcommand{\supp}[1]{\operatorname{supp} \left(#1\right)}
\newcommand{\cone}{\operatorname{cone} }
\newcommand{\rank}{\operatorname{rank}}
\newcommand{\iprod}[2]{\left\langle #1 , #2 \right\rangle}
\newcommand{\li}{\left<}
\newcommand{\ri}{\right>}
\newcommand{\E}{\operatorname{\mathbb{E}}}  
\newcommand{\order}[1]{\mathcal{O}\left(#1\right)}
\newcommand{\minimize}{\operatorname{minimize}}
\newcommand{\vc}{\operatorname{vec}}
\newcommand{\norm}[1]{\|#1\|}
\newcommand{\onenorm}[1]{\|#1\|_1}
\newcommand{\twonorm}[1]{\|#1\|_2}
\newcommand{\fronorm}[1]{\|#1\|_{F}}
\newcommand{\nucnorm}[1]{\|#1\|_*}
\newcommand{\abs}[1]{|#1|}
\newcommand{\nor}{\norm{\cdot}}
\newcommand{\Ab}{\mathbf{A}}
\newcommand{\ab}{\mathbf{a}}
\newcommand{\bb}{\mathbf{b}}
\newcommand{\Eb}{\mathbf{E}}
\newcommand{\G}{\mathbf{G}}
\newcommand{\Ib}{\mathbf{I}}
\renewcommand{\U}{\mathbf{U}}
\newcommand{\V}{\mathbf{V}}
\newcommand{\W}{\mathbf{W}}
\newcommand{\X}{\mathbf{X}}
\newcommand{\Y}{\mathbf{Y}}
\newcommand{\Z}{\mathbf{Z}}
\newcommand{\vv}{\mathbf{v}}
\newcommand{\mlow}{m_0}
\newcommand{\w}{\mathbf{w}}
\newcommand{\x}{\mathbf{x}}
\newcommand{\ub}{\mathbf{u}}
\newcommand{\g}{\mathbf{g}}
\newcommand{\vb}{\mathbf{v}}
\newcommand{\Db}{\mathbf{D}}
\newcommand{\gdb}{\mathbf{D}}
\renewcommand{\b}{\mathbf{b}}
\newcommand{\e}{\mathbf{e}}
\newcommand{\y}{\mathbf{y}}
\newcommand{\s}{\mathbf{s}}
\newcommand{\z}{\mathbf{z}}
\renewcommand{\a}{\mathbf{a}}
\newcommand{\h}{\mathbf{h}}
\newcommand{\dil}{{\alpha}}
\newcommand{\A}{\mathcal{A}}
\newcommand{\Ac}{\mathcal{A}}
\newcommand{\Cc}{\mathcal{C}}
\renewcommand{\D}{\mathcal{D}}
\newcommand{\Dc}{\mathcal{D}}
\newcommand{\Gc}{\mathcal{G}}
\newcommand{\Mc}{\mathcal{M}}
\newcommand{\Nc}{\mathcal{N}}
\newcommand{\Pc}{\mathcal{P}}
\newcommand{\Sc}{{\mathcal{S}}}
\newcommand{\Yc}{\mathcal{Y}}
\newcommand{\bx}{{\bf{\bar{x}}}}
\newcommand{\bX}{{\bf{\bar{X}}}}
\newcommand{\Hb}{{\bf{{H}}}}
\newcommand{\R}{\mathbb{R}}
\newcommand{\Sbb}{{\mathbb{S}}}
\newcommand{\Pro}{{\mathbb{P}}}
\newcommand{\tensor}{\tau}
\newcommand{\eps}{\varepsilon}
\newcommand{\p}{{\bf{p}}}
\newcommand{\la}{{\lambda}}
\newcommand{\bSi}{\mathbf{\Sigma}}
\newcommand{\st}{\star}
\newcommand{\pa}{\partial}
\newcommand{\ang}{\rho}
\newcommand{\rsing}{\sigma_{\Cc}}
\newcommand{\subto}{\operatorname{subject~to}}
\newcommand{\minz}{\operatorname{minimize}}
\newcommand{\Cp}{{\Cc^\circ}}
\newcommand{\tcs}{\mathcal{R}}
\newcommand{\best}{ _{\text{\rm best}} }
\newcommand{\comment}[1]{}
\newcommand{\samet}[1]{ {{\color{black}{ #1 }}}}
\title{Simultaneously Structured Models with Application to\\ Sparse and Low-rank Matrices}
\author{Samet Oymak$^{\textit{c}}$, Amin Jalali$^{\textit{w}}$,\\
Maryam Fazel$^{\textit{w}}$, Yonina C. Eldar$^{\textit{t}}$, Babak Hassibi$^{\textit{c}}$\\\\$^{\textit{c}}$ California Institute of Technology\\$^{\textit{w}}$ University of Washington\\$^{\textit{t}}$ Technion}
\begin{document}
\maketitle
\vspace{-0.3in}
\begin{abstract}
The topic of recovery of a structured model given a small number of linear observations has been well-studied in recent years. Examples include recovering sparse or group-sparse vectors, low-rank matrices, and the sum of sparse and low-rank matrices, among others. In various applications in signal processing and machine learning, the model of interest is known to be structured in \emph{several} ways at the same time, for example, a matrix that is simultaneously sparse and low-rank.

Often norms that promote each individual structure are known, and allow for recovery using an order-wise optimal number of measurements (e.g., $\ell_1$ norm for sparsity, nuclear norm for matrix rank). Hence, it is reasonable to minimize a combination of such norms. We show that, surprisingly, if we use multi-objective optimization with these norms, then we can do no better, order-wise, than an algorithm that exploits only one of the present structures. This result suggests that to fully exploit the multiple structures, we need an entirely new convex relaxation, i.e. not one that is a function of the convex relaxations used for each structure. We then specialize our results to the case of sparse and low-rank matrices. We show that a nonconvex formulation of the problem can recover the model from very few measurements, which is on the order of the degrees of freedom of the matrix, whereas the convex problem obtained from a combination of the $\ell_1$ and nuclear norms requires many more measurements. This proves an order-wise gap between the performance of the convex and nonconvex recovery problems in this case. Our framework applies to arbitrary structure-inducing norms as well as to a wide range of measurement ensembles. This allows us to give performance bounds for problems such as sparse phase retrieval and low-rank tensor completion.
\end{abstract}

{\bf Keywords.} Compressed sensing, convex relaxation, regularization.

\section{Introduction}\label{intro}

Recovery of a structured model (signal) given a small number of linear observations has been the focus of many studies recently. Examples include recovering sparse or group-sparse vectors (which gave rise to the area of compressed sensing)~\cite{candes-tao,donoho,candes-tao2}, low-rank matrices~\cite{RFP,CandesRecht-completion}, and the sum of sparse and low-rank matrices~\cite{ChandraParriloWillsky-SL,candes_wright}, among others.
More generally, the recovery of a signal that can be expressed as the sum of a few atoms out of an appropriate atomic set has been studied in~\cite{chandra}.
Canonical questions in this area include: How many generic linear measurements are enough to recover the model by any means? How many measurements are enough for a tractable approach, e.g., solving a convex optimization problem?
In the statistics literature, these questions are posed in terms of sample complexity and error rates for estimators minimizing the sum of a quadratic loss function and a regularizer that reflects the desired structure~\cite{AgarwalNegahbanWainwright2012}.

There are many applications where the model of interest is known to have \emph{several} structures
 at the same time (Section ~\ref{sec:apps}). We then seek a signal that lies in the intersection of several sets defining the individual structures (in a sense that we will make precise later).
The most common convex regularizer (penalty) used to promote all structures together is a linear combination of well-known regularizers for each structure.
However, there is currently no general analysis and understanding of how well such regularization performs in terms of the number of observations required for successful recovery of the desired model.
This paper addresses this ubiquitous yet unexplored problem; i.e., the recovery of
\emph{simultaneously structured models}.

An example of a simultaneously structured model is a matrix that is \emph{simultaneously sparse and low-rank}. One would like to come up with algorithms that exploit both types of structures to minimize the number of measurements required for recovery.
An $n\times n$ matrix with rank $r\ll n$ can be described by $\order{rn}$ parameters, and can be recovered using $\order{rn}$ generic measurements via nuclear norm minimization~\cite{RFP,CandesPlan}.
On the other hand, a block-sparse matrix with a $k \times k$ nonzero block where $k \ll n$ can be described by $k^2$ parameters and can be recovered with $\order{k^2 \log \frac{n}{k}}$ generic measurements using $\ell_1$ minimization.
However, a matrix that is \emph{both} rank $r$ and block-sparse can be described by $\order{rk}$ parameters. The question is whether we can exploit this joint structure to efficiently recover such a matrix with $\order{rk}$ measurements.

In this paper we give a negative answer to this question in the following sense:
if we use multi-objective optimization with the $\ell_1$ and nuclear norms (used for sparse signals and low rank matrices, respectively), then the number of measurements required is lower bounded by $\order{\min\{k^2,rn\}}$. In other words, we need at least this number of observations for the desired signal to lie on the Pareto optimal front traced by the $\ell_1$ norm and the nuclear norm. This means we can do \emph{no better than an algorithm that exploits only one of the two structures}.

We introduce a framework to express general simultaneous structures, and as our main result, we prove that the same phenomenon happens for a general set of structures. 
We are able to analyze a wide range of measurement ensembles, including subsampled standard basis (i.e. matrix completion), Gaussian and subgaussian measurements, and quadratic measurements.
Table~\ref{tab:related} summarizes known results on recovery of some common structured models, along with a result of this paper specialized to the problem of low-rank and sparse matrix recovery. The first column gives the number of parameters needed to describe the model (often referred to as its `degrees of freedom'), the second and third columns show how many generic measurements are needed for successful recovery.
In `nonconvex recovery', we assume we are able to find the global minimum of a nonconvex problem. This is clearly intractable in general, and not a practical recovery method---we consider it as a benchmark for theoretical comparison with the (tractable) convex relaxation in order to determine how powerful the relaxation is.

The first and second rows are the results on $k$ sparse vectors in $\R^n$ and rank $r$ matrices in $\R^{n\times n}$ respectively, \cite{candes_tao,CandesPlan}. The third row considers the recovery of ``low-rank plus sparse'' matrices. Consider a matrix $\X\in\R^{n\times n}$ that can be decomposed as $\X=\X_L+\X_S$ where $\X_L$ is a rank $r$ matrix and $\X_S$ is a matrix with only $k$ nonzero entries. The degrees of freedom of $\X$ is $\order{rn+k}$.
Minimizing the infimal convolution of $\ell_1$ norm and nuclear norm, i.e., $f(\X)=\min_{\Y}\|\Y\|_\st+\la\|\X-\Y\|_1$ subject to random Gaussian measurements on $\X$, gives a convex approach for recovering $\X$.
It has been shown that under reasonable incoherence assumptions, $\X$ can be recovered from $\order{(rn+k)\log^2 n}$ measurements which is suboptimal only by a logarithmic factor \cite{wright12}. Finally, the last row in Table~\ref{tab:related} shows one of the results in this paper.
Let $\X\in\R^{n\times n}$ be a rank $r$ matrix whose entries are zero outside a $k_1\times k_2$ submatrix. The degrees of freedom of $\X$ is $\order{(k_1+k_2)r}$. We consider both convex and non-convex programs for the recovery of this type of matrices. The nonconvex method involves minimizing the number of nonzero rows, columns and rank of the matrix jointly, as discussed in Section \ref{sub:SLR}.
As shown later, $\order{(k_1+k_2)r\log n}$ measurements suffices for this program to successfully recover the original matrix. The convex method minimizes any convex combination of the individual structure-inducing norms, namely the nuclear norm and the $\ell_{1,2}$ norm of the matrix, which encourage low-rank and column/row-sparse solutions respectively. We show that with high probability this program cannot recover the original matrix with fewer than $\Omega(rn)$ measurements.
In summary, while nonconvex method is slightly suboptimal, the convex method performs poorly as the number of measurements scales with $n$ rather than $k_1+k_2$.

\begin{table}
\begin{center}
    \begin{tabular}{  l | l | l |  l }
    Model 				& Degrees of Freedom 	& Nonconvex recovery		& Convex recovery 			   \\ \hline
    Sparse vectors 		& $k$				& $\order{k}$			&  $\order{k\log \frac{n}{k}}$ 		 \\ \hline
    Low rank matrices 	& $r(2n-r)$ 		& $\order{rn}$	&  $\order{rn}$	 \\ \hline
    Low rank plus sparse	& $\order{rn+k}$ 	& not analyzed		& $\order{(rn+k)\log^2 n}$  	 \\\hline
    Low rank and sparse	&  $\order{r(k_1+k_2)}$		& $\order{r(k_1+k_2)\log n}$	& $\Omega(rn)$ 		 \\
    \end{tabular}
\end{center}
    \caption{Summary of results in recovery of structured signals. This paper shows a gap between the performance of convex and nonconvex recovery programs for simultaneously structured matrices (last row).}
    \label{tab:related}
\end{table}

\subsection{Contributions}

This paper describes a general framework for analyzing the recovery of models that have more than one structure, by combining penalty functions corresponding to each structure. The framework proposed includes special cases that are of interest in their own right, e.g., sparse and low-rank matrix recovery and low-rank tensor completion \cite{RechtTensor,TenSur}. Our contributions can be summarized as follows.

\paragraph{Poor performance of convex relaxations.}
We consider a model with several structures and associated structure-inducing norms. For recovery, we consider a multi-objective optimization problem to minimize the individual norms simultaneously.
Using Pareto optimality, we know that minimizing a weighted sum of the norms and varying the weights traces out all points of the Pareto-optimal front (i.e., the trade-off surface, Section \ref{setup}). We obtain a lower bound on the number of measurements for any convex function combining the individual norms. A sketch of our main result is as follows.

\begin{quote}\emph{Given a model $\x_0$ with $\tau$ simultaneous structures, the number of measurements required for recovery with high probability using any linear combination of the individual norms satisfies the lower bound
\[ m \geq c\, m_{\min} = c \min_{i=1,\ldots,\tau} m_i \]
where $m_i$ is an intrinsic lower bound on the required number of measurements when minimizing the $i$th norm only. The term $c$ depends on the measurement ensemble we are dealing with.
}\end{quote}
For the norms of interest, $m_i$ will be approximately proportional to the degrees of freedom of the $i$th model, as well as the sample complexity of the associated norm. With $m_{\min}$ as the bottleneck, this result indicates that the combination of norms performs no better than using only one of the norms, even though the target model 
has a very small degree of freedom.

\paragraph{Different measurement ensembles.} Our characterization of recovery failure is easy to interpret and deterministic in nature. We show that it can be used to obtain probabilistic failure results for various random measurement ensembles. In particular, our results hold for measurement matrices with i.i.d subgaussian rows, quadratic measurements and matrix completion type measurements.

\paragraph{Understanding the effect of weighting.} We characterize the sample complexity of the multi-objective function as a function of the weights associated with the individual norms. Our upper and lower bounds reveal that the sample complexity of the multi-objective function is related to a certain convex combination of the sample complexities associated with the individual norms. We give formulas for this combination as a function of the weights.

\paragraph{Incorporating general cone constraints.}
In addition, we can incorporate side information on $\x_0$, expressed as convex cone constraints. This additional information helps in recovery; however, quantifying how much the cone constraints help is not trivial. Our analysis explicitly determines the role of the cone constraint: Geometric properties of the cone such as its Gaussian width determines the constant factors in the bound on the number of measurements.

\paragraph{Sparse and Low-rank matrix recovery: illustrating a gap.}
As a special case, we consider the recovery of simultaneously sparse and low-rank matrices and prove that there is a significant gap between the performance of convex and non-convex recovery programs. This gap is surprising when one considers similar results in low-dimensional model recovery discussed above in Table~\ref{tab:related}.


\subsection{Applications} \label{sec:apps}

We survey several applications where simultaneous structures arise,
as well as existing results specific to these applications. These applications all involve models with simultaneous structures, but the measurement model and the norms that matter differ among applications. 

\paragraph{Sparse signal recovery from quadratic measurements.}
Sparsity has long been exploited in signal processing, applied mathematics, statistics and computer science for tasks such as compression, denoising, model selection, image processing and more. Despite the great interest in exploiting sparsity in various applications, most of the work to date has focused on recovering sparse or low rank data from linear measurements.
Recently, the basic sparse recovery problem has been generalized to the case in which the measurements are given by nonlinear transforms of the unknown input, \cite{BE12}. A special case of this more general setting is quadratic compressed sensing \cite{SESS11} in which the goal is to recover a sparse vector $\x$ from quadratic measurements $b_i=\x^T\Ab_i\x$. This problem can be linearized by \emph{lifting}, where we wish to recover a ``low rank and sparse'' matrix $\X=\x\x^T$ subject to measurements $b_i=\li\Ab_i,\X\ri$.

Sparse recovery problems from quadratic measurements arise in a variety of problems in optics. One example is sub-wavelength optical imaging \cite{SEA12,SESS11} in which the goal is to recover a sparse image from its far-field measurements, where due to the laws of physics the relationship between the (clean) measurement and the unknown image is quadratic. In \cite{SESS11} the quadratic relationship is a result of using partially-incoherent light. The quadratic behavior of the measurements in \cite{SEA12} arises from coherent diffractive imaging in which the image is recovered from its intensity pattern. Under an appropriate experimental setup, this problem amounts to reconstruction of a sparse signal from the magnitude of its Fourier transform.

A related and notable problem involving sparse and low-rank matrices is Sparse Principal Component Analysis (SPCA), mentioned in Section \ref{sec:disc}. 

\paragraph{Sparse phase retrieval.}
Quadratic measurements appear in phase retrieval problems, in which a signal is to be recovered from the magnitude of its measurements  $b_i=|\a_i^T \x|$, where each measurement is a linear transform of the input $\x \in \R^n$ and $\a_i$'s are arbitrary, possibly complex-valued
measurement vectors. An important case is when $\a_i^T\x$ is the Fourier Transform and $b_i^2$ is the power spectral density. Phase retrieval is of great interest in many applications such as optical imaging \cite{W63,millane}, crystallography \cite{H93}, and more \cite{H89,GS72,F82}.

The problem becomes linear when $\x$ is \emph{lifted} and we consider the recovery of $\X=\x\x^T$ where each measurement takes the form $b_i^2=\li\a_i\a_i^T,\X\ri$.
In \cite{SESS11}, an algorithm was developed to treat phase retrieval problems with sparse $\x$ based on a semidefinite relaxation, and low-rank matrix recovery combined with a row-sparsity constraint on the resulting matrix. More recent works also proposed the use of semidefinite relaxation together with sparsity constraints for phase retrieval \cite{OYDS12,LiVoroninski2012,JOB12,Vetterli}. An alternative algorithm 
was recently designed in \cite{SBE12} based on a greedy search. In \cite{JOB12}, the authors also consider sparse signal recovery based on combinatorial and probabilistic approaches and give uniqueness results under certain conditions. Stable uniqueness in phase retrieval problems is studied in \cite{EldarRobust}. The results of \cite{CESV12,candes_strohmer} applies to general (non-sparse) signals where in some cases \emph{masked} versions of the signal are required. 

\samet{
\paragraph{Fused lasso.}
Suppose the signal of interest is sparse and its entries vary slowly, i.e., the signal can be approximated by a piecewise constant function.  
To encourage sparsity, one can use the $\ell_1$ norm, and to encourage the piece-wise constant structure,
discrete total variation can be used, defined as
\beq
\|\x\|_{TV}=\sum_{i=1}^{n-1} |\x_{i+1}-\x_i| \,. \nn
\eeq
$\|\cdot\|_{TV}$ is basically the $\ell_1$ norm of the gradient of the vector; and is approximately sparse. The resulting optimization problem is known as the fused-lasso \cite{FusedLasso}, and is given as
\beq
\min_\x\|\x\|_1+\la \|\x\|_{TV}~~~\text{s.t.}~~~\Mec(\x)=\Mec(\x_0). \label{FL}
\eeq
To the best of our knowledge, the sample complexity of fused lasso has not been analyzed from a compressed sensing point of view. However, there is a series of recent work on the total variation minimization, which may lead to analysis of \eqref{FL} in the future \cite{Ward}.

We remark that TV regularization is also used together with the nuclear norm to encourage a low-rank and smooth (i.e., slowly varying entries) solution. This regularization finds applications in imaging and physics \cite{TVNuc1,TVNuc2}.
%
%
%

\paragraph{Low-rank tensors.}
Tensors with small Tucker rank can be seen as a generalization of low-rank matrices \cite{Tucker}. In this setup, the signal of interest is the tensor $\X_0\in\R^{n_1\times\dots\times  n_\tensor}$, and $\X_0$ is low-rank along its unfoldings which are obtained by reshaping $\X_0$ as a matrix with size $n_i\times \frac{n}{n_i}$, where $n=\prod_{i=1}^\tensor n_i$. Denoting the $i$'th unfolding by $\mathcal{U}_i(\X_0)$, a standard approach to estimate $\X_0$ from $\y=\Mec(\X_0)$ is minimizing the weighted nuclear norms of the unfoldings,
\beq
\min_{\X}\;\, \sum_{i=1}^\tensor  \la_i\|\mathcal{U}_i(\X)\|_\st~~~\text{subject to}~~~\y=\Mec(\X_0)\label{SSN}
\eeq
Low-rank tensors have applications in machine learning, physics, computational finance and high dimensional PDE's \cite{TenSur}. \eqref{SSN} has been investigated by several papers \cite{TenVis,RechtTensor}. Closer to us, \cite{SquareDeal} recently showed that the convex relaxation \eqref{SSN} performs poorly compared to information theoretically optimal bounds for Gaussian measurements. Our results can extend those to the more applicable tensor completion setup, where we observe the entries of the tensor.
}

Other applications of simultaneously structured signals include Collaborative Hierarchical Sparse Modeling \cite{SRSE11} where sparsity is considered within the non-zero blocks in a block-sparse vector, and the recovery of hyperspectral images where we aim to recover a simultaneously block sparse and low rank matrix from compressed observations \cite{vander}.

\subsection{Outline of the paper}
The paper is structured as follows. Background and definitions are given in Section \ref{setup}. An overview of the main results is provided in Section \ref{sec:main}. Section \ref{measure} discusses some measurement ensembles for which our results apply. Section \ref{upper bound sec} provides upper bounds for the convex relaxations for the Gaussian measurement ensemble. The proofs of the general results are presented in Section \ref{sec:general}. The proofs for the special case of simultaneously sparse and low-rank matrices are given in Section \ref{sec:SLR}, where we compare corollaries of the general results with the results on non-convex recovery approaches, and illustrate a gap. Numerical simulations in Section \ref{sec:numerical} empirically support the results on sparse and low-rank matrices. Future directions of research and discussion of results are in Section \ref{sec:disc}.

\section{Problem Setup}\label{setup}

We begin by reviewing some basic definitions. Our results will be on structure-inducing norms; examples include the $\ell_1$ norm, the $\ell_{1,2}$ norm, and the nuclear norm. The nuclear norm of a matrix is denoted by $\|\cdot\|_\star$ and is the sum of the singular values of the matrix. $\ell_{1,2}$ norm is the sum of the $\ell_2$ norms of the columns of a matrix. 
minimizing the $\ell_1$ norm encourages sparse solutions, and the $\ell_{1,2}$ norm and nuclear norm encourage column-sparse and low-rank solutions respectively, \cite{RFP,CandesRecht-completion,StojnicBlock,YuanL12}; see section \ref{decomposable} for more detailed discussion of these norms and their subdifferentials.
The Euclidean norm is denoted by $\|\cdot\|_2$, i.e., the $\ell_2$ norm for vectors and the
Frobenius norm $\|\cdot\|_F$ for matrices. Overlines denote normalization, i.e., for a vector $\x$ and a matrix $\X$, $\bx=\frac{\x}{\|\x\|_2}$ and $\bX=\frac{\X}{\|\X\|_F}$. The minimum and maximum singular values of a matrix $\Ab$ are denoted by $\sigma_{\min}(\Ab)$ and $\sigma_{\max}(\Ab)$. The set of $n\times n$ positive semidefinite (PSD) and symmetric matrices are denoted by $\Sbb_+^n$ and $\Sbb^n$ respectively. 
$\text{cone}(S)$ denotes the conic hull of a given set $S$. 
$\Mec(\cdot):\R^{n}\rightarrow \R^m$ is a linear measurement operator if $\Mec(\x)$ is equivalent to the matrix multiplication $\Meb\x$ where $\Meb\in\R^{m\times n}$. If $\x$ is a matrix, $\Mec(\x)$ will be a matrix multiplication with a suitably vectorized $\x$. 
In some of our results, we consider Gaussian measurements, in which case $\Meb$ has independent $\Nc(0,1)$ entries.

For a vector $\x\in\R^n$, $\|\x\|$ denotes a general norm and $\norm{\x}^*=\sup_{\norm{z}\leq1} \iprod{\x}{\z}$ is the corresponding dual norm. A subgradient of the norm $\nor$ at $\x$ is a vector $\g$ for which
$\norm{\z} \geq \norm{\x} + \iprod{\g}{\z-\x}$
holds for any $\z$. The set of all subgradients is called the subdifferential and is denoted by $\pa \norm{\x}$. 
The Lipschitz constant of the norm is defined as
\beq
L=\sup_{\z_1\neq \z_2\in\R^n}\frac{\|\z_1\|-\|\z_2\|}{\|\z_1-\z_2\|_2}.\nn
\eeq

\begin{figure}
\begin{center}
\includegraphics[width=0.25\textwidth]{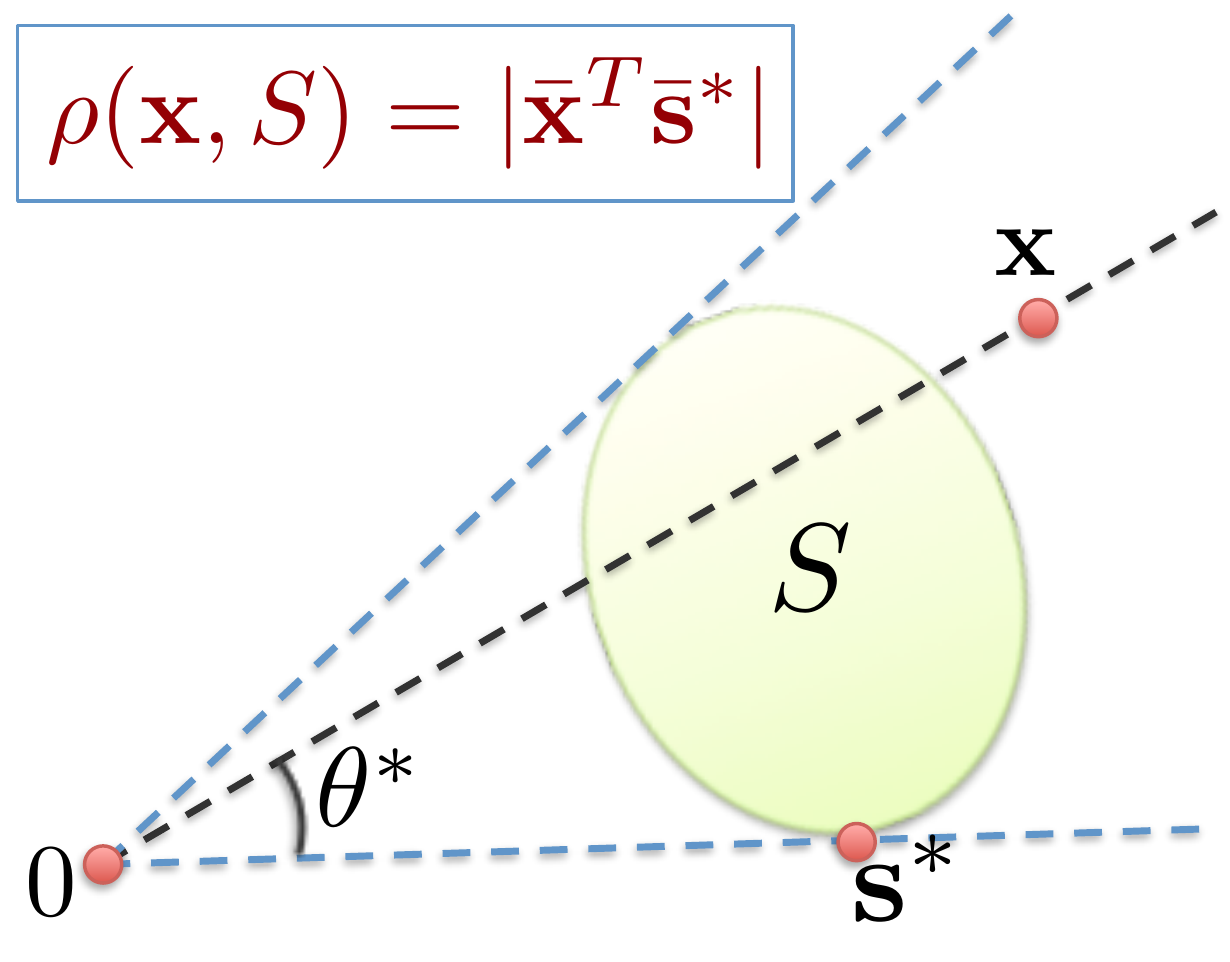}
\end{center}
\caption{Depiction of the $\corr$ between a vector $\x$ and a set $S\,$. $\s^*$ achieves the largest angle with $\x\,$,
hence $\s^*$ has the minimum $\corr$ with $\x$.}
\end{figure}

\begin{definition}[Correlation]\label{def:correlate} Given a nonzero vector $\x$ and a set $S$, $\ang(\x,S)$ is defined as
\beq
\ang(\x,S):=\inf_{0\neq \s\in S} \frac{|\x^T\s|}{\|\x\|_2\|\s\|_2}.\nn
\eeq
\end{definition}

$\ang(\x,S)$ corresponds to the minimum absolute-valued $\corr$ between the vector $\x$ and elements of $S$. Let $\bx=\frac{\x}{\|\x\|_2}$. The $\corr$ between $\x$ and the associated subdifferential has a simple form.
\beq
\ang(\x,\pa \|\x\|)=\inf_{\g\in \pa \|\x\|} \frac{\bx^T\g}{\|\g\|_2}=\frac{\|\bx\|}{\sup_{\g\in \pa \|\x\|}\|\g\|_2} \,.\nn
\eeq
Here, we used the fact that, for norms, subgradients $\g\in \pa\|\x\|$ satisfy $\x^T\g=\|\x\|$, \cite{watson}. The denominator of the right hand side is the local Lipschitz constant of $\|\cdot\|$ at $\x$ and is upper bounded by $L$. Consequently, $\ang(\x,\pa \|\x\|)\geq \frac{\|\bx\|}{L}$. We will denote $\frac{\|\bx\|}{L}$ by $\kappa$. Recently, this quantity has been studied by Mu et al. to analyze the simultaneously structured signals in a similar spirit to us for Gaussian measurements \cite{SquareDeal}\footnote{The work \cite{SquareDeal} is submitted after our initial manuscript; which was projecting the subdifferential onto a carefully chosen subspace to obtain bounds on the sample complexity (see Proposition \ref{bound2}). Inspired from \cite{SquareDeal}, projection onto $\x_0$ and the use of $\kappa$ led to the simplification of the notation and improvement of the results in the current manuscript, in particular, Section \ref{measure}.}. Similar calculations as above gives an alternative interpretation for $\kappa$ which is illustrated in Figure \ref{fig kappa}.

$\kappa$ is a measure of alignment between the vector $\x$ and the subdifferential. For the norms of interest, it is associated with the model complexity. For instance, for a $k$-sparse vector $\x$, $\|\bx\|_1$ lies between $1$ and $\sqrt{k}$ depending on how spiky nonzero entries are. Also $L=\sqrt{n}$. When nonzero entries are $\pm1$, we find $\kappa^2=\frac{k}{n}$. Similarly, given a $\md\times \md$, rank $r$ matrix $\X$, $\|\bX\|_\star$ lies between $1$ and $\sqrt{r}$. If the singular values are spread (i.e. $\pm1$), we find $\kappa^2=\frac{r}{\md}=\frac{r\md}{\md^2}$. In these cases, $\kappa^2$ is proportional to the model complexity normalized by the ambient dimension. 

\begin{figure}[t!]
\begin{center}
\vskip -.3in
\scalebox{1.5}{
\setlength{\unitlength}{0.2cm}
\begin{picture}(32,0)

	\color{black}
	\linethickness{.25mm}
	\put(16,-17){\vector(0,1){14}}	
	\put(9,-10){\vector(1,0){14}}	
	\linethickness{.5mm}
	\put(15.9,-10){\line(1,0){.2}}	
	\put(20.9,-10){\line(1,0){.2}}	
	\put(20.7,-9.3){\tiny{$\x_0$}}	

	\linethickness{.25mm}
	\put(11,-10){\line(1,1){5}}		
	\put(21,-10){\line(-1,1){5}}		
	\put(11,-10){\line(1,-1){5}}		
	\put(21,-10){\line(-1,-1){5}}		

\color{red}\linethickness{.25mm}
	\put(16,-10){\line(1,0){5}}		
\color{darkgreen}\linethickness{.5mm}
	\put(16,-10){\line(1,1){2.5}}	
\color{black}\linethickness{.05mm}
	\put(16,-10){\circle{7.5}}		
\color{black}
\put(18.6,-7){\tiny$\p$}

\end{picture}
}
\vspace{2.1in}
\caption{Consider the scaled norm ball passing through $\x_0\,$, then $\kappa = \frac{\norm{\p}_2}{\norm{\x_0}_2}$,  where $\p$ is any of the closest points on the scaled norm ball to the origin.
}
\label{fig kappa}
\end{center}
\end{figure}
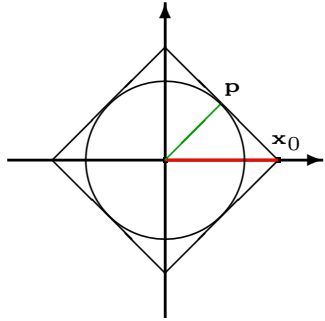

\paragraph{Simultaneously structured models.}
\label{simstr}


We consider a signal $\x_0$ which has several low-dimensional structures $S_1$, $S_2$, \dots, $S_\tau$ (e.g., sparsity, group sparsity, low-rank). Suppose each structure $i$ corresponds to a norm denoted by $\|\cdot\|_{(i)}$ which promotes that structure (e.g., $\ell_1$, $\ell_{1,2}$, nuclear norm). We refer to such an $\x_0$ as a \emph{simultaneously structured model}. 
\subsection{Convex recovery program} \label{recoversec}

We investigate the recovery of the simultaneously structured $\x_0$ from its linear measurements $\Mec(\x_0)$. To recover $\x_0$, we would like to simultaneously minimize the norms $\|\cdot\|_{(i)}$, $i=1,\ldots,\tau$, which leads to a multi-objective (vector-valued) optimization problem. For all feasible points $\x$ satisfying $\Mec(\x)=\Mec(\x_0)$ and side information $\x\in\Cc$, consider the set of achievable norms $\{ \|\x\|_{(i)} \}_{i=1}^\tau$ denoted as points in $\R^\tau$. The minimal points of this set with respect to the positive orthant $\R^\tau_+$ form the \emph{Pareto-optimal} front, as illustrated in Figure \ref{fig:pareto}.
Since the problem is convex, one can alternatively consider the set 
\[
\{\vb\in\R^\tau:\; \exists \x\in\R^n \; \text{s.t.}\; \x\in\Cc,\, \Ac(\x)=\Ac(\x_0),\, v_i\geq \norm{\x}_{(i)},\; \text{for } i=1,\ldots,\tau  \}, 
\]
which is convex and has the same Pareto optimal points as the original set (see, e.g., \cite[Chapter~4]{boyd}).

\begin{definition}[Recoverability]\label{rec_ability}
We call $\x_0$ recoverable if it is a Pareto optimal point; i.e., there does not exist a feasible $\x' \neq \x$ satisfying $\Mec(\x')=\Mec(\x_0)$ and $\x'\in\Cc$,
with $\norm{\x'}_{(i)} \leq \norm{\x_0}_{(i)}$ for $i=1,\ldots,\tau\,$.
\end{definition}

The vector-valued convex recovery program can be turned into a scalar optimization problem as
\beq
\begin{array}{ll}\label{rec_class}
\underset{\x\in \Cc}{\mbox{minimize}} &  f(\x)= h(\|\x\|_{(1)},\dots,\|\x\|_{(\tau)})\\
\subto	& \Mec(\x)=\Mec(\x_0),
\end{array}
\eeq
where $h:\R_+^\tau \to \R_+$ is convex and non-decreasing in each argument 
(i.e., non-decreasing and strictly increasing in at least one coordinate). For convex problems with strong duality,
it is known that we can recover all of the Pareto optimal points by optimizing weighted sums $f(\x)=\sum_{i=1}^\tau \lambda_i \|\x\|_{(i)}\,$, with positive weights $\lambda_i\,$, among all possible functions $f(\x)= h(\|\x\|_{(1)},\dots,\|\x\|_{(\tau)})\,$. For each $\x_0$ on the Pareto, the coefficients of such recovering function are given by the hyperplane supporting the Pareto at $\x_0$ \cite[Chapter~4]{boyd}.
%

\begin{figure}[t!]
\setlength{\unitlength}{0.8cm}
\vskip .8in
\begin{picture}(16,4)
\linethickness{0.075mm}
	\put(7.5,1){\vector(1,0){6}}
	\put(13.8,.9){$\|\cdot\|_{(1)}$}
	\put(8,.5){\vector(0,1){6}}
	\put(6.4,6){$\|\cdot\|_{(2)}$}
	
	\put(10.15,3.5){\circle*{.1}}
	\qbezier(10.2,6.5)(7.3,.4)(13,2.4)	
		\put(13.1,2.3){$\underline{m}$}
	\qbezier(10.5,6.5)(8.3,2.5)(13,3)	
		\put(13.1,2.9){$m$}
	\qbezier(11,6.5)(8.1,2.25)(13,3.4)	
\end{picture}
\caption{
Suppose $\x_0$ corresponds to the point shown with a dot. We need at least $m$ measurements for $\x_0$ to be recoverable since for any $\underline{m}< m$ this point is not on the Pareto optimal front.}\label{fig:pareto}
\end{figure}
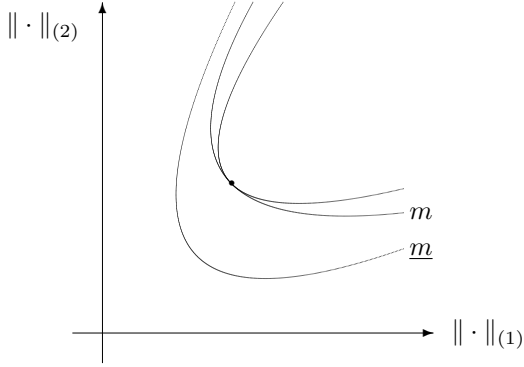

In Figure \ref{fig:pareto}, consider the smallest $m$ that makes $\x_0$ recoverable.
Then one can choose a function $h$ and recover $\x_0$ by \eqref{rec_class} using the $m$ measurements. If the number of measurements is any less, then \emph{no} function can recover $\x_0$. Our goal is to provide lower bounds on $m$.

Note that in \cite{chandra}, Chandrasekaran et al.\ propose a general theory for constructing a suitable penalty, called an \emph{atomic norm}, given a single set of atoms that describes the structure of the target object. In the case of simultaneous structures, this construction requires defining new atoms, and then ensuring the resulting atomic norm can be minimized in a computationally tractable way, which is nontrivial and often intractable. We briefly discuss such constructions as a future research direction in Section \ref{sec:disc}.

\section{Main Results: Theorem Statements} \label{sec:main}
In this section, we state our main theorems that aim to characterize the number of measurements needed to recover a simultaneously structured signal by convex or nonconvex programs. We first present our general results, followed by results for simultaneously sparse and low-rank matrices as a specific but important instance of the general case. The proofs are given in Sections \ref{sec:general} and \ref{sec:SLR}. All of our statements will implicitly assume $\x_0\neq 0$. This will ensure that $\x_0$ is not a trivial minimizer and $0$ is not in the subdifferentials.

\subsection{General simultaneously structured signals}\label{generalres}
This section deals with the recovery of a signal $\x_0$ that is simultaneously structured with $S_1,S_2,\dots,S_\tau$ as described in 
Section \ref{simstr}.
We give a lower bound on the required number of measurements, using the geometric properties of the individual norms.
%

\samet{
\begin{theorem}[Deterministic failure] \label{bound}
 Suppose $\Cc=\R^n$ and,
\beq	\label{lowbb}
\ang(\x_0,\pa f(\x_0)):=\inf_{\g\in\pa f(\x_0)} |{\bar{\g}}^T\bx_0|>\frac{\|\Meb\bx_0\|_2}{\sigma_{\min}(\Meb^T)}.
\eeq
Then, $\x_0$ is not a minimizer of \eqref{rec_class}.

\end{theorem}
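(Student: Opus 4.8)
The plan is to show that, under \eqref{lowbb}, there is a feasible perturbation of $\x_0$ that strictly decreases the objective. Since $f=h(\|\cdot\|_{(1)},\dots,\|\cdot\|_{(\tau)})$ is a finite convex function on $\R^n$ and $\Cc=\R^n$, the feasible set of \eqref{rec_class} is the affine subspace $\x_0+\Null(\Meb)$, and by standard convex analysis $\x_0$ minimizes $f$ over this set if and only if the one-sided directional derivative $f'(\x_0;\w)=\max_{\g\in\pa f(\x_0)}\g^T\w$ is nonnegative for every $\w\in\Null(\Meb)$. So it suffices to produce a single direction $\w\in\Null(\Meb)$ with $\g^T\w<0$ for \emph{all} $\g\in\pa f(\x_0)$: then $f'(\x_0;\w)<0$, hence $f(\x_0+t\w)<f(\x_0)$ for all sufficiently small $t>0$, and $\x_0+t\w$ remains feasible.

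Two preliminary observations feed into this. First, if $\Meb$ is rank deficient then $\sigma_{\min}(\Meb^T)=0$, the right-hand side of \eqref{lowbb} is $+\infty$, and since its left-hand side is at most $1$ the hypothesis is never satisfied; so \eqref{lowbb} already forces $\Meb\Meb^T$ to be invertible, which I will assume. Second, for any $\g\in\pa f(\x_0)$ the subgradient inequality evaluated at the point $0$ gives $\g^T\x_0\geq f(\x_0)-f(0)$; because each $\|\x_0\|_{(i)}\geq 0=\|0\|_{(i)}$, because $h$ is non-decreasing and strictly increasing in at least one coordinate, and because $\x_0\neq 0$ makes $\|\x_0\|_{(i)}>0$ for all $i$, we get $f(\x_0)>f(0)$ and therefore $\g^T\x_0>0$. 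In particular $0\notin\pa f(\x_0)$, and by Definition~\ref{def:correlate},
\[
\g^T\x_0=|\g^T\x_0|\geq \ang(\x_0,\pa f(\x_0))\,\|\g\|_2\,\|\x_0\|_2\qquad\text{for all }\g\in\pa f(\x_0).
\]

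For the descent direction I would take $\w:=\vb-\x_0$, where $\vb$ is the orthogonal projection of $\x_0$ onto $\rng(\Meb^T)=\Null(\Meb)^{\perp}$ (equivalently, $-\w$ is the projection of $\x_0$ onto $\Null(\Meb)$). Then $\Meb\w=0$, so $\w\in\Null(\Meb)$ and $\Meb\vb=\Meb\x_0$; and since $\vb\in\rng(\Meb^T)$ we have $\sigma_{\min}(\Meb^T)\,\|\vb\|_2\leq\|\Meb\vb\|_2=\|\Meb\x_0\|_2=\|\x_0\|_2\,\|\Meb\bx_0\|_2$, hence $\|\vb\|_2\leq \|\x_0\|_2\,\|\Meb\bx_0\|_2/\sigma_{\min}(\Meb^T)$. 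Therefore, for every $\g\in\pa f(\x_0)$,
\[
\g^T\w=\g^T\vb-\g^T\x_0\leq \|\g\|_2\|\vb\|_2-\ang(\x_0,\pa f(\x_0))\,\|\g\|_2\|\x_0\|_2\leq \|\g\|_2\|\x_0\|_2\Big(\frac{\|\Meb\bx_0\|_2}{\sigma_{\min}(\Meb^T)}-\ang(\x_0,\pa f(\x_0))\Big),
\]
and the bracketed factor is negative by \eqref{lowbb} while $\|\g\|_2>0$ since $0\notin\pa f(\x_0)$. Thus $\g^T\w<0$ for all $\g\in\pa f(\x_0)$, so $f'(\x_0;\w)<0$, and $\x_0$ is beaten by a nearby feasible point.

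The argument is essentially forced once one guesses the direction $\w$ (the projection of $-\x_0$ onto the measurement null space). I expect the only delicate point to be the positivity claim $\g^T\x_0>0$ for every subgradient $\g$: this is where the monotonicity of $h$ and the assumption $\x_0\neq 0$ genuinely enter, and it is precisely what lets $\ang(\x_0,\pa f(\x_0))$ (which is defined through an absolute value) bound $\g^T\x_0$ from below rather than only $|\g^T\x_0|$. The rank-deficiency remark and the passage from a negative directional derivative to a strictly better feasible point are routine.
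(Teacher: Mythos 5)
Your proof is correct, but it takes a genuinely different route from the paper's. The paper argues by contraposition through the first-order optimality (KKT) conditions: if $\x_0$ were a minimizer, Lemma~\ref{kkt_max} would produce a subgradient of the form $\g=\Meb^T\z$ (the dual variable $\vb$ vanishes since $\Cc=\R^n$), and then projecting this identity onto $\operatorname{span}(\x_0)$ gives $|\bx_0^T\g|\leq \|\Meb\bx_0\|_2\|\z\|_2$ while $\|\g\|_2=\|\Meb^T\z\|_2\geq \sigma_{\min}(\Meb^T)\|\z\|_2$, so some subgradient must violate \eqref{lowbb}. You instead work on the primal side and exhibit an explicit descent direction, namely $\w=-\Pc_{\Null(\Meb)}(\x_0)$, showing every subgradient has strictly negative inner product with it; the same two quantities appear, but now as a bound on $\|\Pc_{\rng(\Meb^T)}(\x_0)\|_2$ rather than on $\|\Pc_{\x_0}(\Meb^T\z)\|_2$. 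Your argument is more elementary and self-contained (no appeal to Lagrange multipliers) and is constructive in that it produces a strictly better feasible point, which is slightly stronger than non-optimality; the details you flag as delicate ($\g^T\x_0>0$ via the subgradient inequality at $0$ together with the monotonicity of $h$, and the degenerate case $\sigma_{\min}(\Meb^T)=0$) are handled correctly, and the paper secures the same positivity through Lemma~\ref{subdiff_chain} and the identity $\x_0^T\g_i=\|\x_0\|_{(i)}$ from \cite{watson}. What the paper's contrapositive formulation buys is extensibility: it generalizes directly to Proposition~\ref{bound2}, where a nontrivial cone constraint contributes a dual variable $\vb\in\Cc^*$ and the projection is taken onto an arbitrary subspace $\Rc$; your descent-direction construction would need real modification there, since the feasible set is no longer an affine subspace and the candidate direction must additionally respect the cone.
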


Theorem \ref{bound} is deterministic in nature. However, it can be easily specialized to specific random measurement ensembles. The left hand side of \eqref{lowbb} depends only on the vector $\x_0$ and the subdifferential $\pa f(\x_0)$, hence it is independent of the measurement matrix $\Meb$. For simultaneously structured models, we will argue that, the left hand side cannot be made too small, as the subgradients are \emph{aligned} with the signal. On the other hand, the right hand side depends only on $\Meb$ and $\x_0$ and is independent of the subdifferential. In linear inverse problems, $\Meb$ is often assumed to be random. For large class of random matrices, we will argue that, the right hand side is approximately $\sim\sqrt{\frac{m}{n}}$ which will yield a lower bound on the number of required measurements.

Typical measurement ensembles include the following,
\begin{itemize}
\item {\bf{Sampling entries:}} In low-rank matrix and tensor completion problems, we observe the entries of $\x_0$ uniformly at random. In this case, rows of $\Meb$ are chosen from the standard basis in $\R^n$. We should remark that, instead of the standard basis, one can consider other orthonormal bases such as the Fourier basis.
\item {\bf{Matrices with i.i.d. rows:}} $\Meb$ has independent and identically distributed rows with certain moment conditions. This is a widely used setup in compressed sensing as each measurement we make is associated with the corresponding row of $\Meb$ \cite{PlanRIPless}.

\item {\bf{Quadratic measurements:}} Arises in the phase retrieval problem as discussed in Section \ref{sec:apps}.
\end{itemize}

In Section \ref{measure}, we find upper bounds on the right hand side of \eqref{lowbb} for these ensembles. As it will be discussed in Section \ref{measure}, we can do modifications in the rows of $\Meb$ to get better bounds as long as it does not affect its null space. For instance, one can discard the identical rows to improve conditioning. However, as $m$ increases and $\Meb$ has more linearly independent rows, $\sigma_{\min}(\Meb^T)$ will naturally decrease and \eqref{lowbb} will no longer hold after a certain point. In particular, \eqref{lowbb} cannot hold beyond $m\geq n$ as $\sigma_{\min}(\Meb^T)=0$. This is indeed natural as the system becomes overdetermined. 

The following proposition lower bounds the left hand side of \eqref{lowbb} in an interpretable manner. In particular, the $\corr$ $\ang(\x_0,\pa f(\x_0))$ can be lower bounded by the smallest individual $\corr$.

\begin{proposition}\label{low subgrad} Let $L_i$ be the Lipschitz constant of the $i$'th norm and $\kappa_i=\frac{\|\bx_0\|_{(i)}}{L_i}$ for $1\leq i\leq \tau$. Set $\kappa_{\min}=\displaystyle\min \{ \kappa_i : \; i=1,\ldots, \tau \}$. We have the following,
\begin{itemize}
\item All functions $f(\cdot)$ in \eqref{rec_class} satisfy, $\ang(\x_0,\pa f(\x_0))\geq \kappa_{\min}$\footnote{The lower bound $\kappa_{\min}$ is directly comparable to Theorem $5$ of \cite{SquareDeal}. Indeed, our lower bounds on the sample complexity will have the form $\order{\kappa_{\min}^2n}$.}.
\item Suppose $f(\cdot)$ is a weighted linear combination $f(\x)=\sum_{i=1}^\tau \la_i \|\x\|_{(i)}$ for nonnegative $\{\la_i\}_{i=1}^\tau$. Let $\bar{\la}_i=\frac{\la_iL_i}{\sum_{i=1}^\tau \la_iL_i}$ for $1\leq i\leq \tau$. Then, $\ang(\x_0,\pa f(\x_0))\geq  \sum_{i=1}^\tau \bar{\la}_i \kappa_i$.

\end{itemize}
\end{proposition}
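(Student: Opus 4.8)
The plan is to exploit two structural facts: first, that every subgradient $\g$ of any admissible $f$ at $\x_0$ satisfies $\g^T\x_0 = f(\x_0)$ (the equality case of the subgradient inequality for positively homogeneous functions, already invoked in the excerpt with reference \cite{watson}); and second, that the subdifferential of $f(\x)=h(\|\x\|_{(1)},\ldots,\|\x\|_{(\tau)})$ is governed by the chain rule, so that every $\g\in\pa f(\x_0)$ can be written as a nonnegative combination $\g=\sum_{i=1}^\tau c_i \g_i$ with $\g_i\in\pa\|\x_0\|_{(i)}$ and $c_i\geq 0$ (with at least one $c_i>0$, which holds because $h$ is non-decreasing and strictly increasing in at least one coordinate, and $\x_0\neq 0$ so no individual subdifferential contains the origin). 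The proof splits accordingly.

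First I would handle the linear combination case $f(\x)=\sum_i \la_i\|\x\|_{(i)}$, since it is the cleaner computation and the general case reduces to the same idea. Here $\g=\sum_i \la_i\g_i$ with $\g_i\in\pa\|\x_0\|_{(i)}$, so $\g^T\x_0=\sum_i\la_i\|\x_0\|_{(i)}$, while by the triangle inequality $\|\g\|_2\leq\sum_i\la_i\|\g_i\|_2\leq\sum_i\la_i L_i$, using that $\|\g_i\|_2\leq L_i$ for any subgradient of the $i$th norm (the local Lipschitz constant is bounded by the global one). Therefore
\beq
\frac{|\bar\g^T\bx_0|}{1}=\frac{\g^T\x_0}{\|\g\|_2\,\|\x_0\|_2}\geq \frac{\sum_i\la_i\|\x_0\|_{(i)}}{\|\x_0\|_2\sum_i\la_i L_i}=\sum_{i=1}^\tau \frac{\la_i L_i}{\sum_j\la_j L_j}\cdot\frac{\|\bx_0\|_{(i)}}{L_i}=\sum_{i=1}^\tau\bar\la_i\kappa_i.\nn
\eeq
Taking the infimum over $\g\in\pa f(\x_0)$ gives the second bullet. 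Since $\sum_i\bar\la_i=1$ and each $\kappa_i\geq\kappa_{\min}$, the right-hand side is at least $\kappa_{\min}$, which already proves the first bullet for linear combinations.

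For the first bullet in full generality, I would invoke the chain rule for $\pa(h\circ(\nor_{(1)},\ldots,\nor_{(\tau)}))$: any $\g\in\pa f(\x_0)$ has the form $\g=\sum_i c_i\g_i$ with $c_i\geq 0$, $\g_i\in\pa\|\x_0\|_{(i)}$, and $\sum_i c_i>0$. Then exactly the computation above with $\la_i$ replaced by $c_i$ shows $|\bar\g^T\bx_0|\geq\sum_i\bar c_i\kappa_i\geq\kappa_{\min}$, where $\bar c_i=c_i L_i/\sum_j c_j L_j$ is a probability vector (well-defined because $\sum_i c_i L_i>0$; indeed if some $c_i>0$ then $L_i>0$ since the norm is nonconstant). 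The infimum over $\g$ is thus bounded below by $\kappa_{\min}$. The main obstacle is simply the careful justification of the subdifferential chain rule and the claim that the resulting coefficients are nonnegative with at least one positive — this needs the convexity and coordinatewise monotonicity of $h$ and the fact that $0\notin\pa\|\x_0\|_{(i)}$ for $\x_0\neq 0$; everything else is the elementary Cauchy–Schwarz/triangle-inequality estimate above. One should also note that the passage to the infimum is legitimate because the bound holds uniformly over all valid representations of every $\g$, and conversely each $\g$ admits at least one such representation.
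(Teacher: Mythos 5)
Your proposal is correct and follows essentially the same route as the paper: decompose $\g\in\pa f(\x_0)$ as a nonnegative combination $\sum_i w_i\g_i$ via the chain rule (the paper's Lemma \ref{subdiff_chain}), use $\g_i^T\bx_0=\|\bx_0\|_{(i)}$ from \cite{watson} together with $\|\g\|_2\le\sum_i w_iL_i$, and bound the resulting ratio below — the paper via the mediant inequality giving $\kappa_{\min}$ directly, you by rewriting it as the convex combination $\sum_i\bar w_i\kappa_i$, which is the same estimate. The only cosmetic difference is that you treat the weighted-sum case first and then generalize, while the paper does the reverse.
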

\begin{proof} From Lemma \ref{subdiff_chain}, any subgradient of $f(\cdot)$ can be written as, $\g=\sum_{i=1}^\tau w_i\g_i$ for some nonnegative $w_i$'s. On the other hand, from \cite{watson}, $\li\bx_0,\g_i\ri=\|\bx_0\|_{(i)}$. Combining, we find,
\beq
\g^T\bx_0=\sum_{i=1}^\tau w_i\|\bx_0\|_{(i)}.\nn
\eeq
From triangle inequality, $\|\g\|_2\leq \sum_{i=1}^\tau w_iL_i$. To conclude, we use, 
\beq
\frac{\sum_{i=1}^\tau w_i\|\bx_0\|_{(i)}}{\sum_{i=1}^\tau w_iL_i}\geq\min_{1\leq i\leq \tau}\frac{w_i\|\bx_0\|_{(i)}}{w_iL_i}=\kappa_{\min}.\label{subs bar la}
\eeq
For the second part, we use the fact that for the weighted sums of norms, $w_i=\la_i$ and subgradients has the form $\g=\sum_{i=1}^\tau \la_i\g_i$, \cite{boyd}. Then, substitute $\bar{\la}_i$ for $\la_i$ on the left hand side of \eqref{subs bar la}.

\end{proof}

Before stating the next result, let us give a relevant definition regarding the average distance between a set and a random vector.
\begin{definition}[Gaussian distance]\label{cwidth}
Let $\Mc$ be a closed convex set in $\R^n$ and let $\h\in\R^n$ be a vector with independent standard normal entries. Then, the Gaussian distance of $\Mc$ is defined as
\beq
\Db(\Mc)={\E[\inf_{\vb\in \Mc}\|\h-\vb\|_2]}\nn
\eeq
\end{definition}

When $\Mc$ is a cone, we have $0\leq \Db(\Mc)\leq \sqrt{n}$. Similar definitions have been used extensively in the literature, such as Gaussian width \cite{chandra}, statistical dimension \cite{ALMT} and mean width \cite{VerEst}. For notational simplicity, let the normalized distance be $\bDb(\Mc)=\frac{\Db(\Mc)}{\sqrt{n}}$.

We will now state our result for Gaussian measurements; which can additionally include cone constraints for the lower bound. One can obtain results for the other ensembles by referring to Section \ref{measure}.

\begin{theorem}[Gaussian lower bound]\label{main31}
Suppose $\Meb$ has independent $\Nc(0,1)$ entries. Whenever $m\leq m_{low}$, $\x_0$ will not be a minimizer of any of the recovery programs in \eqref{rec_class}
with probability at least $1-10\exp(-\frac{1}{16}\min\{m_{low},(1-\bDb(\Cc))^2n\})$, where
\[
 m_{low}  \;\triangleq\; \frac{(1-\bDb(\Cc))n\kappa_{\min}^2}{100}.
\]

\end{theorem}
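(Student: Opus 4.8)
The plan is to deduce Theorem \ref{main31} from the deterministic failure criterion in Theorem \ref{bound}, combined with the lower bound on the subgradient correlation in Proposition \ref{low subgrad}. By Proposition \ref{low subgrad}, every admissible objective $f$ satisfies $\ang(\x_0,\pa f(\x_0)) \geq \kappa_{\min}$, so it suffices to show that with the claimed probability the right-hand side of \eqref{lowbb}, namely $\|\Meb\bx_0\|_2/\sigma_{\min}(\Meb^T)$, is strictly smaller than $\kappa_{\min}$ — in fact we will show it is at most $\kappa_{\min}$ whenever $m \leq m_{low}$. Since the excerpt allows cone constraints via $\Cc$, I would also need the version of Theorem \ref{bound} that incorporates $\Cc$ (presumably the denominator $\sigma_{\min}(\Meb^T)$ gets replaced by something like $\inf_{\vb\in\Cc,\|\vb\|_2=1}\dist(\Meb\vb,\cdot)$ or the relevant quantity controlled by the Gaussian distance $\Db(\Cc)$); I will assume that refinement is available, as it is the natural cone-constrained analogue.

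The two stochastic quantities to control are the numerator $\|\Meb\bx_0\|_2$ and the (cone-restricted) smallest singular value of $\Meb^T$. For the numerator: $\Meb\bx_0$ is a vector of $m$ i.i.d.\ $\Nc(0,1)$ entries (because $\bx_0$ is a fixed unit vector), so $\|\Meb\bx_0\|_2^2$ is a $\chi^2_m$ random variable, and a standard concentration bound (e.g. Laurent–Massart) gives $\|\Meb\bx_0\|_2 \leq 2\sqrt{m}$ with probability $1 - \exp(-cm)$ for a suitable constant. For the denominator, I want a \emph{lower} bound on $\sigma_{\min}$ restricted to the relevant set; the key tool is Gordon's escape-through-a-mesh / comparison inequality, which says that for a Gaussian matrix $\Meb\in\R^{m\times n}$ and a fixed subset of the sphere, $\inf_{\vb}\|\Meb\vb\|_2$ concentrates around $\sqrt{m} - \Db(\text{relevant set})$ — here the relevant set is governed by $\Db(\Cc)$, so we get a lower bound of order $\sqrt{m} - \Db(\Cc) \cdot(\text{something})$, or in the unconstrained case simply $\sqrt{n} - \sqrt{m}$. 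Actually, since we want $\sigma_{\min}(\Meb^T)$ over $\R^m$ (the transpose acts on $\R^m$), the cleanest route is: $\sigma_{\min}(\Meb^T) = \sigma_{\min}(\Meb)$ viewed appropriately, and by Gordon's theorem $\sigma_{\min}(\Meb) \geq \sqrt{n} - \sqrt{m} - t$ with probability $1 - \exp(-t^2/2)$; with the cone constraint the $\sqrt{n}$ is effectively replaced by $\Db(\Cc) = \sqrt{n}\,\bDb(\Cc)$ shrinking to $(1-\bDb(\Cc))\sqrt{n}$ after accounting for the restriction.

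Putting the pieces together on the event where both bounds hold: the ratio is at most
\[
\frac{2\sqrt{m}}{(1-\bDb(\Cc))\sqrt{n} - \sqrt{m} - t}.
\]
Choosing $m \leq m_{low} = (1-\bDb(\Cc))^2 n \kappa_{\min}^2 / 100$ makes $\sqrt{m} \leq \frac{1}{10}(1-\bDb(\Cc))\sqrt{n}\,\kappa_{\min} \leq \frac{1}{10}(1-\bDb(\Cc))\sqrt{n}$ (using $\kappa_{\min}\leq 1$), so the denominator is at least a constant fraction of $(1-\bDb(\Cc))\sqrt{n}$ and the whole ratio is bounded by $C\sqrt{m}/((1-\bDb(\Cc))\sqrt{n}) \leq C\kappa_{\min}/10 \leq \kappa_{\min}$ for the right absolute constants; tracking the constants carefully is exactly what produces the $1/100$ and the $\frac{1}{16}\min\{m_{low},(1-\bDb(\Cc))^2 n\}$ in the exponent, and the factor $10$ in front of the exponential absorbs the union bound over the two (or three) concentration events. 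The main obstacle, and where care is genuinely needed, is making the cone-constrained version of Gordon's inequality line up cleanly — i.e., showing that the side information $\x\in\Cc$ only changes the effective ambient dimension from $n$ to $(1-\bDb(\Cc))^2 n$ in the stated way, and verifying that the probability bound degrades only through the $(1-\bDb(\Cc))^2 n$ term rather than through $n$ itself; the scalar $\chi^2$ concentration and the bookkeeping of constants are routine by comparison.
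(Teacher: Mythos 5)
For the unconstrained case $\Cc=\R^n$ your outline is exactly the paper's argument: Theorem \ref{bound} plus Proposition \ref{low subgrad} reduce the claim to showing $\|\Meb\bx_0\|_2/\sigma_{\min}(\Meb^T)<\kappa_{\min}$, the numerator is handled by $\chi^2_m$ concentration (indeed $\|\Meb\bx_0\|_2=\sigma_{\max}(\Pc_{\Rc}(\Meb^T))$ for $\Rc=\spn(\{\x_0\})$), and the denominator by $\sigma_{\min}(\Meb^T)\geq\sqrt{n}-1.5\sqrt{m}$. The genuine gap is the cone-constrained part, which you explicitly assume rather than derive, and the form you guess for it is not the one that works. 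The paper does not replace $\sqrt{n}$ by $(1-\bDb(\Cc))\sqrt{n}$ inside the smallest-singular-value bound. Instead, Proposition \ref{bound2} starts from the KKT condition $\g=\Meb^T\z+\vb$ with $\vb\in\Cc^*$ and $\li\x_0,\vb\ri=0$; projecting onto $\Rc=\spn(\{\x_0\})$ annihilates $\vb$ and yields the numerator bound, while Moreau's decomposition together with nonexpansiveness of projections gives $\|\g\|_2\geq\|\Pc_\Cc(\Meb^T\z)\|_2\geq\rsing(\Meb^T)\,\|\Meb^T\z\|_2$, where $\rsing(\Meb^T)=\inf_{\|\z\|_2=1}\|\Pc_\Cc(\Meb^T\z)\|_2/\|\Meb^T\z\|_2$. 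The cone thus enters as a separate multiplicative factor in the denominator, and Theorem \ref{proj_norm} (an escape-through-a-mesh argument applied to the set of unit vectors satisfying $\|\x\|_2\geq\gamma\|\Pc_\Cc(\x)\|_2$) shows $\rsing^2(\Meb^T)\geq(1-\bDb(\Cc))/8$ with probability $1-6\exp(-(1-\bDb(\Cc))^2n/16)$. Neither of these two steps is routine bookkeeping, and both are needed to obtain the stated probability and constants.

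Your guessed mechanism also costs you quantitatively: if the denominator were $(1-\bDb(\Cc))\sqrt{n}-O(\sqrt{m})$, the failure condition would read $m\lesssim(1-\bDb(\Cc))^2 n\kappa_{\min}^2$ --- which is indeed the $m_{low}$ you wrote down --- and this is strictly weaker than the theorem's $m_{low}=(1-\bDb(\Cc))n\kappa_{\min}^2/100$ whenever $\bDb(\Cc)>0$, substantially so when $\Cc$ is narrow and $\bDb(\Cc)$ is close to $1$. Because $\rsing$ enters the correct denominator as $\sqrt{(1-\bDb(\Cc))/8}$ rather than as $1-\bDb(\Cc)$, the right scaling of $m_{low}$ carries a single power of $1-\bDb(\Cc)$. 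To complete the proof you must actually establish the cone-constrained deterministic criterion (Proposition \ref{bound2}) and the probabilistic lower bound on $\rsing$ (Theorem \ref{proj_norm}); the scalar concentration and the union bound are, as you say, the easy part.
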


\noindent{\bf{Remark:}} When $\Cc=\R^n$, $\bDb(\Cc)=0$ hence, the lower bound simplifies to $m_{low}=\frac{n\kappa_{\min}^2}{100}$.

Here $\bDb(\Cc)$ depends only on $\Cc$ and can be viewed as a constant. For instance, for the positive semidefinite cone, we show $\bDb(\Sbb^n_+)< \frac{\sqrt{3}}{2}$. Observe that for a smaller cone $\Cc$, it is reasonable to expect a smaller lower bound to the required number of measurements. Indeed, as $\Cc$ gets smaller, $\Db(\Cc)$ gets larger. 


As discussed before, there are various options for the scalarizing function in \eqref{rec_class}, with one choice being the weighted sum of norms. In fact, for a recoverable point $\x_0$ there always exists a weighted sum of norms which recovers it.
This function is also often the choice in applications, where the space of positive weights is searched for a good combination. Thus, we can state the following theorem as a general result.


%

\begin{corollary}[Weighted lower bound]\label{maincor3}
Suppose $\Meb$ has i.i.d $\Nc(0,1)$ entries and $f(\x)=\sum_{i=1}^\tau \la_i \|\x\|_{(i)}$ for nonnegative weights $\{\la_i\}_{i=1}^\tau$. Whenever $m\leq m_{low}'$, $\x_0$ will not be a minimizer of the recovery program \eqref{rec_class}
with probability at least $1-10\exp(-\frac{1}{16}\min\{m_{low}',(1-\bDb(\Cc))^2n\})$, where
\[
 m_{low}'  \;\triangleq\; \frac{n (1-\bDb(\Cc)) (\sum_{i=1}^\tau \bar{\la}_i \kappa_i)^2}{ 100},
\]
and $\bar{\la}_i=\frac{\la_iL_i}{\sum_{i=1}^\tau \la_iL_i}$.
\end{corollary}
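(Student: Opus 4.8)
The plan is to combine Theorem \ref{main31} with the second bullet of Proposition \ref{low subgrad}, since Corollary \ref{maincor3} is just the specialization of Theorem \ref{main31} to the weighted-sum scalarization $f(\x)=\sum_{i=1}^\tau \la_i\|\x\|_{(i)}$. First I would recall that Theorem \ref{main31} asserts failure of \emph{any} scalarization in \eqref{rec_class} whenever $m\le m_{low}=\frac{(1-\bDb(\Cc))n\kappa_{\min}^2}{100}$, and the proof of Theorem \ref{main31} really only uses the lower bound $\ang(\x_0,\pa f(\x_0))\ge\kappa_{\min}$ from the first bullet of Proposition \ref{low subgrad} together with a high-probability upper bound on the right-hand side of the deterministic condition \eqref{lowbb}, namely $\frac{\|\Meb\bx_0\|_2}{\sigma_{\min}(\Meb^T)}$, under the Gaussian ensemble (with the cone constraint entering through $\bDb(\Cc)$). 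So the only modification needed is to replace the generic lower bound $\kappa_{\min}$ on $\ang(\x_0,\pa f(\x_0))$ by the sharper, weighting-dependent lower bound $\sum_{i=1}^\tau\bar\la_i\kappa_i$ valid for linear combinations.

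Concretely, the steps are: (i) invoke the second bullet of Proposition \ref{low subgrad} to get $\ang(\x_0,\pa f(\x_0))\ge\sum_{i=1}^\tau\bar\la_i\kappa_i$ with $\bar\la_i=\frac{\la_iL_i}{\sum_j\la_jL_j}$; (ii) observe that, since $\sum_i\bar\la_i=1$ and each $\bar\la_i\ge0$, this quantity lies in $[\kappa_{\min},\kappa_{\max}]$, so it is at least as large as $\kappa_{\min}$ and the argument is never weaker than Theorem \ref{main31}; (iii) reproduce verbatim the probabilistic estimate from the proof of Theorem \ref{main31} that controls $\frac{\|\Meb\bx_0\|_2}{\sigma_{\min}(\Meb^T)}$ for a Gaussian $\Meb$ restricted to the cone $\Cc$ — namely that with probability at least $1-10\exp(-\frac1{16}\min\{m',(1-\bDb(\Cc))^2n\})$ one has $\frac{\|\Meb\bx_0\|_2}{\sigma_{\min}(\Meb^T)}<\sum_i\bar\la_i\kappa_i$ as soon as $m\le m_{low}'=\frac{n(1-\bDb(\Cc))(\sum_i\bar\la_i\kappa_i)^2}{100}$; and (iv) conclude via Theorem \ref{bound} (with $\Cc$ incorporated as in the proof of Theorem \ref{main31}) that $\x_0$ is not a minimizer.

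There is essentially no new obstacle here: the corollary is a bookkeeping refinement, and the substantive work — bounding $\|\Meb\bx_0\|_2$ above and $\sigma_{\min}(\Meb^T)$ below with the stated failure probability, and handling the cone — is identical to what is already needed for Theorem \ref{main31}. If anything, the only point requiring care is making sure the replacement of $\kappa_{\min}^2$ by $(\sum_i\bar\la_i\kappa_i)^2$ propagates consistently through the exponent in the probability bound (it does, since the same quantity $\ang(\x_0,\pa f(\x_0))$ appears both in the threshold on $m$ and in the deviation estimate), and observing that the result is a genuine improvement precisely when the weights are chosen so that $\sum_i\bar\la_i\kappa_i>\kappa_{\min}$, i.e., away from the pathological weighting that concentrates all mass on the worst structure.
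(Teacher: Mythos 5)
Your proposal is correct and follows essentially the same route as the paper: the authors prove Corollary \ref{maincor3} by applying Proposition \ref{g main31} (the cone-aware generalization underlying Theorem \ref{main31}) with $\Rc=\operatorname{span}(\{\x_0\})$ and substituting the second bullet of Proposition \ref{low subgrad}, i.e.\ $\ang(\x_0,\pa f(\x_0))\geq\sum_{i=1}^\tau\bar{\la}_i\kappa_i$, in place of the generic bound $\kappa_{\min}$, exactly as you describe. The probabilistic estimates on $\sigma_{\min}(\Meb^T)$, $\sigma_{\max}(\Pc_\Rc(\Meb^T))$ and the cone term carry over verbatim, so there is nothing further to check.
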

}
Observe that Theorem \ref{main31} is stronger than stating ``a particular function $h(\|\x\|_{(1)},\dots,\|\x\|_{(\tau)})$ will not work''. Instead, our result states that with high probability none of the programs in the class \eqref{rec_class} can return $\x_0$ as the optimal unless the number of measurements are sufficiently large.

To understand the result better, note that the required number of measurements is proportional to  $\kappa_{min}^2n$ which is often proportional to the sample complexity of the best individual norm. As we have argued in Section \ref{simstr}, $\kappa_i^2n$ corresponds to how structured the signal is. For sparse signals it is equal to the sparsity, and for a rank $r$ matrix, it is equal to the degrees of freedom of the set of rank $r$ matrices. Consequently, Theorem \ref{main31} suggests that even if the signal satisfies multiple structures, the required number of measurements is effectively determined by only one dominant structure.

Intuitively, the degrees of freedom of a simultaneously structured signal can be much lower, which is provable for the S\&L matrices. Hence, there is a considerable gap between the expected measurements based on model complexity and the number of measurements needed for recovery via \eqref{rec_class} ($\kappa_{\min}^2n$).


\subsection{Simultaneously Sparse and Low-rank Matrices} \label{sub:SLR}

We now focus on a special case, namely simultaneously sparse and low-rank (S$\&$L) matrices. We consider matrices with nonzero entries contained in a small submatrix where the submatrix itself is low rank.
Here, norms of interest are $\|\cdot\|_{1,2}$, $\|\cdot\|_1$ and $\|\cdot\|_\star$ and the cone of interest is the PSD cone.
We also consider nonconvex approaches and contrast the results with convex approaches. For the nonconvex problem, we replace the norms $\|\cdot\|_1,\|\cdot\|_{1,2},\|\cdot\|_\star$ with the functions $\|\cdot\|_0,\|\cdot\|_{0,2},\text{rank}(\cdot)$ which give the number of nonzero entries, the number of nonzero columns and rank of a matrix respectively and use the same cone constraint as the convex method.
We show that convex methods perform poorly as predicted by the general result in Theorem \ref{main31}, while nonconvex methods require optimal number of measurements (up to a logarithmic factor). Proofs are given in Section \ref{sec:SLR}.

\begin{table}
\begin{center}
    \begin{tabular}{ l| l | l | l |  l}
    Model 		&$f(\cdot)$	& $L$ 		& $\|\bx_0\|\leq $ & $n\kappa^2\leq $			   \\ \hline
    $k$ sparse vector 	& $\|\cdot\|_{1}$	& $\sqrt{n}$				& $\sqrt{k}$			&  $k$	 \\ \hline
        $k$ column-sparse matrix &$\|\cdot\|_{1,2}$&$\sqrt{\md}$ 		& $\sqrt{k}$	& $k\md$	 \\ \hline
 Rank $r$ matrix &	 $\|\cdot\|_{\star}$& $\sqrt{\md}$ 	& $\sqrt{r}$		& $r\md$ 	 \\\hline
 
 S\&L $(k,k,r)$ matrix&	 $h(\|\cdot\|_{\star},\|\cdot\|_1)$& $-$ 	& $-$		& $\min\{k^2,r\md\}$   	 \\\hline
    \end{tabular}
\end{center}
    \caption{Summary of the parameters that are discussed in this section. The last three lines is for a $\md\times \md$ S\&L ($k,k,r$) matrix where $n=\md^2$. In the fourth column, the corresponding entry for S\&L is $\kappa_{\min}=\min\{\kappa_{\ell_1},\kappa_\star\}$.} 
    \label{tab:related}
\end{table}

\begin{definition}
\label{models}
We say $\X_0\in\R^{\md_1\times \md_2}$ is an S$\&$L matrix with $(k_1,k_2,r)$ if the smallest submatrix that contains nonzero entries of $\X_0$ has size $k_1\times k_2$ and $\rank{(\X_0)}=r$. When $\X_0$ is symmetric, let $\md=\md_1=\md_2$ and $k=k_1=k_2$. We consider the following cases.
\begin{itemize}
\item[(a)] {\emph{General:}} $\X_0\in\R^{\md_1\times \md_2}$ is S$\&$L with $(k_1,k_2,r)$.
\item[(b)] {\emph{PSD model:}} $\X_0\in\R^{n\times n}$ is PSD and S$\&$L with $(k,k,r)$.
\end{itemize}
\end{definition}
We are interested in S\&L matrices with $k_1\ll \md_1,k_2\ll \md_2$ so that the matrix is sparse, and $r\ll \min\{k_1,k_2\}$ so that the submatrix containing the nonzero entries is low rank.
Recall from Section \ref{recoversec} that our goal is to recover $\X_0$ from linear observations $\Mec(\X_0)$ via convex or nonconvex optimization programs. The measurements can be equivalently written as $\Meb \vc(\X_0)$, where $\Meb\in\R^{m\times \md_1\md_2}$ and $\vc(\X_0)\in\R^{\md_1 \md_2}$ denotes the vector obtained by stacking the columns of $\X_0$.

Based on the results in Section \ref{generalres}, we obtain lower bounds on the number of measurements for convex recovery. We additionally show that significantly fewer measurements are sufficient for non-convex programs to uniquely recover $\X_0$; thus proving a performance gap between convex and nonconvex approaches. The following theorem summarizes the results.
\begin{theorem}[\textbf{Performance of S\&L matrix recovery}] \label{main_thm_sec3}
Suppose $\Mec(\cdot)$ is an i.i.d Gaussian map and consider recovering $\X_0\in\R^{\md_1\times \md_2}$ via
\beq\label{mainpro}
\underset{\X\in \Cc}{\mbox{minimize}}~f(\X)~~~\mbox{subject to}~~~\Mec(\X)=\Mec(\X_0).
\eeq
For the cases given in Definition \ref{models}, the following convex and nonconvex recovery results hold for some positive constants $c_1,c_2$.
\begin{itemize}
\item[(a)] General model:
\begin{itemize}
\item[(a1)] 
Let $f(\X)=\|\X\|_{1,2}+\la_1\|\X^T\|_{1,2}+\la_2\|\X\|_\star$ where $\la_1,\la_2\geq 0$ and $\Cc=\R^{\md_1\times \md_2}$. Then, \eqref{mainpro} will fail to recover $\X_0$ with probability $1-\exp(-c_1\mlow)$ whenever $m\leq c_2\mlow$ where $\mlow=\min\{\md_1k_2,\md_2k_1,(\md_1+\md_2)r\}$.
\item[(a2)]
Let $f(\X)=\frac{1}{k_2}\|\X\|_{0,2}+\frac{1}{k_1}\|\X^T\|_{0,2}+\frac{1}{r}\rank(\X)$ and $\Cc=\R^{\md_1\times \md_2}$. Then, \eqref{mainpro} will uniquely recover $\X_0$ with probability $1-\exp(-c_1m)$ whenever $m\geq c_2 \max\{(k_1+k_2)r,k_1\log\frac{\md_1}{k_1},k_2\log\frac{\md_2}{k_2}\}$.
\end{itemize}

\item[(b)] PSD with $\ell_{1,2}$:
\begin{itemize}
\item[(b1)]
Let $f(\X)=\|\X\|_{1,2}+\la\|\X\|_\star$ where $\la\geq 0$ and $\Cc=\Sbb_+^{\md}$. Then, \eqref{mainpro} will fail to recover $\X_0$ with probability $1-\exp(-c_1{r\md})$ whenever $m\leq c_2r\md$.
\item[(b2)]
Let $f(\X)=\frac{2}{k}\|\X\|_{0,2}+\frac{1}{r}\rank(\X)$ and $\Cc=\Sbb^{\md}$. Then, \eqref{mainpro} will uniquely recover $\X_0$ with probability $1-\exp(-c_1m)$ whenever $m\geq c_2 \max\{rk,k\log\frac{\md}{k}\}$.
\end{itemize}

\item[(c)] PSD with $\ell_1$:
\begin{itemize}
\item[(c1)]
Let $f(\X)=\|\X\|_{1}+\la\|\X\|_\star$ and $\Cc=\Sbb_+^{\md}$. Then, \eqref{mainpro} will fail to recover $\X_0$ with probability $1-\exp(-c_1\mlow)$ for all possible $\la\geq 0$ whenever $m\leq c_2 \mlow$ where $\mlow=\min\{\|\bX_0\|_1^2,\|\bX_0\|_\star^2\md\}$.
\item[(c2)]
Suppose $\text{rank}(\X_0)=1$. Let $f(\X)=\frac{1}{k^2}\|\X\|_{0}+{\rank(\X)}$ and $\Cc=\Sbb^{\md}$. Then, \eqref{mainpro} will uniquely recover $\X_0$ with probability $1-\exp(-c_1m)$ whenever $m\geq c_2 k\log\frac{\md}{k}$.
\end{itemize}
\end{itemize}
\end{theorem}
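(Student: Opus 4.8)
The plan is to split the work into the convex lower bounds (parts (a1), (b1), (c1)) and the nonconvex upper bounds (parts (a2), (b2), (c2)), which require entirely different machinery. For the convex lower bounds, the strategy is to invoke Theorem \ref{main31} (the Gaussian lower bound) together with Proposition \ref{low subgrad}, so the only real task is to compute, for each model, the quantity $\kappa_{\min}^2 n$ (and the Gaussian distance $\bDb(\Cc)$ of the relevant cone). For (a1), the relevant norms are $\|\cdot\|_{1,2}$ (applied to $\X$ and to $\X^T$) and $\|\cdot\|_\star$, with Lipschitz constants $\sqrt{\md_1}$, $\sqrt{\md_2}$, $\sqrt{\min\{\md_1,\md_2\}}$ respectively; using $\|\bX_0\|_{1,2}\le\sqrt{k_1}$ for a matrix supported on $k_1$ columns (here I must be careful which dimension indexes ``columns'' in each term), $\|\X_0^T\|_{1,2}\le\sqrt{k_2}$, and $\|\bX_0\|_\star\le\sqrt{r}$, one gets $\kappa_i^2 n$ equal to $\md_1 k_2$, $\md_2 k_1$, and $(\md_1+\md_2)r$ respectively — wait, I should recompute: with $n=\md_1\md_2$ the three values are $k_2\md_1$-ish; in any case the minimum is exactly $\mlow=\min\{\md_1 k_2,\md_2 k_1,(\md_1+\md_2)r\}$, which is what the theorem claims. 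Since $\Cc=\R^{\md_1\md_2}$, $\bDb(\Cc)=0$ and Theorem \ref{main31} gives failure for $m\le \mlow/100$, yielding the constants $c_1,c_2$. For (b1) the cone is $\Sbb_+^\md$; I would cite the bound $\bDb(\Sbb_+^\md)<\tfrac{\sqrt3}{2}$ mentioned after Theorem \ref{main31} so that $1-\bDb(\Cc)$ is a positive constant, and then $\kappa_{\min}^2 n=r\md$ (the nuclear-norm term dominates since $r\le k$), giving the $r\md$ lower bound. For (c1), replacing $\|\cdot\|_{1,2}$ by $\|\cdot\|_1$ changes the Lipschitz constant to $\sqrt{n}=\md$, and $\kappa_{\ell_1}^2 n=\|\bX_0\|_1^2$, $\kappa_\star^2 n=\|\bX_0\|_\star^2\md$ directly by definition of $\kappa$, so $\mlow=\min\{\|\bX_0\|_1^2,\|\bX_0\|_\star^2\md\}$ as stated; the cone here is $\Sbb_+^\md$ again, handled identically.

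For the nonconvex upper bounds, the approach is the standard union-bound / covering argument for exact recovery by a combinatorial program. Fix the signal $\X_0$. The program in (a2) minimizes a fixed positive combination of $\|\cdot\|_{0,2}$, $\|\cdot\|_{0,2}$ of the transpose, and $\rank(\cdot)$; since $\X_0$ is feasible and has objective value $3$ (each normalized term equals $1$), any minimizer $\X$ satisfies $\|\X\|_{0,2}\le k_2$, $\|\X^T\|_{0,2}\le k_1$, $\rank(\X)\le r$, i.e. $\X$ lies in the set $\mathcal{M}$ of all $\md_1\times\md_2$ matrices that are simultaneously supported on some $k_1\times k_2$ submatrix and have rank $\le r$. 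So exact recovery is equivalent to the statement that $\Mec$ is injective on $\mathcal{M}-\mathcal{M}$, or more precisely that $\Mec(\X)=\Mec(\X_0)$ with $\X\in\mathcal M$ forces $\X=\X_0$. The plan is: (i) enumerate the $\binom{\md_1}{k_1}\binom{\md_2}{k_2}$ choices of support block; (ii) on each fixed block, $\mathcal M$ restricted there is the set of rank-$\le r$ matrices in a $k_1\times k_2$ space, a low-dimensional algebraic variety of ``dimension'' $\order{(k_1+k_2)r}$, whose difference set is covered by rank-$\le 2r$ matrices; (iii) apply a Gaussian restricted-isometry / escape-through-the-mesh bound on each block to show that with failure probability $\exp(-c m)$ a random Gaussian map with $m\gtrsim (k_1+k_2)r$ rows is injective on rank-$\le 2r$ matrices in that block; (iv) union-bound over the $\binom{\md_1}{k_1}\binom{\md_2}{k_2}\le \exp(\order{k_1\log(\md_1/k_1)+k_2\log(\md_2/k_2)})$ blocks, which is absorbed by requiring $m\ge c_2\max\{(k_1+k_2)r,\,k_1\log\frac{\md_1}{k_1},\,k_2\log\frac{\md_2}{k_2}\}$. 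Parts (b2) and (c2) are the same argument with $\Cc=\Sbb^\md$ (symmetric block, a single support choice of $\binom{\md}{k}$), with (c2) specialized to $r=1$ so the rank-$\le 2$ difference set is even simpler and the $(k_1+k_2)r$ term drops, leaving just $k\log\frac{\md}{k}$.

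The main obstacle is step (ii)–(iii) of the nonconvex part: one must show that a Gaussian operator with $\order{(k_1+k_2)r}$ measurements is, with exponentially high probability, injective on the \emph{non-convex} set of rank-$\le 2r$ matrices living in a $k_1\times k_2$ coordinate subspace — i.e. a set-restricted RIP/null-space property for a variety rather than a subspace. This requires a covering-number estimate for the rank-constrained set (the $\epsilon$-net has size $\exp(\order{(k_1+k_2)r\log(1/\epsilon)})$) combined with a concentration bound for $\|\Mec(\Z)\|_2$ over the net and a continuity argument to pass from the net to all of the set; the logarithmic slack in the net size is exactly what produces the $\log$ factor (and in the union-bound the $\log\frac{\md_i}{k_i}$ terms). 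A secondary subtlety is bookkeeping the weights in $f$: one has to confirm that \emph{every} minimizer, not just $\X_0$, satisfies the three constraints, which follows because $f(\X_0)=3$ and each term is nonnegative and integer-valued after scaling; and for the convex lower bounds, the subtlety is matching which of the two $\ell_{1,2}$ terms sees $k_1$ versus $k_2$ and confirming that an arbitrary weighted combination $h(\cdot)$ cannot beat the single-norm bound — but that is precisely the content of Proposition \ref{low subgrad}, so no new work is needed there.
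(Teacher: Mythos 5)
Your treatment of the nonconvex parts (a2), (b2), (c2) and of the convex part (c1) is essentially the paper's own argument: reduce exact recovery to the Gaussian map having trivial intersection with a union of rank-constrained sets on small supports, and handle that union by covering numbers plus sub-exponential concentration (Lemmas \ref{abcgcc}, \ref{ranemb2}, \ref{covnum}); for (c1) the quantities $\|\bX_0\|_1^2$ and $\|\bX_0\|_\star^2\md$ are exactly $\kappa_{\ell_1}^2 n$ and $\kappa_\star^2 n$, so Theorem \ref{main31} applies directly. (Minor quibble: a minimizer of the program in (a2) only satisfies $\|\X\|_{0,2}\le 3k_2$, not $\le k_2$, but this only changes constants.)

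There is, however, a genuine gap in your route to (a1) and (b1). You propose to get these from Theorem \ref{main31} via Proposition \ref{low subgrad}, computing $\kappa_i^2 n$ from the \emph{upper} bounds $\|\bX_0\|_{1,2}\le\sqrt{k_2}$, $\|\bX_0\|_\star\le\sqrt{r}$. But the failure criterion needs a \emph{lower} bound on $\kappa_{\min}^2 n$ of order $\mlow$, and $\kappa_{\ell_{1,2}}^2 n=\|\bX_0\|_{1,2}^2\,\md_1$ can be as small as $\md_1$ (when one column carries almost all the energy), while $\kappa_\star^2 n=\|\bX_0\|_\star^2\md$ can be as small as $\md$ (when one singular value dominates). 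So Theorem \ref{main31} alone only yields a signal-dependent bound that can be far below the claimed $\min\{\md_1k_2,\md_2k_1,(\md_1+\md_2)r\}$ and $r\md$, which the theorem asserts for \emph{every} S\&L matrix. The paper closes this gap with the enhanced lower bounds of Section \ref{decomposable}: it applies Proposition \ref{g main31} with $\Rc=T_\st\cap T_c\cap T_r$ (the intersection of the decomposable supports), uses Lemma \ref{signvec} to check that the sign vectors $\Eb_\st,\Eb_c,\Eb_r$ all lie in $\Rc$ with pairwise nonnegative inner products, and invokes Proposition \ref{decomp prop} together with Lemma \ref{e is best}, so that the correlation is controlled by $\|\e_i\|_2^2/L_i^2$ (which equals $k_2/\md_1$, $k_1/\md_2$, $r/\min\{\md_1,\md_2\}$ independently of the entry or singular-value profile of $\X_0$). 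For (b1) there is the additional requirement in Proposition \ref{bound2} that $\Rc$ be orthogonal to the cone $\{\y:\x_0^T\y=0,\ \y\in\Cc^*\}$ for $\Cc=\Sbb_+^\md$; this is not automatic and is exactly what Lemma \ref{useful psd} establishes (after further intersecting $\Rc$ with $\Sbb^\md$, at the cost of the factor $\upsilon\ge1/\sqrt{2}$). Your plan omits both of these ingredients, and without them the stated bounds in (a1) and (b1) do not follow.
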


\noindent {\bf{Remark on ``PSD with $\ell_1$'':}} In the special case, $\X_0=\ab\ab^T$ for a $k$-sparse vector $\ab$, we have $\mlow=\min\{{\|\bar{\ab}\|_1^4},\md\}$. When nonzero entries of $\ab$ are $\pm1$, we have $\mlow=\min\{k^2,\md\}$. 
\begin{table}
\begin{center}
    \begin{tabular}{  l | l | l }
    Setting				& Nonconvex sufficient $m$	& Convex required $m$ \\ \hline
    General model 		&   $\order{\max\{rk,k\log\frac{\md}{k}\}}$		& $\Omega(r\md)$	 \\ \hline
    PSD with $\ell_{1,2}$ 	& $\order{\max\{rk,k\log\frac{\md}{k}\}}$	& $\Omega(r\md)$		 \\\hline
    PSD with $\ell_1$ &$\order{k\log\frac{\md}{k}}$				&  $\Omega(\min\{k^2,r\md\})$	 \\ \hline		
    \end{tabular}
\end{center}
    \caption{Summary of recovery results for models in Definition \ref{models}, assuming $\md_1=\md_2=\md$ and $k_1=k_2=k$. For the PSD with $\ell_1$ case, we assume $\frac{\|\bX_0\|_1}{k}$ and $\frac{\|\bX_0\|_\star}{\sqrt{r}}$ to be approximately constants for the sake of simplicity. Nonconvex approaches are optimal up to a logarithmic factor, while convex approaches perform poorly.}
    \label{tab:SLR}
\end{table}

The nonconvex programs require almost the same number of measurements as the degrees of freedom (or number of parameters) of the underlying model. For instance, it is known that the degrees of freedom of a rank $r$ matrix of size $k_1\times k_2$ is simply $r(k_1+k_2-r)$ which is $\order{(k_1+k_2)r}$. Hence, the nonconvex results are optimal up to a logarithmic factor. On the other hand, our results on the convex programs that follow from Theorem \ref{main31} show that the required number of measurements are significantly larger. Table \ref{tab:SLR} provides a quick comparison of the results on S\&L.

For the S\&L (k,k,r) model, from standard results one can easily deduce that \cite{RFP,StojnicBlock,candes-tao2},
\begin{itemize}
\item $\ell_{1}$ penalty only: requires at least $k^2$,
\item $\ell_{1,2}$ penalty only: requires at least $k\md$,
\item Nuclear norm penalty only: requires at least $r\md$ measurements.
\end{itemize}
These follow from the model complexity of the sparse, column-sparse and low-rank matrices. Theorem \ref{main31} shows that, combination of norms require at least as much as the best individual norm. For instance, combination of $\ell_{1}$ and the nuclear norm penalization yields the lower bound $\order{\min\{k^2,r\md\}}$ for S\&L matrices whose singular values and nonzero entries are spread. This is indeed what we would expect from the interpretation that $\kappa^2 n$ is often proportional to the sample complexity of the corresponding norm and, the lower bound $\kappa_{\min}^2 n$ is proportional to that of the best individual norm.

As we saw in Section \ref{generalres}, adding a cone constraint to the recovery program does not help in reducing the lower bound by more than a constant factor.
In particular, we discuss the positive semidefiniteness assumption that is beneficial in the sparse phase retrieval problem,, and show that the number of measurements remain high even when we include this extra information.
On the other hand, the nonconvex recovery programs performs well even without the PSD constraint.

We remark that, we could have stated Theorem \ref{main_thm_sec3} for more general measurements given in Section \ref{measure} without the cone constraint. For instance, the following result holds for the weighted linear combination of individual norms and for the subgaussian ensemble.
\begin{corollary} \label{weighted s&l}Suppose $\X_0\in\R^{\md\times \md}$ obeys the general model with $k_1=k_2=k$ and $\Mec$ is a linear subgaussian map as described in Proposition \ref{sub gauss}. Choose $f(\X)=\la_{\ell_1} \|\X\|_1+\la_\star \|\X\|_\star$, where $\la_{\ell_1}=\beta$, $\la_\star=(1-\beta)\sqrt{\md}$ and $0\leq \beta\leq 1$. Then, whenever, $m\leq \min\{m_{low},c_1n\}$, where,
\beq
m_{low}=\frac{(\beta \|{\bar{\X}}_0\|_1+(1-\beta) \|{\bar{\X}}_0\|_\star \sqrt{\md})^2}{2},\nn
\eeq
\eqref{mainpro} fails with probability $1-4\exp(-c_2m_{low})$. Here $c_1,c_2>0$ are constants as described in Proposition \ref{sub gauss}.
\end{corollary}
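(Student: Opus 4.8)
The plan is to obtain the corollary as a direct specialization of the general machinery: the deterministic failure criterion of Theorem~\ref{bound}, the subgradient lower bound of Proposition~\ref{low subgrad}, and the subgaussian version of the failure bound stated in Proposition~\ref{sub gauss} (which controls $\|\Meb{\bar{\X}}_0\|_2/\sigma_{\min}(\Meb^T)$ for linear subgaussian maps). Since $\beta\ge0$ and $1-\beta\ge0$, the objective $f(\X)=\la_{\ell_1}\|\X\|_1+\la_\star\|\X\|_\star$ is a nonnegatively weighted sum of the two norms, so the second bullet of Proposition~\ref{low subgrad} applies and yields
\[
\ang(\X_0,\pa f(\X_0))\;\ge\;{\bar{\la}}_{\ell_1}\,\kappa_{\ell_1}+{\bar{\la}}_\star\,\kappa_\star ,
\]
where $\kappa_{\ell_1}=\|{\bar{\X}}_0\|_1/L_{\ell_1}$, $\kappa_\star=\|{\bar{\X}}_0\|_\star/L_\star$ and ${\bar{\la}}_i=\la_iL_i/(\la_{\ell_1}L_{\ell_1}+\la_\star L_\star)$.

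First I would evaluate the Lipschitz constants. Viewing $\R^{\md\times\md}$ as $\R^n$ with $n=\md^2$, one has $L_{\ell_1}=\sqrt{n}=\md$ and $L_\star=\sqrt{\md}$. With the prescribed weights $\la_{\ell_1}=\beta$ and $\la_\star=(1-\beta)\sqrt{\md}$ this gives $\la_{\ell_1}L_{\ell_1}+\la_\star L_\star=\beta\md+(1-\beta)\md=\md$, hence ${\bar{\la}}_{\ell_1}=\beta$ and ${\bar{\la}}_\star=1-\beta$, and therefore
\[
\ang(\X_0,\pa f(\X_0))\;\ge\;\frac{\beta\,\|{\bar{\X}}_0\|_1+(1-\beta)\sqrt{\md}\,\|{\bar{\X}}_0\|_\star}{\md}.
\]
Squaring and multiplying by $n=\md^2$ yields $n\,\ang(\X_0,\pa f(\X_0))^2\ge(\beta\|{\bar{\X}}_0\|_1+(1-\beta)\sqrt{\md}\|{\bar{\X}}_0\|_\star)^2=2\,m_{low}$, i.e.\ $m_{low}\le\tfrac12\,n\,\ang(\X_0,\pa f(\X_0))^2$.

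Next I would feed this into Proposition~\ref{sub gauss}: for a linear subgaussian map it guarantees that, whenever $m\le\min\{\tfrac12 n\,\ang(\X_0,\pa f(\X_0))^2,\,c_1n\}$, the inequality $\ang(\X_0,\pa f(\X_0))>\|\Meb{\bar{\X}}_0\|_2/\sigma_{\min}(\Meb^T)$ holds with probability at least $1-4\exp(-c_2\tfrac12 n\,\ang(\X_0,\pa f(\X_0))^2)$, so Theorem~\ref{bound} (applied with $\Cc=\R^n$, which is the setting here) certifies that $\X_0$ is not a minimizer of \eqref{mainpro}. Since $m\le m_{low}\le\tfrac12 n\,\ang(\X_0,\pa f(\X_0))^2$, the hypothesis of Proposition~\ref{sub gauss} is met, and as $\exp(-c_2\tfrac12 n\,\ang^2)\le\exp(-c_2m_{low})$ the failure probability is at least $1-4\exp(-c_2m_{low})$; the constraint $m\le c_1n$ is exactly the $c_1n$ term in the bound. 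Along the way I would also verify the mild hypotheses behind Proposition~\ref{low subgrad}: every subgradient of $f$ decomposes as $\sum_i w_i\g_i$ with $w_i\ge0$ (Lemma~\ref{subdiff_chain}) and $\iprod{{\bar{\X}}_0}{\g_i}=\|{\bar{\X}}_0\|_{(i)}$ \cite{watson}, both of which rely on $\X_0\neq0$.

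I do not expect a genuine obstacle internal to this corollary. The only thing demanding care is the weight bookkeeping --- checking that the normalization $\la_\star=(1-\beta)\sqrt{\md}$ is precisely what makes the ${\bar{\la}}_i$ collapse to $\beta$ and $1-\beta$, which is what produces the clean formula for $m_{low}$ --- together with the monotonicity observation that converts $\exp(-c_2\tfrac12 n\,\ang^2)$ into $\exp(-c_2m_{low})$. The genuinely hard analytic content, namely the concentration of $\|\Meb{\bar{\X}}_0\|_2$ and of $\sigma_{\min}(\Meb^T)$ for matrices with i.i.d.\ subgaussian rows, is entirely contained in Proposition~\ref{sub gauss} and is taken as given.
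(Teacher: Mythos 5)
Your proof follows the paper's own route exactly: Theorem \ref{bound} combined with the second bullet of Proposition \ref{low subgrad} (the same weight bookkeeping $\la_{\ell_1}L_{\ell_1}=\beta \md$, $\la_\star L_\star=(1-\beta)\md$, hence $\bar{\la}_{\ell_1}=\beta$, $\bar{\la}_\star=1-\beta$) and Proposition \ref{sub gauss} for the subgaussian concentration, so it is essentially the paper's argument written out in more detail. The one small caveat is that Proposition \ref{sub gauss} gives the exponent $-c_2 m$ rather than $-c_2\tfrac12 n\,\ang(\X_0,\pa f(\X_0))^2$ as you quote it, so to reach the stated probability $1-4\exp(-c_2 m_{low})$ you should add the standard monotonicity step (extra measurements only help, so one may take $m=\min\{m_{low},c_1 n\}$) and absorb the discrepancy into the constants --- precisely the ``sufficiently small $c_1,c_2$'' the paper invokes.
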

\noindent {\bf{Remark:}} Choosing $\X_0=\ab\ab^T$ where nonzero entries of $\ab$ are $\pm 1$ yields $\frac{1}{2}(\beta k+(1-\beta) \sqrt{\md})^2$ on the right hand side. An explicit construction of an S\&L matrix with maximal $\|{\bar{\X}}\|_1,\|{\bar{\X}}\|_\star$ is provided in Section \ref{explicit construction}.

This corollary compares well with the upper bound obtained in Corollary \ref{S&L upper} of Section \ref{upper bound sec}. In particular, both the bounds and the penalty parameters match up to logarithmic factors. Hence, together, they sandwich the sample complexity of the combined cost $f(\X)$.

%

\section{Measurement ensembles}\label{measure}

This section will make use of standard results on sub-gaussian random variables and random matrix theory to obtain probabilistic statements. We will explain how one can analyze the right hand side of \eqref{lowbb} for,
\begin{itemize}
\item Matrices with sub-gaussian rows,
\item Subsampled standard basis (in matrix completion),
\item Quadratic measurements arising in phase retrieval.
\end{itemize}

\subsection{Sub-gaussian measurements}

 We first consider the measurement maps with sub-gaussian entries. The following definitions are borrowed from \cite{Vershynin}.
\begin{definition}[Sub-gaussian random variable]\label{sub gauss def} A random variable $x$ is sub-gaussian if there exists a constant $K>0$ such that for all $p\geq 1$,
\beq
(\E|x|^p)^{1/p}\leq K\sqrt{p}.\nn
\eeq
The smallest such $K$ is called the sub-gaussian norm of $x$ and is denoted by $\|x\|_{\Psi_2}$. A sub-exponential random variable $y$ is one for which there exists a constant $K'$ such that, $\Pro(|y|>t)\leq \exp(1-\frac{t}{K'})$. $x$ is sub-gaussian if and only if $x^2$ is sub-exponential.
\end{definition}

\begin{definition} [Isotropic sub-gaussian vector] A random vector $\x\in\R^n$ is sub-gaussian if the one dimensional marginals $\x^T\vb$ are sub-gaussian random variables for all $\vb\in\R^n$. The sub-gaussian norm of $\x$ is defined as,
\beq
\|\x\|_{\Psi_2}=\sup_{\|\vb\|=1}\|\x^T\vb\|_{\Psi_2}\nn
\eeq
$\x$ is also isotropic, if its covariance is equal to identity, i.e. $\E\x\x^T=\Iden_n$.
\end{definition}
\begin{proposition} [Sub-gaussian measurements]\label{sub gauss} Suppose $\Meb$ has i.i.d rows in either of the following forms,
\begin{itemize}
\item a copy of a zero-mean isotropic sub-gaussian vector $\ab\in\R^n$, where $\|\ab\|_2=\sqrt{n}$ almost surely.
\item have independent zero-mean unit variance sub-gaussian entries.
\end{itemize}
 Then, there exists constants $c_1,c_2$ depending only on the sub-gaussian norm of the rows, such that, whenever $m\leq c_1n$, with probability $1-4\exp(-c_2m)$, we have,
\beq
\frac{\|\Meb\bx_0\|_2^2}{\sigma_{\min}^2(\Meb^T)}\leq \frac{2m}{n}\nn
\eeq
\end{proposition}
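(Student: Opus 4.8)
The plan is to control the numerator and denominator of $\|\Meb\bx_0\|_2^2/\sigma_{\min}^2(\Meb^T)$ separately: I would show that $\|\Meb\bx_0\|_2^2$ is at most $(1+\eps)m$ and that $\sigma_{\min}^2(\Meb^T)$ is at least $(1-\delta)n$, each with probability at least $1-2\exp(-cm)$ for suitable constants (depending only on the sub-gaussian norm $K$ of the rows), then intersect the two events by a union bound, and finally pick $\eps,\delta$ small so that $(1+\eps)/(1-\delta)\le 2$. The restriction $m\le c_1 n$ with $c_1$ small is exactly what makes $\Meb^T$ a tall enough matrix that its smallest singular value stays close to $\sqrt n$.

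For the numerator, write $\|\Meb\bx_0\|_2^2=\sum_{i=1}^m(\ab_i^T\bx_0)^2$, where $\ab_1,\dots,\ab_m$ are the i.i.d.\ rows of $\Meb$. Each $\ab_i^T\bx_0$ is zero-mean and sub-gaussian with $\|\ab_i^T\bx_0\|_{\Psi_2}\le K$, and by isotropy $\E(\ab_i^T\bx_0)^2=\bx_0^T\big(\E\ab_i\ab_i^T\big)\bx_0=\|\bx_0\|_2^2=1$. Hence $(\ab_i^T\bx_0)^2-1$ are i.i.d., centered, and sub-exponential with norm controlled by $K$, so Bernstein's inequality for sums of sub-exponential variables (e.g.\ \cite{Vershynin}) gives $\|\Meb\bx_0\|_2^2\le(1+\eps)m$ with probability at least $1-2\exp(-c(\eps,K)m)$.

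For the denominator, note that $\sigma_{\min}(\Meb^T)$ is the smallest of the $m$ singular values of the $n\times m$ matrix $\Meb^T$, and I would bound it below using non-asymptotic random matrix theory \cite{Vershynin}. In the second model, where $\Meb$ has i.i.d.\ zero-mean unit-variance sub-gaussian entries, the rows of $\Meb^T$ are independent isotropic sub-gaussian vectors in $\R^m$, and the standard estimate for the smallest singular value of a tall matrix with independent sub-gaussian rows gives $\sigma_{\min}(\Meb^T)\ge\sqrt n-C\sqrt m-t$ with probability at least $1-2\exp(-ct^2)$. In the first model this argument does \emph{not} apply directly, because the coordinates of $\ab$ need not be independent, so $\Meb^T$ has dependent rows; instead one uses that the \emph{columns} of $\Meb^T$ are i.i.d.\ copies of $\ab$, which are independent, isotropic, and satisfy $\|\ab\|_2=\sqrt n$ almost surely, and invokes the analogous estimate for tall matrices with independent isotropic columns of fixed length, which again yields $\sigma_{\min}(\Meb^T)\ge\sqrt n-C\sqrt m-t$ (at the cost of only a polynomial prefactor in the failure probability). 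This is precisely where the hypothesis $\|\ab\|_2=\sqrt n$ is used, and it is the main point of the argument. Taking $t=\delta_0\sqrt n$ for small $\delta_0$ and using $m\le c_1 n$, this event holds with probability at least $1-2\exp(-c'm)$, and on it $\sigma_{\min}^2(\Meb^T)\ge(1-C\sqrt{c_1}-\delta_0)^2\,n$.

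Intersecting the two events, with probability at least $1-4\exp(-c_2 m)$ one has
\[
\frac{\|\Meb\bx_0\|_2^2}{\sigma_{\min}^2(\Meb^T)}\;\le\;\frac{(1+\eps)\,m}{(1-C\sqrt{c_1}-\delta_0)^2\,n}\,.
\]
Fixing $\eps,\delta_0$ small and then shrinking $c_1$ until $(1+\eps)(1-C\sqrt{c_1}-\delta_0)^{-2}\le 2$ (there is ample room), and setting $c_2$ to be the smaller of the two exponential rates, gives $\|\Meb\bx_0\|_2^2/\sigma_{\min}^2(\Meb^T)\le 2m/n$; all constants produced depend only on $K$. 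Apart from this constant-tuning, the only real subtlety is handling the two row models uniformly in the denominator estimate, as described above.
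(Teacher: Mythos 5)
Your proposal is correct and follows essentially the same route as the paper: a Bernstein bound on the sub-exponential sum $\|\Meb\bx_0\|_2^2$ for the numerator, the Vershynin smallest-singular-value estimates (independent columns of fixed norm $\sqrt{n}$ for the first model, independent sub-gaussian rows of $\Meb^T$ for the second) for the denominator, and a union bound with constants tuned so the ratio is at most $2m/n$. Your explicit remark on why the two row models require different singular-value theorems is exactly the distinction the paper makes by invoking Theorem 5.58 versus Theorem 5.39 of \cite{Vershynin}.
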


\begin{proof} Using Theorem 5.58 of \cite{Vershynin}, there exists constants $c,C$ depending only on the sub-gaussian norm of $\ab$ such that for any $t\geq 0$, with probability $1-2\exp(-ct^2)$
\beq
\sigma_{\min}(\Meb^T)\geq \sqrt{n}-C\sqrt{m}-t\nn
\eeq
Choosing $t=C\sqrt{m}$ and $m\leq \frac{n}{100C^2}$ would ensure $\sigma_{\min}(\Meb^T)\geq \frac{4\sqrt{n}}{5}$.

Next, we shall estimate $\|\Meb\bx_0\|_2$. $\|\Meb\bx_0\|_2^2$ is sum of i.i.d. sub-exponential random variables identical to $|\ab^T\bx_0|^2$. Also, $\E[|\ab^T\bx_0|^2]=1$. Hence, Proposition 5.16 of \cite{Vershynin} gives, 
\beq
\Pro(\|\Meb\bx_0\|_2^2\geq m+t)\leq 2\exp(-c'\min\{\frac{t^2}{m},{t}\})\nn
\eeq
Choosing $t=\frac{7m}{25}$, we find that $\Pro(\|\Meb\bx_0\|_2^2\geq\frac{32m}{25})\leq 2\exp(-{c''m})$. Combining the two, we obtain,
\beq
\Pro(\frac{\|\Meb\bx_0\|_2^2}{\sigma_{\min}^2(\Meb^T)}\leq \frac{2m}{n})\geq 1-4\exp(-c'''m)\nn
\eeq

The second statement can be proved in the exact same manner by using Theorem 5.39 of \cite{Vershynin} instead of Theorem 5.58.
\end{proof}

{\bf{Remark:}} While Proposition \ref{sub gauss} assumes $\ab$ has fixed $\ell_2$ norm, this can be ensured by properly normalizing rows of $\Meb$ (assuming they stay sub-gaussian). For instance, if the $\ell_2$ norm of the rows are larger than $c\sqrt{n}$ for a positive constant $c$, normalization will not affect sub-gaussianity. Note that, scaling rows of a matrix do not change its null space.

\subsection{Randomly sampling entries} 
We now consider the scenario where each row of $\Meb$ is chosen from the standard basis uniformly at random. Note that, when $m$ is comparable to $n$, there is a nonnegligible probability that $\Meb$ will have duplicate rows. Theorem \ref{bound} does not take this situation into account which would make $\sigma_{\min}(\Meb^T)=0$. In this case, one can discard the copies as they don't affect the recoverability of $\x_0$. This would get rid of the ill-conditioning, as the new matrix is well-conditioned with the exact same null space as the original, and would correspond to a ``sampling without replacement'' scheme where we ensure each row is different.

Similar to achievability results in matrix completion \cite{CandesRecht-completion}, the following failure result requires true signal to be incoherent with the standard basis, where incoherence is characterized by $\|\bx_0\|_{\infty}$, which lies between $\frac{1}{\sqrt{n}}$ and $1$.

\begin{proposition}[Sampling entries] \label{indepsamp}Let $\{\e_i\}_{i=1}^n$ be the standard basis in $\R^n$ and suppose each row of $\Meb$ is chosen from $\{\e_i\}_{i=1}^n$ uniformly at random. Let $\hat{\Meb}$ be the matrix obtained by removing the duplicate rows in $\Meb$. Then, with probability $1-\exp(-\frac{m}{4n\|\bx_0\|_\infty^2})$, we have,
\beq
\frac{\|\hat{\Meb}\bx_0\|_2^2}{\sigma_{\min}^2(\hat{\Meb})}\leq \frac{2m}{n}\nn
\eeq
\end{proposition}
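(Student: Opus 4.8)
The plan is to peel off the two ingredients hidden in the ratio $\|\hat{\Meb}\bx_0\|_2^2/\sigma_{\min}^2(\hat{\Meb})$ and treat them separately. First, observe that the rows of $\hat{\Meb}$ are \emph{distinct} elements of $\{\e_i\}_{i=1}^n$, so $\hat{\Meb}\hat{\Meb}^T=\Ib$; hence every nonzero singular value of $\hat{\Meb}$ equals $1$, and in particular $\sigma_{\min}^2(\hat{\Meb})=1$ deterministically (for $m\geq1$ the matrix $\hat{\Meb}$ has at least one row). So the proposition reduces to showing $\|\hat{\Meb}\bx_0\|_2^2\le \frac{2m}{n}$ with the stated probability. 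Second, writing $T\subseteq\{1,\dots,n\}$ for the set of sampled indices, we have $\|\hat{\Meb}\bx_0\|_2^2=\sum_{i\in T}(\bx_0)_i^2\le\sum_{j=1}^m(\bx_0)_{i_j}^2=\|\Meb\bx_0\|_2^2$, since deleting duplicate rows only removes repeated nonnegative terms. Thus it suffices to control $\|\Meb\bx_0\|_2^2$.

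Next I would set $Y_j=(\bx_0)_{i_j}^2$, where $i_j$ is the uniformly random index selected in the $j$-th row. The $Y_j$ are i.i.d., lie in $[0,\|\bx_0\|_\infty^2]$, and satisfy $\E Y_j=\frac1n\sum_{i=1}^n(\bx_0)_i^2=\frac1n$ because $\|\bx_0\|_2=1$; hence $\E\|\Meb\bx_0\|_2^2=\frac mn$ and I want a tail bound for exceeding twice the mean. Normalizing so the summands lie in $[0,1]$, put $\mu=\frac{m}{n\|\bx_0\|_\infty^2}$ and apply the multiplicative Chernoff bound $\Pro[S\ge(1+\delta)\mu]\le\bigl(e^\delta/(1+\delta)^{1+\delta}\bigr)^\mu$ with $\delta=1$:
\[
\Pro\!\left[\|\Meb\bx_0\|_2^2\ge\tfrac{2m}{n}\right]\le\left(\tfrac e4\right)^{\mu}=\exp\bigl(-(\ln4-1)\mu\bigr)\le\exp\!\left(-\tfrac14\,\mu\right)=\exp\!\left(-\frac{m}{4n\|\bx_0\|_\infty^2}\right),
\]
using $\ln4-1>\tfrac14$.

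Combining the two steps: on the complement of the above failure event — which has probability at most $\exp(-\frac{m}{4n\|\bx_0\|_\infty^2})$ — we get $\|\hat{\Meb}\bx_0\|_2^2\le\|\Meb\bx_0\|_2^2<\frac{2m}{n}$ while $\sigma_{\min}^2(\hat{\Meb})=1$, which is exactly the claimed inequality. I do not expect a genuine obstacle here: the argument is an elementary concentration estimate, and the only points needing a little care are (i) justifying that the passage $\Meb\mapsto\hat{\Meb}$ simultaneously repairs the conditioning (forcing $\sigma_{\min}=1$, which Theorem~\ref{bound} otherwise cannot exploit) and does not increase $\|\cdot\,\bx_0\|_2$, and (ii) choosing the Chernoff constant sharply enough to land on the factor $\tfrac14$ in the exponent — a cruder bound like $\exp(-\mu/3)$ would also work but would not match the stated constant. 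It is worth noting that $\|\bx_0\|_\infty$ enters precisely as the range (hence the effective variance) of the summands $Y_j$, which is why this incoherence parameter governs the exponent, in the same spirit as the incoherence hypotheses in matrix completion.
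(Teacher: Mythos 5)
Your argument is correct and follows essentially the same route as the paper's own proof: deduplication forces $\sigma_{\min}(\hat{\Meb})=1$ while only decreasing $\|\cdot\,\bx_0\|_2$, and then $\|\Meb\bx_0\|_2^2$ is a sum of i.i.d.\ variables in $[0,\|\bx_0\|_\infty^2]$ with mean $1/n$, controlled by a multiplicative Chernoff bound at $\delta=1$. The only cosmetic difference is that you use the $\bigl(e^\delta/(1+\delta)^{1+\delta}\bigr)^\mu$ form and check $\ln 4-1>\tfrac14$, whereas the paper quotes the weaker $\exp\bigl(-\mu\delta^2/(2(1+\delta))\bigr)$ form, which lands on the constant $\tfrac14$ directly.
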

\begin{proof} Let $\hat{\Meb}$ be the matrix obtained by discarding the rows of $\Meb$ that occur multiple times except one of them. Clearly $\Null(\hat{\Meb})=\Null(\Meb)$ hence they are equivalent for the purpose of recovering $\x_0$. Furthermore, $\sigma_{\min}(\hat{\Meb})=1$. Hence, we are interested in upper bounding $\|\hat{\Meb}\bx_0\|_2$. 

%

Clearly $\|\hat{\Meb}\bx_0\|_2\leq \|\Meb\bx_0\|_2$. Hence, we will bound $\|\Meb\bx_0\|_2^2$ probabilistically. Let $\ab$ be the first row of $\Meb$. $|\ab^T\bx_0|^2$ is a random variable, with mean $\frac{1}{n}$ and is upper bounded by $\|\bx_0\|_{\infty}^2$. Hence, applying the Chernoff Bound would yield,
\beq
\Pro(\|\Meb\bx_0\|_2^2\geq \frac{m}{n}(1+\delta))\leq \exp(-\frac{m\delta^2 }{2(1+\delta)n\|\bx_0\|_{\infty}^2})\nn
\eeq
Setting $\delta=1$, we find that, with probability $1-\exp(-\frac{m }{4n\|\bx_0\|_{\infty}^2})$, we have,
\beq
\frac{\|\hat{\Meb}\bx_0\|_2^2}{\sigma_{\min}(\hat{\Meb})^2}\leq \frac{\|\Meb\bx_0\|_2^2}{\sigma_{\min}(\hat{\Meb})^2}\leq \frac{2m}{n}\nn
\eeq

\end{proof}

A significant application of this result would be for the low-rank tensor completion problem, where we randomly observe some entries of a low-rank tensor and try to reconstruct it. A promising approach for this problem is using the weighted linear combinations of nuclear norms of the unfoldings of the tensor to induce the low-rank tensor structure described in \eqref{SSN}, \cite{RechtTensor,TenSur}. Related work \cite{SquareDeal} shows the poor performance of \eqref{SSN} for the special case of Gaussian measurements. Combination of Theorem \ref{bound} and Proposition \ref{indepsamp} will immediately extend the results of \cite{SquareDeal} to the more applicable tensor completion setup (under proper incoherence conditions that bound $\|\bx_0\|_{\infty}$).

\noindent {\bf{Remark:}} In Propositions \ref{sub gauss} and \ref{indepsamp}, we can make the upper bound for the ratio $\frac{\|\Meb\bx_0\|_2^2}{\sigma_{\min}(\Meb)^2}$ arbitrarily close to $\frac{m}{n}$ by changing the proof parameters. Combined with Proposition \ref{low subgrad}, this would suggest that, failure happens, when $m<n\kappa_{\min}$.

\subsection{Quadratic measurements}


As mentioned in the phase retrieval problem, quadratic measurements $|\vb^T\ab|^2$ of the vector $\ab\in\R^\md$ can be linearized by the change of variable $\ab\rightarrow\X_0=\ab\ab^T$ and using $\V=\vb\vb^T$. The following proposition can be used to obtain a lower bound for such ensembles when combined with Theorem \ref{bound}.

\begin{proposition} \label{quad prop}Suppose we observe quadratic measurements $\Mec(\X_0)\in\R^m$ of a matrix $\X_0=\ab\ab^T\in\R^{\md\times \md}$. Here, assume that $i$'th entry of $\Mec(\X_0)$ is equal to $|\vb_i^T\ab|^2$ where $\{\vb_i\}_{i=1}^m$ are independent vectors, either with $\Nc(0,1)$ entries or are uniformly distributed over the sphere with radius $\sqrt{\md}$. Then, there exists absolute constants $c_1,c_2>0$ such that whenever $m<\frac{c_1\md}{\log \md}$, with probability $1-2e\md^{-2}$,
\beq
\frac{\|\Mec(\bX_0)\|_2}{\sigma_{\min}(\Meb^T)}\leq \frac{c_2\sqrt{m}\log \md}{\md}\nn
\eeq
\end{proposition}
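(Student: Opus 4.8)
The plan is to bound the numerator $\|\Mec(\bX_0)\|_2$ and the denominator $\sigma_{\min}(\Meb^T)$ of the target ratio separately and then combine them by a union bound. Since $\bX_0=\ab\ab^T/\|\ab\|_2^2$, the $i$-th entry of $\Mec(\bX_0)$ is $\langle\vb_i\vb_i^T,\ab\ab^T\rangle/\|\ab\|_2^2=|\vb_i^T\bar{\ab}|^2$ with $\bar{\ab}=\ab/\|\ab\|_2$, so $\|\Mec(\bX_0)\|_2^2=\sum_{i=1}^m|\vb_i^T\bar{\ab}|^4$; likewise the rows of $\Meb$ are $\vc(\vb_i\vb_i^T)$, so $\Meb\Meb^T\in\R^{m\times m}$ has $(i,j)$ entry $\langle\vb_i\vb_i^T,\vb_j\vb_j^T\rangle=(\vb_i^T\vb_j)^2$ and $\sigma_{\min}(\Meb^T)^2=\lambda_{\min}(\Meb\Meb^T)$. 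Both ensembles for $\vb_i$ are rotation invariant, so after an orthogonal change of coordinates we may assume $\bar{\ab}=\e_1$; then $t_i:=\vb_i^T\bar{\ab}$ is the first coordinate of $\vb_i$, which in both cases has unit second moment and a sub-gaussian tail $\Pr(|t_i|>s)\le 2e^{-s^2/2}$ (a standard normal in the Gaussian case, a scaled coordinate of a uniform unit vector in the sphere case).

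For the numerator, the quartic $t_i^4$ is heavier-tailed than sub-exponential, so rather than a Bernstein-type bound for $\sum_i t_i^4$ I would use the crude estimate $\sum_{i=1}^m t_i^4\le m\max_{i\le m}t_i^4$ and control the maximum by a union bound: $\Pr(\max_{i\le m}t_i^2>s)\le m\,\Pr(t_1^2>s)\le 2m\,e^{-s/2}$. Taking $s=C\log d$ and using $m<d$ makes this at most $e\,d^{-2}$ for $C$ large enough, so on the complementary event $\|\Mec(\bX_0)\|_2\le\sqrt{m}\,\max_i t_i^2\le C\sqrt{m}\,\log d$.

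For the denominator I would use a Gershgorin disc bound on $\Meb\Meb^T$. The diagonal entry is $\|\vb_i\|_2^4$, which equals $d^2$ exactly in the sphere case and is at least $(0.9d)^2$ for all $i\le m$ with probability $1-m\,e^{-cd}$ in the Gaussian case by standard $\chi^2$ concentration (negligible since $m<d$). For an off-diagonal entry, conditionally on $\vb_j$ one has $\vb_i^T\vb_j\sim\Nc(0,\|\vb_j\|_2^2)$ with $\|\vb_j\|_2^2\le 2d$ on the good event (an analogous sub-gaussian statement holds in the sphere case), hence $\Pr\big((\vb_i^T\vb_j)^2>C'd\log d\big)$ is polynomially small in $d$; a union bound over the fewer than $d^2$ off-diagonal pairs gives, with probability $1-O(d^{-2})$, $(\vb_i^T\vb_j)^2\le C'd\log d$ for all $i\ne j$. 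Gershgorin then yields $\lambda_{\min}(\Meb\Meb^T)\ge 0.8\,d^2-(m-1)C'd\log d\ge\tfrac12 d^2$, where the last inequality is exactly where the hypothesis $m<c_1 d/\log d$ (with $c_1$ small depending on $C'$) is used; thus $\sigma_{\min}(\Meb^T)\ge d/\sqrt2$.

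Combining the pieces and absorbing constants, $\|\Mec(\bX_0)\|_2/\sigma_{\min}(\Meb^T)\le c_2\sqrt{m}\,\log d/d$ holds off a union of bad events of total probability at most $2e\,d^{-2}$, which is the claim. The main obstacle I anticipate is the heavy tails of the quartic quantities $t_i^4$ and $(\vb_i^T\vb_j)^4$: they are not sub-exponential, so no standard concentration-of-sums estimate applies directly; in both the numerator and the denominator the remedy is the same, namely dominating the sum by (its number of terms) times its maximum and paying only a $\log d$ factor for the maximum. This is precisely what produces both the $\log d$ in the final bound and the requirement $m<c_1 d/\log d$: the off-diagonal mass of $\Meb\Meb^T$ grows like $m\,d\log d$ and must stay below the diagonal mass $\asymp d^2$ for Gershgorin to be effective. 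A routine secondary task is tuning the absolute constants so the union bound meets the stated probability $1-2e\,d^{-2}$; alternatively, one could replace Gershgorin by bounding the operator norm of the off-diagonal part of $\Meb\Meb^T$ by its Frobenius norm, which is of the same order $m\,d\log d$.
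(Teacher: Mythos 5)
Your proposal is correct and follows essentially the same route as the paper: the numerator is controlled by bounding each subexponential entry $|\vb_i^T\bar{\ab}|^2$ by $O(\log \md)$ via a union bound and using $\|\cdot\|_2\leq\sqrt{m}\cdot\max$, and the denominator is controlled by showing all off-diagonal Gram entries $(\vb_i^T\vb_j)^2$ are $O(\md\log \md)$ and invoking diagonal dominance. Your Gershgorin step is exactly the coherence-based bound $\sigma_{\min}(\Meb^T)\geq \md(1-(m-1)\mu)^{1/2}$ that the paper cites from Tropp, and the condition $m\lesssim \md/\log\md$ enters in the same place.
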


%


\begin{proof} 
Let $\V_i=\vb_i\vb_i^T$. Without loss of generality, assume $\vb_i$'s are uniformly distributed over sphere with radius $\sqrt{\md}$. To lower bound $\sigma_{\min}(\Meb^T)$, we will estimate the coherence of its columns, defined by,
\beq
\mu(\Meb^T)=\max_{i\neq j} \frac{|\li\V_i,\V_j\ri|}{\|\V_i\|_F\|\V_j\|_F}=\frac{(\vb_i^T\vb_j)^2}{\md^2}\nn
\eeq
Section 5.2.5 of \cite{Vershynin} states that sub-gaussian norm of $\vb_i$ is bounded by an absolute constant. Hence, conditioned on $\vb_j$ (which satisfies $\|\vb_j\|_2=\sqrt{\md}$), $\frac{(\vb_i^T\vb_j)^2}{\md}$ is a subexponential random variable with mean $1$. Hence, using Definition \ref{sub gauss def}, there exists a constant $c>0$ such that,
\beq
\Pro(\frac{(\vb_i^T\vb_j)^2}{\md}> c\log \md)\leq e\md^{-4}\nn
\eeq 
Union bounding over all $i,j$ pairs ensure that with probability $e\md^{-2}$ we have $\mu(\Meb^T)\leq c\frac{\log \md}{\md}$. Next, we use the standard result that for a matrix with columns of equal length, $\sigma_{\min}(\Meb^T)\geq \md(1-(m-1)\mu)$. The reader is referred to Proposition 1 of \cite{TroppDict}. Hence, $m\leq \frac{\md}{2c \log \md}$, gives $\sigma_{\min}(\Meb^T)\geq \frac{\md}{2}$.

It remains to upper bound $\|\Mec(\bX_0)\|_2\,$. The $i$'th entry of $\Mec(\bX_0)$ is equal to $|\vb_i^T\bar{\ab}|^2$, hence it is subexponential. Consequently, there exists a constant $c'$ so that each entry is upper bounded by $\frac{c'}{2}\log \md$ with probability $1-e\md^{-3}$. Union bounding, and using $m\leq \md$, we find that $\|\Mec(\bX_0)\|_2\leq \frac{c'}{2}\sqrt{m}\log \md$ with probability $1-e\md^{-2}$. Combining with the $\sigma_{\min}(\Meb^T)$ estimate we can conclude.
\end{proof}

\paragraph{Comparison to existing literature.} Proposition \ref{quad prop} is useful to estimate the performance of the sparse phase retrieval problem, in which $\ab$ is a $k$ sparse vector, and we minimize a combination of the $\ell_1$ norm and the nuclear norm to recover $\X_0$. Combined with Theorem \ref{bound}, Proposition \ref{quad prop} gives that, whenever $m\leq \frac{c_1\md}{\log \md}$ and $\frac{c_2\sqrt{m}\log \md}{\md}\leq \min\{\frac{\|\bX_0\|_1}{\md},\frac{\|\bX_0\|_\star}{\sqrt{\md}}\}$, the recovery fails with high probability. Since $\|\bX_0\|_\star=1$ and $\|\bX_0\|_1=\|\bar{\ab}\|_1^2$, the failure condition reduces to,
\beq
m\leq \frac{c}{\log^2 \md}\min\{\|\bar{\ab}\|_1^4,\md\}.\nn
\eeq
When $\bar{\ab}$ is a $k$-sparse vector with $\pm 1$ entries, in a similar flavor to Theorem \ref{main_thm_sec3}, the right hand side has the form $\frac{c}{\log^2 \md}\min\{k^2,\md\}$.


We should emphasize that the lower bound provided in \cite{LiVoroninski2012} is directly comparable to our results. Authors in \cite{LiVoroninski2012} consider the same problem and give two results: first, if $m\geq \order{\|\bar{\ab}\|_1^2 k \log \md}$ then minimizing $\|\X\|_1+ \lambda \tr{\X}$ for suitable value of $\lambda$ over the set of PSD matrices will exactly recover $\X_0$ with high probability. Secondly, their Theorem 1.3 gives a necessary condition (lower bound) on the number of measurements, under which the recovery program fails to recover $\X_0$ with high probability. In particular, their failure condition is $m\leq \min\{m_0,\frac{\md}{40\log \md}\}$ where $
m_0= \frac{\max(\norm{\bar{\ab}}_1^2 - k/2 , 0)^2}{500 \log^2 \md}$.

First, observe that both results have $m\leq \order{\frac{\md}{\log \md}}$ condition. Focusing on the sparsity requirements, when the nonzero entries are sufficiently diffused (i.e. $\|\ab\|_1^2\approx k$) both results yield $\order{\frac{\|\bar{\ab}\|^4}{\log^2 \md}}$ as a lower bound. On the other hand, if $\|\bar{\ab}\|_1\leq \sqrt{\frac{k}{2}}$, their lower bound disappears while our lower bound still requires $\order{\frac{\|\bar{\ab}\|^4}{\log^2 \md}}$ measurements. $\|\bar{\ab}\|_1\leq \sqrt{\frac{k}{2}}$ can happen as soon as the nonzero entries are rather spiky, i.e. some of the entries are much larger than the rest. In this sense, our bounds are tighter. On the other hand, their lower bound includes the PSD constraint unlike ours.

%
%
%

\subsection{Asymptotic regime} While we discussed two cases in the nonasymptotic setup, we believe significantly more general results can be stated asymptotically ($m,n\rightarrow \infty$). For instance, under finite fourth moment constraint, thanks to Bai-Yin law \cite{Bai-Yin}, asymptotically, the smallest singular value of a matrix with i.i.d. unit variance entries concentrate around $\sqrt{n}-\sqrt{m}$. Similarly, $\|\Meb\bx_0\|_2^2$ is sum of independent variables; hence thanks to the law of large numbers, we will have $\frac{\|\Meb\bx_0\|_2^2}{m}\rightarrow 1$. Together, these yield $\frac{\|\Meb\bx_0\|_2}{\sigma_{\min}(\Meb^T)}\rightarrow \frac{\sqrt{m}}{\sqrt{n}-\sqrt{m}}$.

\samet{
\section{Upper bounds}\label{upper bound sec}

We now state an upper bound on the simultaneous optimization for Gaussian measurement ensemble. Our upper bound will be in terms of distance to the dilated subdifferentials.

To accomplish this, we will make use of the recent theory on the sample complexity of the linear inverse problems. It has been recently shown that \eqref{rec_class} exhibits a phase transition from failure with high probability to success with high probability when the number of Gaussian measurements are around the quantity $m_{PT}= \gdb(\text{cone}(\pa f(\x_0)))^2$ \cite{ALMT,chandra}. This phenomenon was first observed by Donoho and Tanner, who calculated the phase transitions for $\ell_1$ minimization and showed that $\gdb(\text{cone}(\pa \|\x_0\|_1))^2\leq 2k\log\frac{en}{k}$ for a $k$-sparse vector in $\R^n$ \cite{DonTan}. All of these works focus on signals with single structure and do not study properties of a penalty that is a combination of norms. The next theorem relates the phase transition point of the joint optimization \eqref{rec_class} to the individual subdifferentials.

\begin{theorem}\label{up bound thm} Suppose $\Meb$ has i.i.d.~$\Nc(0,1)$ entries and let $f(\x)=\sum_{i=1}^\tau\la_i\|\x\|_{(i)}$. For positive scalars $\{\dil_i\}_{i=1}^\tau$, let ${\bar{\la}}_i=\frac{\la_i \dil_i^{-1}}{\sum_{i=1}^\tau \la_i \dil_i^{-1}}$ and define,
\beq
m_{up}(\{\dil_i\}_{i=1}^\tau):= \left(\sum_i {\bar{\la}}_i\gdb(\dil_i\partial \|\x_0\|_{(i)})\right)^2 \nn\label{up bound}
\eeq
If $m\geq (\sqrt{m_{up}}+t)^2+1$, then program \eqref{rec_class} will succeed with probability $1-2\exp(-\frac{t^2}{2})$.
\end{theorem}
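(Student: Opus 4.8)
The plan is to reduce the statement to a purely geometric inequality about the Gaussian distance of the cone $\cn(\pa f(\x_0))$, and then prove that inequality by a short convexity argument. For the reduction, I would appeal to the standard phase-transition theory for Gaussian linear inverse problems: recovery of $\x_0$ by \eqref{rec_class} fails exactly when $\Null(\Meb)$ meets the descent cone of $f$ at $\x_0$ nontrivially, and Gordon's escape-through-the-mesh inequality (in the form used in \cite{ALMT,chandra}) shows that this does not happen, with probability at least $1-2\exp(-t^2/2)$, once $m\geq(\gdb(\cn(\pa f(\x_0)))+t)^2+1$; here $\pa f(\x_0)$ is nonempty, compact and convex because $\x_0\neq 0$, and $\gdb(\cn(\pa f(\x_0)))=\E\,\dtt(\h,\cn(\pa f(\x_0)))$ with $\h\sim\Nc(0,\Iden_n)$. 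So it suffices to prove $\gdb(\cn(\pa f(\x_0)))\leq\sum_{i=1}^\tau\bar{\la}_i\,\gdb(\dil_i\,\pa\|\x_0\|_{(i)})=\sqrt{m_{up}}$.

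For this geometric bound, first I would invoke the subdifferential sum rule (see Lemma~\ref{subdiff_chain}), which gives $\pa f(\x_0)=\sum_{i=1}^\tau\la_i\,\pa\|\x_0\|_{(i)}$ as a Minkowski sum since each $\|\cdot\|_{(i)}$ is finite-valued. Now fix a realization of $\h$ and let $\p_i$ be the Euclidean projection of $\h$ onto the compact convex set $\dil_i\,\pa\|\x_0\|_{(i)}$, so that $\|\h-\p_i\|_2=\dtt(\h,\dil_i\,\pa\|\x_0\|_{(i)})$ and $\dil_i^{-1}\p_i\in\pa\|\x_0\|_{(i)}$. Then $\sum_i\la_i\dil_i^{-1}\p_i\in\pa f(\x_0)$, so after dividing by the positive scalar $S=\sum_j\la_j\dil_j^{-1}$ we get $\sum_i\bar{\la}_i\p_i\in\cn(\pa f(\x_0))$. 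Since $\sum_i\bar{\la}_i=1$, we have $\h-\sum_i\bar{\la}_i\p_i=\sum_i\bar{\la}_i(\h-\p_i)$, and the triangle inequality yields the pointwise estimate $\dtt(\h,\cn(\pa f(\x_0)))\leq\sum_i\bar{\la}_i\,\dtt(\h,\dil_i\,\pa\|\x_0\|_{(i)})$. Taking expectations over $\h$ gives $\gdb(\cn(\pa f(\x_0)))\leq\sqrt{m_{up}}$, and substituting into the first step completes the proof.

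I expect the convexity argument of the second paragraph to go through with essentially no friction; the step that needs the most care is the first one, namely citing a version of Gordon's comparison / escape-through-the-mesh inequality with precisely the right constants so that the failure probability becomes $2\exp(-t^2/2)$ and the measurement threshold $(\sqrt{m_{up}}+t)^2+1$. Two small points I would make sure to address: the dilations $\dil_i$ do not affect the individual cones ($\cn(\dil_i\,\pa\|\x_0\|_{(i)})=\cn(\pa\|\x_0\|_{(i)})$), so they are free parameters and the bound $m_{up}(\{\dil_i\})$ is intended to be minimized over them; and $\cn(\pa f(\x_0))$ can be replaced by its closure without changing any distance, which is what lets the escape-through-the-mesh bound --- stated for the polar of the descent cone --- apply directly.
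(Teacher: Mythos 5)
Your proposal is correct and follows essentially the same route as the paper: the paper's proof also picks, for a fixed Gaussian $\h$, the closest points $\dil_i\g_i$ in each dilated subdifferential, forms the convex combination $\sum_i\bar{\la}_i\dil_i\g_i$ (which is a positive multiple of $\sum_i\la_i\g_i\in\pa f(\x_0)$ and hence lies in $\cn(\pa f(\x_0))$), applies the triangle inequality using $\sum_i\bar{\la}_i=1$, and takes expectations before invoking the Gaussian-width recovery guarantee of \cite{chandra}. The only cosmetic difference is notation ($\p_i=\dil_i\g_i$, $S=\gamma^{-1}$).
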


\begin{proof} Fix $\h$ as an i.i.d. standard normal vector. Let $\g_i$ be so that $\alpha_i\g_i$ is closest to $\h$ over $\alpha_i\partial \|\x_0\|_{(i)}$. Let $\gamma=(\sum_i\frac{\la_i}{\dil_i})^{-1}$. Then, we may write,
\begin{align}
\inf_{\g'\in\cn(\partial  f(\x_0))} \|\h-\g'\|_2&\leq \inf_{\g\in\partial  f(\x_0)} \|\h-\gamma\g\|_2\nn\\
&\leq \|\h-\gamma \sum_i\la_i\g_i\|_2\nn\\
&=\|\h-\gamma \sum_i \frac{\la_i}{\dil_i}\dil_i\g_i\|_2=\|\h- \sum_i{\bar{\la}}_i\dil_i\g_i\|_2\nn \\
&\leq \sum_i {\bar{\la}}_i\|\h-\dil_i\g_i\|_2 \nn\\
&=\sum_i{\bar{\la}}_i\inf_{\g_i'\in\pa \|\x_0\|_{(i)}}\|\h-\dil_i\g_i'\|_2\nn
\end{align}

Taking the expectations of both sides and using the definition of $\Db(\cdot)$, we find,
\begin{align}
\Db(\text{cone}(\pa f(\x_0)))\leq  \sum_i{\bar{\la}}_i\gdb(\dil_i\partial \|\x_0\|_{(i)})\,.\nn
\end{align}
Using definition of $\Db(\cdot)$, this gives, $m_{up}\geq \Db(\text{cone}(\pa f(\x_0)))^2$. The result then follows from the fact that, when $m\geq (\Db(\text{cone}(\pa f(\x_0)))+t)^2+1$, recovery succeeds with probability $1-2\exp(-\frac{t^2}{2})$. To see this, first, as discussed in Proposition $3.6$ of \cite{chandra}, $\Db(\text{cone}(\pa f(\x_0)))$ is equal to the Gaussian width of the ``tangent cone intersected with the unit ball'' (see Theorem \ref{ETM} for a definition of Gaussian width). Then, Corollary $3.3$ of \cite{chandra} yields the probabilistic statement.
\end{proof}

For Theorem \ref{up bound thm} to be useful, choices of $\dil_i$ should be made wisely. An obvious choice is letting,
\beq\dil_i^*=\arg\min_{\dil_i\geq 0}\Db(\dil_i\pa \|\x_0\|_{(i)})\label{opt alpha}.\eeq With this choice, our upper bounds can be related to the individual sample complexities, which is equal to $\Db(\text{cone}(\partial \|\x_0\|_{(i)}))^2$. Proposition $1$ of \cite{MF13} shows that, if $\|\cdot\|_{(i)}$ is a \emph{decomposable} norm, then,
\beq
\gdb(\text{cone}(\partial \|\x_0\|_{(i)}))\leq \gdb(\dil_i^*\partial \|\x_0\|_{(i)})\leq \gdb(\cone(\partial \|\x_0\|_{(i)}))+6\nn
\eeq
Decomposability is defined and discussed in detail in Section \ref{decomposable}. In particular, $\ell_1,\ell_{1,2}$ and the nuclear norm are decomposable. With this assumption, our upper bound will suggest that, the sample complexity of the simultaneous optimization is smaller than a certain convex combination of individual sample complexities.
\begin{corollary}\label{up bound cor} Suppose $\Meb$ has i.i.d $\Nc(0,1)$ entries and let $f(\x)=\sum_{i=1}^\tau\la_i\|\x\|_{(i)}$ for decomposable norms $\{\|\cdot\|_{(i)}\}_{i=1}^\tau$. Let $\{\dil_i^*\}_{i=1}^\tau$ be as in \eqref{opt alpha} and assume they are strictly positive. Let ${\bar{\la}}_i^*=\frac{\la_i (\dil_i^*)^{-1}}{\sum_{i=1}^\tau \la_i (\dil_i^*)^{-1}}$ and define,
\beq
\sqrt{m_{up}(\{\dil_i^*\}_{i=1}^\tau)}:= \sum_i {\bar{\la}}_i^*\gdb(\cone(\partial \|\x_0\|_{(i)}))+6 \nn\label{up bound100}
\eeq
If $m\geq (\sqrt{m_{up}}+t)^2+1$, then program \eqref{rec_class} will succeed with probability $1-2\exp(-\frac{t^2}{2})$.
\end{corollary}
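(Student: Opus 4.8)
The plan is to obtain this corollary as an essentially immediate consequence of Theorem \ref{up bound thm}, combined with the decomposability estimate of \cite{MF13}. First I would instantiate Theorem \ref{up bound thm} with the particular dilations $\dil_i=\dil_i^*$ defined in \eqref{opt alpha}; this is legitimate because they are assumed strictly positive, so the weights $\bar{\la}_i^*=\la_i(\dil_i^*)^{-1}/\sum_j\la_j(\dil_j^*)^{-1}$ are well defined and form a convex combination. The theorem then says that \eqref{rec_class} succeeds with probability $1-2\exp(-t^2/2)$ whenever $m\geq (\sqrt{m_{up}(\{\dil_i^*\})}+t)^2+1$, where $\sqrt{m_{up}(\{\dil_i^*\})}=\sum_i\bar{\la}_i^*\,\gdb(\dil_i^*\partial\|\x_0\|_{(i)})$.

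Next I would invoke Proposition 1 of \cite{MF13}, which for each decomposable norm $\|\cdot\|_{(i)}$ gives the two-sided bound
\[
\gdb(\cone(\partial\|\x_0\|_{(i)}))\;\leq\;\gdb(\dil_i^*\partial\|\x_0\|_{(i)})\;\leq\;\gdb(\cone(\partial\|\x_0\|_{(i)}))+6 .
\]
Applying the right-hand inequality termwise, multiplying by the nonnegative weights $\bar{\la}_i^*$, and summing yields
\[
\sum_i\bar{\la}_i^*\,\gdb(\dil_i^*\partial\|\x_0\|_{(i)})\;\leq\;\sum_i\bar{\la}_i^*\,\gdb(\cone(\partial\|\x_0\|_{(i)}))+6\sum_i\bar{\la}_i^* .
\]
Since $\sum_i\bar{\la}_i^*=1$, the trailing term is exactly $6$, and the right-hand side is precisely the quantity denoted $\sqrt{m_{up}}$ in the statement of the corollary. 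Hence $\sqrt{m_{up}(\{\dil_i^*\})}\leq\sqrt{m_{up}}$.

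Finally I would close by monotonicity: the function $s\mapsto(s+t)^2+1$ is nondecreasing in $s\geq 0$, so the hypothesis $m\geq(\sqrt{m_{up}}+t)^2+1$ implies $m\geq(\sqrt{m_{up}(\{\dil_i^*\})}+t)^2+1$, and the conclusion of Theorem \ref{up bound thm} carries over verbatim. I do not expect a genuine obstacle here; the only point that needs care is purely bookkeeping, namely that the $m_{up}$ in the corollary is a \emph{relabeling of an upper bound} on the true phase-transition proxy $m_{up}(\{\dil_i^*\})$, so one must note that enlarging the proxy keeps the sufficient condition sufficient — which is exactly the monotonicity remark above. The substantive content, that optimally dilating each subdifferential brings its Gaussian distance to within an additive constant of the conic Gaussian width $\gdb(\cone(\partial\|\x_0\|_{(i)}))$ (the individual sample complexity), is imported directly from \cite{MF13} and is where decomposability of the $\|\cdot\|_{(i)}$ is used.
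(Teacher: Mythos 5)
Your proposal matches the paper's own (very brief) justification exactly: instantiate Theorem \ref{up bound thm} at the optimal dilations $\dil_i^*$, apply the two-sided bound from Proposition 1 of \cite{MF13} termwise, and use $\sum_i \bar{\la}_i^* = 1$ to pull the additive constant $6$ out of the sum. The monotonicity remark you add at the end is implicit in the paper but correct and harmless.
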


Here, we used the fact that $\sum_i{\bar{\la}}_i^*=1$ to take $6$ out of the sum over $i$. We note that Corollaries \ref{maincor3} and \ref{up bound cor} can be related in the case of sparse and low-rank matrices. For norms of interest, roughly speaking,
\begin{itemize}
\item $n\kappa_i^2$ is proportional to the sample complexity $\gdb(\text{cone}(\partial \|\x_0\|_{(i)}))^2$.
\item  $L_i$ is proportional to $\frac{\sqrt{n}}{\dil_i^*}$.
\end{itemize}
Consequently, the sample complexity of \eqref{rec_class} will be upper and lower bounded by similar convex combinations.

\subsection{Upper bounds for the S\&L model}
We will now apply the bound obtained in Theorem \ref{up bound thm} for S\&L matrices. To obtain simple and closed form bounds, we will make use of the existing results in the literature.
\begin{itemize}
\item Table II of \cite{MF13}: If $\x_0\in\R^n$ is a $k$ sparse vector, choosing $\alpha_{\ell_1}=\sqrt{2\log\frac{n}{k}}$, $\Db(\alpha_{\ell_1}\pa\|\x_0\|_1)^2\leq 2k\log\frac{en}{k}$.
\item Table 3 of \cite{OymLas}: ~If $\X_0\in\R^{\md\times \md}$ is a rank $r$ matrix, choosing $\alpha_{\star}=2\sqrt{\md}$, $\Db(\alpha_{\star}\pa\|\X_0\|_\star)^2\leq 6\md r+2\md$.
\end{itemize}

\begin{proposition}\label{S&L upper} Suppose $\Meb$ has i.i.d $\Nc(0,1)$ entries and $\X_0\in\R^{\md\times \md}$ is a rank $r$ matrix whose nonzero entries lie on a $k\times k$ submatrix. For $0\leq \beta\leq 1$, let $f(\X)=\la_{\ell_1}\|\X\|_1+\la_{\star}\|\X\|_\star$ where $\la_{\ell_1}=\beta\sqrt{\log\frac{\md}{k}}$ and $\la_\star=(1-\beta)\sqrt{\md}$. Then, whenever,
\beq
m\geq \left(2\beta k\sqrt{\log\frac{e\md}{k}}+(1-\beta)\sqrt{6\md r+2\md}+t\right)^2+1,\nn
\eeq
$\X_0$ can be recovered via \eqref{rec_class} with probability $1-2\exp(-\frac{t^2}{2})$. 
\end{proposition}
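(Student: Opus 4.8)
The plan is to apply Theorem~\ref{up bound thm} directly with $\tau = 2$, taking $\|\cdot\|_{(1)} = \|\cdot\|_1$ (the entrywise $\ell_1$ norm, equivalently the $\ell_1$ norm of $\vc(\X)$ in $\R^{\md^2}$) and $\|\cdot\|_{(2)} = \|\cdot\|_\star$, with weights $\la_1 = \la_{\ell_1} = \beta\sqrt{\log\frac{\md}{k}}$ and $\la_2 = \la_\star = (1-\beta)\sqrt{\md}$. Everything then reduces to producing a good upper bound on $\sqrt{m_{up}(\{\dil_1,\dil_2\})} = \bar\la_1\, \gdb(\dil_1\pa\|\X_0\|_1) + \bar\la_2\,\gdb(\dil_2\pa\|\X_0\|_\star)$ for a convenient choice of the dilation parameters $\dil_1,\dil_2 > 0$.

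Next I would pick the dilations dictated by the two cited results: $\dil_1 = \alpha_{\ell_1} = 2\sqrt{\log\frac{\md}{k}}$ (which is $\sqrt{2\log(\md^2/k^2)}$, the choice from Table~II of \cite{MF13} applied to a $k^2$-sparse vector in $\R^{\md^2}$; here I assume $\md > k$ so $\dil_1 > 0$), and $\dil_2 = \alpha_\star = 2\sqrt{\md}$ (the choice from Table~3 of \cite{OymLas} for a rank-$r$ matrix in $\R^{\md\times\md}$). The point of the particular weighting in the statement is that it makes the normalized weights collapse: $\la_1\dil_1^{-1} = \beta/2$ and $\la_2\dil_2^{-1} = (1-\beta)/2$, hence $\bar\la_1 = \beta$ and $\bar\la_2 = 1-\beta$.

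It then remains to bound the two Gaussian distances. For the nuclear-norm term, Table~3 of \cite{OymLas} gives $\gdb(\dil_2\pa\|\X_0\|_\star)^2 \le 6\md r + 2\md$ directly, so $\gdb(\dil_2\pa\|\X_0\|_\star) \le \sqrt{6\md r + 2\md}$. For the $\ell_1$ term I would view $\X_0$ as a $k^2$-sparse vector in $\R^{\md^2}$ and invoke Table~II of \cite{MF13}: $\gdb(\dil_1\pa\|\X_0\|_1)^2 \le 2k^2\log\frac{e\md^2}{k^2}$; then simplify using $\log\frac{e\md^2}{k^2} = 1 + 2\log\frac{\md}{k} \le 2 + 2\log\frac{\md}{k} = 2\log\frac{e\md}{k}$ (valid since $\md \ge k$) to obtain $\gdb(\dil_1\pa\|\X_0\|_1) \le 2k\sqrt{\log\frac{e\md}{k}}$. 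Plugging these in, $\sqrt{m_{up}} \le 2\beta k\sqrt{\log\frac{e\md}{k}} + (1-\beta)\sqrt{6\md r + 2\md}$, and since $x \mapsto (x+t)^2 + 1$ is increasing, the hypothesis on $m$ forces $m \ge (\sqrt{m_{up}} + t)^2 + 1$; Theorem~\ref{up bound thm} then yields success with probability $1 - 2\exp(-t^2/2)$.

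The only genuinely delicate step is the translation of the $\ell_1$ bound from vectors to matrices — keeping track that the relevant ambient dimension is $\md^2$ and the sparsity is $k^2$ (since the support sits in a $k\times k$ block), followed by the logarithm bookkeeping that turns $\log(e\md^2/k^2)$ into the clean $2\log(e\md/k)$. Everything else is substitution; in particular, no optimality property of $\dil_1,\dil_2$ is used, so Corollary~\ref{up bound cor} is not needed — a direct application of Theorem~\ref{up bound thm} with these explicit dilations suffices.
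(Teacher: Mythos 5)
Your proposal is correct and follows essentially the same route as the paper's own proof: apply Theorem~\ref{up bound thm} with dilations $\dil_{\ell_1}=2\sqrt{\log\frac{\md}{k}}$ and $\dil_\star=2\sqrt{\md}$, note that the given weights make $\bar\la_{\ell_1}=\beta$ and $\bar\la_\star=1-\beta$, and bound the two Gaussian distances by treating $\X_0$ as a $k^2$-sparse vector in $\R^{\md^2}$ and as a rank-$r$ matrix via the cited tables. Your logarithm bookkeeping ($2k^2\log\frac{e\md^2}{k^2}\le 4k^2\log\frac{e\md}{k}$) is exactly the step the paper leaves implicit, and your observation that Corollary~\ref{up bound cor} is not needed matches the paper's direct use of Theorem~\ref{up bound thm}.
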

\begin{proof}
To apply Theorem \ref{up bound thm}, we will choose $\alpha_{\ell_1}=\sqrt{4\log\frac{\md}{k}}$ and $\alpha_\star=2\sqrt{\md}$. $\X_0$ is effectively an (at most) $k^2$ sparse vector of size $\md^2$. Hence, $\alpha_{\ell_1}=\sqrt{2\log\frac{\md^2}{k^2}}$ and $\Db(\alpha_{\ell_1}\|\X_0\|_1)^2\leq 4k^2\log\frac{e\md}{k}$.

Now, for the choice of $\alpha_{\star}$, we have, $\Db(\alpha_{\star}\|\X_0\|_\star)^2\leq 6\md r+2\md$. Observe that $\alpha_{\ell_1}^{-1}\la_{\ell_1}=\frac{\beta}{2}$, $\alpha_{\star}^{-1}\la_{\star}=\frac{1-\beta}{2}$ and apply Theorem \ref{up bound thm} to conclude.
\end{proof}

}

\section{General Simultaneously Structured Model Recovery} \label{sec:general}
Recall the setup from Section \ref{setup} where we consider a vector $\x_0\in\R^n$ whose structures are associated with a family of norms $\{\nor_{(i)}\}_{i=1}^\tau$ and $\x_0$ satisfies the cone constraint $\x_0\in\Cc$. 
This section is dedicated to the proofs of theorems in Section \ref{generalres} and additional side results where the goal is to find lower bounds on the required number of measurements to recover $\x_0$. 

The following definitions will be helpful for the rest of our discussion. For a subspace $M$, denote its orthogonal complement by ${M^\perp}$. For a convex set $M$ and a point $\x$, we define the projection operator as
\beq\Pc_M(\x) =  \arg\min_{\ub\in M}    \twonorm{\x-\ub}\, .\nn\eeq
Given a cone $\Cc$, denote its dual cone by $\Cc^*$ and polar cone by $\Cp=-\Cc^*$, where $\Cc^*$ is defined as
\beq
\Cc^*=\{\z\big|\li\z,\vv\ri\geq 0~\text{for all}~\vv\in \Cc\}\,.\nn
\eeq

\subsection{Preliminary Lemmas} \label{sec:opt_thm}
We first show that the objective function $\max_{1\leq i\leq \tau}\frac{\norm{\x}_{(i)}}{\norm{\x_0}_{(i)}}$ can be viewed as the `best' among the functions mentioned in \eqref{rec_class} for recovery of $\x_0$.
\begin{lemma}\label{best opt}
Consider the class of recovery programs in \eqref{rec_class}. If the program
\begin{equation}\label{bestone}
\begin{array}{ll}
\underset{\x\in \Cc}{\mbox{minimize}} & f\best(\x) \triangleq \max_{i=1,\ldots,\tau} \frac{\norm{\x}_{(i)}}{\norm{\x_0}_{(i)}}\\
\mbox{subject to} & \Mec(\x)=\Mec(\x_0)
\end{array}
\end{equation}
fails to recover $\x_0$, then any member of this class will also fail to recover $\x_0$.
\end{lemma}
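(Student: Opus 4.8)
The plan is to show the contrapositive-style implication directly: assuming some $f(\x)=h(\|\x\|_{(1)},\dots,\|\x\|_{(\tau)})$ in the class \eqref{rec_class} \emph{does} recover $\x_0$, deduce that $f\best$ recovers $\x_0$ as well. So suppose, toward a contradiction, that $f\best$ fails: there is a feasible $\x'\neq \x_0$ with $\Mec(\x')=\Mec(\x_0)$, $\x'\in\Cc$, and $f\best(\x')\leq f\best(\x_0)=1$. The key observation is that $f\best(\x')\leq 1$ means $\|\x'\|_{(i)}\leq \|\x_0\|_{(i)}$ for \emph{every} $i=1,\dots,\tau$ simultaneously. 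In other words, a failure of $f\best$ produces a competitor that is coordinate-wise no worse than $\x_0$ in the vector of norms — which is exactly the statement that $\x_0$ is not Pareto optimal (Definition \ref{rec_ability}).

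The second step is to feed this $\x'$ into an arbitrary $f=h(\|\cdot\|_{(1)},\dots,\|\cdot\|_{(\tau)})$ from the class. Since $h:\R_+^\tau\to\R_+$ is non-decreasing in each argument, from $\|\x'\|_{(i)}\leq \|\x_0\|_{(i)}$ for all $i$ we get
\beq
f(\x')=h(\|\x'\|_{(1)},\dots,\|\x'\|_{(\tau)})\leq h(\|\x_0\|_{(1)},\dots,\|\x_0\|_{(\tau)})=f(\x_0).\nn
\eeq
Thus $\x'$ is feasible for \eqref{rec_class} (same measurement constraint, same cone constraint) with objective value no larger than that of $\x_0$, so $\x_0$ is not the unique minimizer; hence \eqref{rec_class} with this $f$ also fails to recover $\x_0$. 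Since $f$ was arbitrary, every member of the class fails, which is the claimed contrapositive of the lemma.

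There is essentially no hard analytic step here; the only point requiring a little care is the notion of ``recovery'' / ``failure'' being used. If ``recovery'' means $\x_0$ is the \emph{unique} optimal solution, then ``$f\best$ fails'' gives either a distinct minimizer or a strictly better point, and in both cases the inequality $f(\x')\le f(\x_0)$ above shows $\x_0$ is not the unique minimizer of $f$ either. If instead one wants strict failure (some $\x'$ with strictly smaller objective), one uses the ``strictly increasing in at least one coordinate'' clause in the definition of $h$ together with a strict inequality in one coordinate; the monotonicity argument adapts verbatim. The main thing to be careful about is simply that $f\best$ itself is a legitimate member of the class (it is convex, being a max of norms scaled by positive constants, and non-decreasing in each $\|\x\|_{(i)}$), so the lemma is really the statement that this particular member is the least demanding one. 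I would state the monotonicity step as the crux and keep the rest as a one-line verification.
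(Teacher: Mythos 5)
Your argument is exactly the paper's proof: extract from $f\best(\x')\le f\best(\x_0)=1$ the coordinate-wise inequalities $\|\x'\|_{(i)}\le\|\x_0\|_{(i)}$, then push them through the monotonicity of $h$ to get $f(\x')\le f(\x_0)$ for every member of the class. Your added remarks on uniqueness versus strict failure and on $f\best$ itself belonging to the class are sensible clarifications but do not change the route.
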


\begin{proof}
Suppose \eqref{bestone} does not have $\x_0$ as an optimal solution and there exists $\x'$ such that $f\best(\x')\leq f\best(\x_0)$, then
\[
\frac{1}{\norm{\x_0}_{(i)}}\|\x'\|_{(i)}\leq f\best(\x')\leq f\best(\x_0)=1, \quad \mbox{for}~ i=1,\ldots, \tau,
\]
which implies,
\beq\label{beat}
\|\x'\|_{(i)}\leq\|\x_0\|_{(i)}, \quad \mbox{for all}~ i=1,\ldots, \tau.
\eeq
Conversely, given \eqref{beat}, we have $f\best(\x')\leq f\best(\x_0)$ from the definition of $f\best$.

Furthermore, since we assume $h(\cdot)$ in \eqref{rec_class} is non-decreasing in its arguments and increasing in at least one of them, \eqref{beat} implies $f(\x')\leq f(\x_0)$ for any such function $f(\cdot)$. Thus, failure of $f\best(\cdot)$ in recovery of $\x_0$ implies failure of any other function in \eqref{rec_class} in this task.
\end{proof}
The following lemma gives necessary conditions for $\x_0$ to be a minimizer of the problem \eqref{rec_class}.

\begin{lemma} \label{kkt_max}
If $\x_0$ is a minimizer of the program \eqref{rec_class}, then there exist $\vb \in\Cc^*$, $\z$, and $\g\in\pa f(\x_0)$ such that
\[
\g-\vb-\Meb^T\z=0 \quad \mbox{and} \quad \li\x_0,\vb \ri=0.
\]
\end{lemma}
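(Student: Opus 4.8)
The plan is to write down the KKT conditions for the convex program \eqref{rec_class}, which is a constrained minimization of the convex function $f$ over the intersection of the affine set $\{\x : \Mec(\x)=\Mec(\x_0)\}$ with the cone $\Cc$. First I would rewrite the problem as an unconstrained minimization of $f(\x) + \iota_{\Cc}(\x) + \iota_{\{\Mec(\x)=\Mec(\x_0)\}}(\x)$, where $\iota_S$ denotes the indicator function of a convex set $S$. Since $\x_0$ is assumed a minimizer, the subdifferential of this sum must contain the origin at $\x_0$. Under a standard constraint qualification (the affine constraint and the cone have a point in their relative interior, which holds since $\x_0$ itself is feasible and the constraints are polyhedral/conic), the subdifferential of the sum splits as the sum of subdifferentials, so
\[
0 \in \pa f(\x_0) + N_{\Cc}(\x_0) + N_{\{\Mec(\x)=\Mec(\x_0)\}}(\x_0),
\]
where $N_S(\x_0)$ is the normal cone of $S$ at $\x_0$.

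Next I would identify each normal cone explicitly. The normal cone of the affine set $\{\x:\Meb\x=\Meb\x_0\}$ at any feasible point is the range of $\Meb^T$, i.e.\ $\{-\Meb^T\z : \z\in\R^m\}$ (the sign is a matter of convention; I will pick it to match the statement). The normal cone of the cone $\Cc$ at $\x_0\in\Cc$ is $N_{\Cc}(\x_0)=\{-\vb : \vb\in\Cc^*,\ \li\x_0,\vb\ri=0\}$; this is the standard description of the normal cone of a convex cone in terms of its dual cone together with the complementary slackness condition $\li\x_0,\vb\ri=0$. Substituting these two descriptions into the inclusion $0\in\pa f(\x_0)+N_{\Cc}(\x_0)+N_{\text{affine}}(\x_0)$ yields exactly: there exist $\g\in\pa f(\x_0)$, $\vb\in\Cc^*$ with $\li\x_0,\vb\ri=0$, and $\z\in\R^m$ such that $\g-\vb-\Meb^T\z=0$, which is the claimed conclusion.

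The only genuinely delicate point is the constraint qualification needed to split the subdifferential of the sum into the sum of subdifferentials (equivalently, to apply strong duality / a Slater-type condition). Here it is essentially automatic: the feasible set is the intersection of a subspace-translate with a convex cone, both of which are "nice" (the affine constraint is trivially regular, and one can either invoke polyhedrality when $\Cc$ is polyhedral or, in general, note that $\x_0$ is a feasible point so $\relint$ conditions are met whenever $\x_0$ is chosen in the relative interior of $\Cc$; for the PSD cone one argues directly). I would state this as a one-line remark rather than belabor it. Everything else is a direct translation of the first-order optimality condition, so the lemma follows.
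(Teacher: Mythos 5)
Your proof is correct and is essentially the paper's own argument: the paper disposes of this lemma in a single line by invoking the KKT conditions for \eqref{rec_class} (citing Bertsekas, Section 4.7), and your derivation via $0\in\pa f(\x_0)+N_{\Cc}(\x_0)+N_{\{\x:\,\Meb\x=\Meb\x_0\}}(\x_0)$, together with the standard identifications $N_{\{\x:\,\Meb\x=\Meb\x_0\}}(\x_0)=\rng(\Meb^T)$ and $N_{\Cc}(\x_0)=\{-\vb:\vb\in\Cc^*,\ \li\x_0,\vb\ri=0\}$, is exactly what that citation unpacks to. The constraint-qualification caveat you flag for splitting the normal cone of the intersection is a legitimate technical point, but the paper leaves it equally implicit, so your write-up is if anything more careful than the original.
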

The proof of Lemma \ref{kkt_max} follows from the KKT conditions for \eqref{rec_class} to have $\x_0$ as an optimal solution \cite[Section 4.7]{Bertse}.

The next lemma describes the subdifferential of any general function $f(\x)= h(\|\x\|_{(1)},\dots,\|\x\|_{(\tau)})$ as discussed in Section \ref{recoversec}.
\begin{lemma} \label{subdiff_chain}
For any subgradient of the function $f(\x)= h(\|\x\|_{(1)},\dots,\|\x\|_{(\tau)})$ at $\x\neq 0$ defined by convex function $h(\cdot)$, there exists non-negative constants $w_i$, $i=1,\ldots,\tau$ such that
\begin{eqnarray*}
\g = \sum_{i=1}^\tau w_i \g_i
\end{eqnarray*}
where $\g_i \in \pa \norm{\x_0}_{(i)}\,$.
\end{lemma}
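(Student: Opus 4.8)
The plan is to read $f$ as the composition $f = h\circ G$, where $G:\R^n\to\R^\tau$ is the vector-valued map $G(\x)=(\|\x\|_{(1)},\dots,\|\x\|_{(\tau)})$, each coordinate of which is a convex, finite-valued function, and $h:\R^\tau_+\to\R_+$ is convex and non-decreasing in each argument. The identity we want is precisely the subdifferential chain rule for such compositions, so the proof reduces to invoking that rule and checking that its hypotheses hold here.

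First I would fix $\g\in\pa f(\x_0)$ with $\x_0\neq 0$. Since the $\|\cdot\|_{(i)}$ are norms and $\x_0\neq 0$, the point $\ub_0 := G(\x_0)$ lies in the interior of $\R^\tau_+$; moreover $G$ is everywhere finite and continuous, and $h$ is finite and continuous near $\ub_0$. Under these conditions the chain rule holds with equality (the inclusion $\supseteq$ being elementary, and $\subseteq$ — the direction we actually need — holding with no extra qualification), giving
\[
\pa f(\x_0)\;=\;\bigcup_{\w\in\pa h(\ub_0)}\pa\Big(\textstyle\sum_{i=1}^\tau w_i\,\|\cdot\|_{(i)}\Big)(\x_0).
\]
Hence there is some $\w=(w_1,\dots,w_\tau)\in\pa h(\ub_0)$ with $\g\in\pa\big(\sum_i w_i\|\cdot\|_{(i)}\big)(\x_0)$.

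Second I would verify $w_i\geq 0$ for every $i$. Because $\ub_0$ is interior to $\R^\tau_+$, for small $t>0$ the point $\ub_0-t\e_i$ still lies in the domain of $h$, and the subgradient inequality $h(\ub_0-t\e_i)\geq h(\ub_0)-t\,w_i$ together with the monotonicity $h(\ub_0-t\e_i)\leq h(\ub_0)$ forces $t\,w_i\geq h(\ub_0)-h(\ub_0-t\e_i)\geq 0$, i.e. $w_i\geq 0$. Finally, since each $\|\cdot\|_{(i)}$ is finite-valued and convex and the weights are non-negative, the Moreau--Rockafellar sum rule yields $\pa\big(\sum_i w_i\|\cdot\|_{(i)}\big)(\x_0)=\sum_i w_i\,\pa\|\x_0\|_{(i)}$, so $\g=\sum_{i=1}^\tau w_i\g_i$ with $\g_i\in\pa\|\x_0\|_{(i)}$ and $w_i\geq 0$, as claimed.

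The only step needing any care is the first one — justifying that the chain-rule identity holds with equality — and it is routine here because everything in sight is real-valued and continuous and $\ub_0=G(\x_0)$ sits in the interior of the orthant on which $h$ is monotone; I would cite a standard convex-analysis reference for this rather than reprove it. A fully self-contained alternative would write $h$ as a supremum of affine minorants with non-negative linear parts (using monotonicity), thereby exhibiting $f$ as a supremum of convex functions, and then apply the formula for the subdifferential of a pointwise supremum; this is workable but more cumbersome than quoting the composition rule, so I would not take that route.
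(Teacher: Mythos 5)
Your proposal is correct and follows essentially the same route as the paper: the paper also writes $f$ as the composition of $h$ with the vector of norms, invokes the convex chain rule (Theorem 10.49 of Rockafellar--Wets) to get $\pa f(\x_0)=\bigcup\{\pa(\y^T N(\x_0)):\y\in\pa h(N(\x_0))\}$, deduces non-negativity of the weights from the monotonicity of $h$, and finishes with the sum rule for conic combinations of norms. Your write-up merely makes two points more explicit than the paper does — the interiority of $G(\x_0)$ in $\R^\tau_+$ justifying equality in the chain rule, and the subgradient-inequality argument for $w_i\geq 0$ — which is a harmless (indeed welcome) refinement of the same argument.
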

\begin{proof}
Consider the function $N(\x) = \bmat \|\x\|_{(1)}, & \dots, & \|\x\|_{(\tau)} \emat^T$ by which we have $f(\x) = h(N(\x))$. By Theorem 10.49 in \cite{rock_var} we have
\[
\pa f(\x) = \bigcup \left\{ \pa (\y^T N(\x)) :\; \y\in \pa h(N(\x)) \right\} \,
\]
where we used the convexity of $f$ and $h$. Now notice that any $\y\in \pa h(N(\x))$ is a non-negative vector because of the monotonicity assumption on $h(\cdot)$. This implies that any subgradient $\g\in \pa f(\x)$ is in the form of $\pa (\w^T N(\x))$ for some nonnegative vector $\w$. The desired result simply follows because subgradients of conic combination of norms are conic combinations of their subgradients, (see e.g. \cite{rock}).
\end{proof}

Using Lemmas \ref{kkt_max} and \ref{subdiff_chain}, we now provide the proofs of Theorems \ref{bound} and \ref{main31}.

\subsection{Proof of Theorem \ref{bound}}

We prove the more general version of Theorem \ref{bound}, which can take care of the cone constraint and alignment of subgradients over arbitrary subspaces. This will require us to extend the definition of $\corr$ to handle subspaces. For a linear subspace $\Rc\in\R^n$ and a set $S\in\R^n$, we define,
\beq
\ang(\Rc,S)=\inf_{0\neq \s\in S} \frac{\|\Pc_\Rc(\s)\|_2}{\|\s\|_2}.\nn
\eeq

\begin{proposition} \label{bound2}Let $\rsing(\Ab^T)=\inf_{\|\z\|_2=1}\frac{\twonorm{\Pc_\Cc(\Meb^T\z)}}{\twonorm{\Meb^T\z}}$. Let $\cal {R}$ be an arbitrary linear subspace orthogonal to the following cone,
\beq
\{\y\in\R^n\big|\x_0^T\y=0,~\y\in\Cc^*\} \,.\label{bad cone}
\eeq
Suppose,
\beqas	\label{lowb}
\ang(\Rc,\pa f(\x_0)):=\inf_{\g\in\pa f(\x_0)} \frac{\|\Pc_\Rc(\g)\|_2}{\|\g\|_2}>\frac{\sigma_{\max}(\Pc_\Rc(\Meb^T))}{\rsing(\Meb^T)\sigma_{\min}(\Meb^T)} \,.
\eeqas
Then, $\x_0$ is not a minimizer of \eqref{rec_class}.
\end{proposition}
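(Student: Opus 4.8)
The plan is to argue by contradiction: assume $\x_0$ \emph{is} a minimizer of \eqref{rec_class} and exhibit a subgradient $\g\in\pa f(\x_0)$ whose normalized projection onto $\Rc$ violates the hypothesized inequality. First I would invoke the KKT conditions (Lemma \ref{kkt_max}): there exist $\vb\in\Cc^*$, $\z\in\R^m$, and $\g\in\pa f(\x_0)$ with $\g=\vb+\Meb^T\z$ and $\li\x_0,\vb\ri=0$. The two facts $\vb\in\Cc^*$ and $\li\x_0,\vb\ri=0$ say precisely that $\vb$ lies in the cone \eqref{bad cone}; since $\Rc$ is orthogonal to that cone, $\vb\perp\Rc$, so $\Pc_\Rc(\vb)=0$, and by linearity of the projection onto a subspace, $\Pc_\Rc(\g)=\Pc_\Rc(\Meb^T\z)$. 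Therefore $\|\Pc_\Rc(\g)\|_2\le\sigma_{\max}(\Pc_\Rc(\Meb^T))\,\|\z\|_2$, which supplies the needed numerator bound.

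Next I would lower bound $\|\g\|_2$ in terms of $\|\z\|_2$. Write $\g=\Meb^T\z-\p$ with $\p:=-\vb$; since $\vb\in\Cc^*$ we have $\p\in-\Cc^*=\Cp$, the polar cone. The key step is the elementary inequality $\|\w-\p\|_2\ge\|\Pc_\Cc(\w)\|_2$, valid for every $\w\in\R^n$ and every $\p\in\Cp$: using Moreau's decomposition $\w=\Pc_\Cc(\w)+\Pc_{\Cp}(\w)$ into orthogonal summands and the fact that $\li\Pc_\Cc(\w),\p\ri\le0$ (because $\Pc_\Cc(\w)\in\Cc$ and $\p\in\Cp$), one expands $\|\w-\p\|_2^2=\|\Pc_\Cc(\w)\|_2^2+\|\Pc_{\Cp}(\w)-\p\|_2^2+2\li\Pc_\Cc(\w),\Pc_{\Cp}(\w)-\p\ri$ and observes the cross term is nonnegative. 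Applying this with $\w=\Meb^T\z$ and then the definitions of $\rsing(\Meb^T)$ and $\sigma_{\min}(\Meb^T)$ gives
\[
\|\g\|_2=\|\Meb^T\z-\p\|_2\ \ge\ \|\Pc_\Cc(\Meb^T\z)\|_2\ \ge\ \rsing(\Meb^T)\,\|\Meb^T\z\|_2\ \ge\ \rsing(\Meb^T)\,\sigma_{\min}(\Meb^T)\,\|\z\|_2 .
\]

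Combining the two estimates finishes the proof. If $\z\neq0$ then $\|\g\|_2>0$ (one may assume $\rsing(\Meb^T),\sigma_{\min}(\Meb^T)>0$, since otherwise the right-hand side of the hypothesis is $+\infty$ and there is nothing to prove), and
\[
\ang(\Rc,\pa f(\x_0))\ \le\ \frac{\|\Pc_\Rc(\g)\|_2}{\|\g\|_2}\ \le\ \frac{\sigma_{\max}(\Pc_\Rc(\Meb^T))}{\rsing(\Meb^T)\,\sigma_{\min}(\Meb^T)},
\]
contradicting the hypothesis. If $\z=0$ then $\g=\vb$ lies in the cone \eqref{bad cone}, so $\Pc_\Rc(\g)=0$; since $\x_0\neq0$ forces $\g\neq0$, the ratio $\|\Pc_\Rc(\g)\|_2/\|\g\|_2$ equals $0$, again contradicting the hypothesis (whose right-hand side is nonnegative). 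The only real subtlety — the step I expect to be the main obstacle — is the cone bookkeeping: recognizing that the KKT multiplier $\vb$ lies in the cone \eqref{bad cone} (hence is annihilated by $\Pc_\Rc$) while $-\vb$ lies in the polar cone $\Cp$ (hence the Moreau-based norm inequality applies); once these two placements are identified, the argument is a short chain of inequalities.
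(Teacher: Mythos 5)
Your proof is correct and follows essentially the same route as the paper's: KKT conditions, projection onto $\Rc$ to annihilate the multiplier $\vb$, a cone-based lower bound on $\|\g\|_2$ via $\|\Pc_\Cc(\Meb^T\z)\|_2$, and the resulting chain of inequalities. The only cosmetic difference is that you prove the intermediate inequality $\|\Meb^T\z-\p\|_2\ge\|\Pc_\Cc(\Meb^T\z)\|_2$ directly by expanding the square with Moreau's decomposition, whereas the paper derives it from nonexpansiveness of the projection (its Corollary on cones); your explicit treatment of the degenerate cases $\z=0$ and $\sigma_{\min}(\Meb^T)=0$ is a welcome addition the paper leaves implicit.
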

\begin{proof}
Suppose $\x_0$ is a minimizer of \eqref{rec_class}. From Lemma \ref{kkt_max}, there exist a $\g\in \pa f(\x_0)$, $\z\in\R^m$ and $\vb\in\Cc^*$ such that
\beqa \label{exist_g}
\g=\Meb^T\z + \vb
\eeqa
and $\li\x_0,\vb\ri=0\,$. We will first eliminate the contribution of $\vb$ in equation \eqref{exist_g}. Projecting both sides of \eqref{exist_g} onto the subspace $\tcs$ gives,
\beq
\Pc_\Rc(\g)=\Pc_\Rc(\Meb^T\z)=\Pc_\Rc(\Meb^T)\z\label{look here}
\eeq
Taking the $\ell_2$ norms,
\beq
\|\Pc_\Rc(\g)\|_2=\|\Pc_\Rc(\Meb^T)\z\|_2\leq \sigma_{\max}(\Pc_\Rc(\Meb^T))\|\z\|_2\label{nosoln}.
\eeq

Since $\vb\in\Cc^*$, from Lemma \ref{decamp} we have $\Pc_\Cc(-\vb)=\Pc_\Cc(\Meb^T\z-\g)=0$. Using Corollary \ref{wanter},
\beq
\|\g\|_2\geq \|\Pc_\Cc(\Meb^T\z)\|_2\label{lab4}.
\eeq
From the initial assumption, for any $\z\in\R^m$, we have,
\beq
\rsing(\Meb^T)\twonorm{\Meb^T\z} 	\leq 	  \twonorm{\Pc_\Cc(\Meb^T\z)}\label{lab1}
\eeq
Combining \eqref{lab4} and \eqref{lab1} yields $\|\g\|_2\geq \rsing(\Meb^T)\|\Meb^T\z\|_2$.
Further incorporating \eqref{nosoln}, we find,
\beqa \nn\label{z_bounds}
\frac{\|\Pc_\Rc(\g)\|_2}{\sigma_{\max}(\Pc_\Rc(\Meb^T))}  	\leq \twonorm{\z} \leq	 \frac{\|\Meb^T\z\|_2 }{\sigma_{\min}(\Meb^T)}\leq  \frac{ \twonorm{\g} }{\rsing(\Meb^T)\sigma_{\min}(\Meb^T)}\,.
\eeqa

Hence, if $\x_0$ is recoverable, there exists $\g\in\pa f(\x_0)$ satisfying,
\beq
\frac{\|\Pc_\Rc(\g)\|_2 }{ \twonorm{\g} } \leq  \frac{\sigma_{\max}(\Pc_\Rc(\Meb^T))}{\rsing(\Meb^T)\sigma_{\min}(\Meb^T)}\,.\nn
\eeq
\end{proof}

To obtain Theorem \ref{bound}, choose ${\cal{R}}=\text{span}(\{\x_0\})$ and $\Cc=\R^n$. This choice of ${\cal{R}}$ yields $\sigma_{\max}(\Pc_\Rc(\Meb^T))=\|\bx_0\bx_0^T\Meb^T\|_2=\|\Meb\bx_0\|_2$ and $\|\Pc_{\Rc}(\g)\|_2=|\bx_0^T\g|$. Choice of $\Cc=\R^n$ yields $\rsing(\Ab)=1$. Also note that, for any choice of $\Cc$, $\x_0$ is orthogonal to \eqref{bad cone} by definition.

\samet{

\subsection{Proof of Theorem \ref{main31}}

Rotational invariance of Gaussian measurements allow us to make full use of Proposition \ref{bound2}. The following is a generalization of Theorem \ref{main31}.

\begin{proposition}\label{g main31} Consider the setup in Proposition \ref{bound2} where $\Meb$ has i.i.d $\Nc(0,1)$ entries. Let,
\beq
m_{low}=\frac{n(1-\bDb(\Cc))\ang(\Rc,\pa f(\x_0))^2}{100},\nn
\eeq
and suppose $\dim(\Rc)\leq m_{low}$. Then, whenever $m\leq m_{low}$, with probability $1-10\exp(-\frac{1}{16}\min\{m_{low},(1-\bDb(\Cc))^2n\})$, \eqref{rec_class} will fail for all functions $f(\cdot)$. 

\end{proposition}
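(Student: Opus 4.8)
The plan is to derive Proposition~\ref{g main31} from the deterministic failure criterion of Proposition~\ref{bound2} by controlling, with the stated probability, the three random quantities it involves. Recall that Proposition~\ref{bound2} certifies that $\x_0$ is not a minimizer of \eqref{rec_class} as soon as
\[
\ang(\Rc,\pa f(\x_0)) \;>\; \frac{\sigma_{\max}(\Pc_\Rc(\Meb^T))}{\rsing(\Meb^T)\,\sigma_{\min}(\Meb^T)} .
\]
The left-hand side is deterministic --- it is exactly the factor appearing in $m_{low}$ --- so everything reduces to showing that, under $m\le m_{low}$ and $\dim(\Rc)\le m_{low}$, the right-hand side falls below $\ang(\Rc,\pa f(\x_0))$ with probability at least $1-10\exp(-\tfrac{1}{16}\min\{m_{low},(1-\bDb(\Cc))^2n\})$. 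I would handle the two ordinary singular values first and then spend the real effort on $\rsing(\Meb^T)$.

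\textbf{Singular values.} By rotational invariance of the i.i.d.\ Gaussian law, $\Pc_\Rc(\Meb^T)$ is (an isometric copy of) a $\dim(\Rc)\times m$ standard Gaussian matrix and $\Meb^T$ is an $n\times m$ standard Gaussian matrix with $m\le n$, so classical non-asymptotic bounds (e.g.\ \cite{Vershynin}) give $\sigma_{\max}(\Pc_\Rc(\Meb^T))\le \sqrt{\dim(\Rc)}+\sqrt m+t$ and $\sigma_{\min}(\Meb^T)\ge\sqrt n-\sqrt m-t$, each with probability $\ge 1-2\exp(-t^2/2)$. Since $\ang(\cdot,\cdot)\le 1$ and $1-\bDb(\Cc)\le 1$ force $m_{low}\le n/100$, the hypotheses $m,\dim(\Rc)\le m_{low}$ give $\sqrt m,\sqrt{\dim(\Rc)}\le\sqrt{m_{low}}$ and $\sqrt m\le\tfrac{1}{10}\sqrt n$; choosing $t$ a suitable multiple of $\sqrt{m_{low}}$ and of $(1-\bDb(\Cc))\sqrt n$ then yields, off an event whose probability is absorbed into the target bound, $\sigma_{\max}(\Pc_\Rc(\Meb^T))\le 3\sqrt{m_{low}}$ and $\sigma_{\min}(\Meb^T)\ge\tfrac{4}{5}\sqrt n$.

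\textbf{The cone term $\rsing(\Meb^T)$.} This is the heart of the proof. By Moreau's decomposition for the cone $\Cc$ (so $\twonorm{\Pc_\Cc(\w)}^2+\text{dist}(\w,\Cc)^2=\twonorm{\w}^2$), together with $\text{dist}(\w,\Cc)=\twonorm{\Pc_{\Cc^\circ}(\w)}$ and $\text{range}(\Meb^T)=\text{null}(\Meb)^\perp$, one gets the clean reformulation
\[
\rsing(\Meb^T) \;=\; \ang\bigl(\text{null}(\Meb),\,\Cc^\circ\bigr)\;=\;\sqrt{1-\sup\{\twonorm{\Pc_{\text{range}(\Meb^T)}(\ub)}^2:\;\ub\in\Cc^\circ,\ \twonorm{\ub}=1\}},
\]
i.e.\ $\rsing(\Meb^T)$ measures how far the polar cone $\Cc^\circ$ sits from the (random, $m$-dimensional) range of $\Meb^T$. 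I would lower-bound this via a Gaussian comparison inequality (Gordon's theorem, in the spirit of the comparisons used in \cite{chandra,ALMT}) applied to the bilinear form $\z^T\Meb\ub$, using that $\Db(\Cc)$ equals the Gaussian width of $\Cc^\circ\cap B$ with $B$ the unit ball (Definition~\ref{cwidth}); this produces, with probability $\ge 1-C\exp(-c\min\{m_{low},(1-\bDb(\Cc))^2n\})$,
\[
\rsing(\Meb^T)^2 \;\ge\; c'\,\bigl(1-\bDb(\Cc)\bigr).
\]
The $(1-\bDb(\Cc))^2 n$ term in the exponent is precisely what makes this degrade to a (correctly) vacuous statement when $\bDb(\Cc)$ approaches $1$ --- i.e.\ when the cone constraint is so weak that recovery genuinely requires many measurements.

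\textbf{Assembly and the obstacle.} On the intersection of the three events, plug $m_{low}=\tfrac{1}{100}\,n(1-\bDb(\Cc))\,\ang(\Rc,\pa f(\x_0))^2$ into $3\sqrt{m_{low}}$ and use $\sigma_{\min}(\Meb^T)\ge\tfrac45\sqrt n$ and $\rsing(\Meb^T)\ge\sqrt{c'(1-\bDb(\Cc))}$; the ratio in Proposition~\ref{bound2} then collapses to an absolute constant --- made $<1$ by the choice of the prefactor $\tfrac{1}{100}$ --- times $\ang(\Rc,\pa f(\x_0))$, so the criterion is met and no $f$ in the class \eqref{rec_class} recovers $\x_0$. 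A union bound over the three failure events gives the stated probability. Theorem~\ref{main31} is the special case $\Rc=\text{span}(\{\x_0\})$, for which $\ang(\Rc,\pa f(\x_0))=\ang(\x_0,\pa f(\x_0))\ge\kappa_{\min}$ by Proposition~\ref{low subgrad} and $\dim(\Rc)=1\le m_{low}$. The one genuinely delicate step is the bound $\rsing(\Meb^T)^2\gtrsim 1-\bDb(\Cc)$: a crude estimate of $\sup_{\ub}\twonorm{\Pc_{\text{range}(\Meb^T)}(\ub)}$ (such as a naive Chevet-type bound $\le\bDb(\Cc)+\sqrt{m/n}$) is too lossy when $\bDb(\Cc)$ is close to $1$, and one must instead extract the correct \emph{unsquared} factor $1-\bDb(\Cc)$ --- matching the form of $m_{low}$ --- while keeping the failure probability of the comparison argument within the $\exp(-\tfrac{1}{16}(1-\bDb(\Cc))^2 n)$ budget.
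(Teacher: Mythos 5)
Your proposal follows the paper's proof essentially step for step: it reduces to the deterministic criterion of Proposition~\ref{bound2}, controls $\sigma_{\max}(\Pc_\Rc(\Meb^T))$ and $\sigma_{\min}(\Meb^T)$ by standard Gaussian concentration, and lower-bounds the cone-restricted singular value $\rsing(\Meb^T)$ by a Gordon-type escape argument whose width estimate rests on Moreau's decomposition and $\bDb(\Cc)$ --- which is exactly the content of the paper's Theorem~\ref{proj_norm}, yielding $\rsing^2(\Meb^T)\geq\frac{1-\bDb(\Cc)}{8}$. The only difference is bookkeeping of absolute constants, which you correctly note is absorbed by the prefactor $\tfrac{1}{100}$ in $m_{low}$.
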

\begin{proof}
More measurements can only increase the chance of success. Hence, without losing generality, assume $m=m_{low}$ and $\text{dim}(\Rc)\leq m$. The result will follow from Proposition \ref{bound2}. Recall that $m\leq \frac{(1-\bDb(\Cc)) n}{100}$.

\begin{itemize}
\item $\Pc_\Rc(\Meb^T)$ is statistically identical to a $\text{dim}(\Rc)\times m$ matrix with i.i.d. $\Nc(0,1)$ entries under proper unitary rotation. Hence, using Corollary 5.35 of \cite{Vershynin}, with probability $1-2\exp(-\frac{m}{8})$, $\sigma_{\max}(\Pc_\Rc(\Meb^T))\leq 1.5\sqrt{m}+\sqrt{\text{dim}(\Rc)}\leq 2.5\sqrt{m}$. With the same probability, $\sigma_{\min}(\Meb^T)\geq \sqrt{n}-1.5\sqrt{m}$.

\item From Theorem \ref{proj_norm}, using $m\leq \frac{(1-\bDb(\Cc))n}{100}$, with probability $1-6\exp(-\frac{(1-\bDb(\Cc))^2n}{16})$, $\rsing^2(\Meb^T)\geq \frac{1-\bDb(\Cc)}{4(1+\bDb(\Cc))}\geq \frac{1-\bDb(\Cc)}{8}$.

\end{itemize}
 Since $\frac{m}{n}\leq \frac{1}{30}$, combining these, with the desired probability,
\beq
\frac{ \sigma_{\max}(\Pc_\Rc(\Meb^T))}{\rsing(\Meb^T)\sigma_{\min}(\Meb^T)}\leq\sqrt{\frac{8}{1-\bDb(\Cc)}} \frac{2.5\sqrt{m}}{\sqrt{n}-1.5\sqrt{m}}< \frac{10\sqrt{m}}{\sqrt{(1-\bDb(\Cc))n}}.\nn
\eeq
Finally, using Proposition \ref{bound2} and $m\leq \frac{n(1-\bDb(\Cc))}{100} \ang(\Rc,\pa f(\x_0))^2$, with the same probability \eqref{rec_class} fails.
\end{proof}


To achieve Theorem \ref{main31}, choose $\Rc=\text{span}(\{\x_0\})$ and use the first statement of Proposition \ref{low subgrad}.

To achieve Corollary \ref{maincor3}, choose $\Rc=\text{span}(\{\x_0\})$ and use the second statement of Proposition \ref{low subgrad}.

}

\subsection{Enhanced lower bounds}\label{decomposable}

From our initial results, it may look like our lower bounds are suboptimal. For instance, considering only $\ell_1$ norm, $\kappa=\frac{\|\bx_0\|_1}{\sqrt{n}}$ lies between $\frac{1}{\sqrt{n}}$ and $\sqrt{\frac{k}{n}}$ for a $k$ sparse signal. Combined with Theorem \ref{main31}, this would give a lower bound of $\|\bx_0\|_1^2$ measurements. On the other hand, clearly, we need at least ${\cal{O}}(k)$ measurements to estimate a $k$ sparse vector. 

Indeed, Proposition \ref{bound2} gives such a bound with a better choice of $\Rc$. In particular, let us choose $\Rc=\text{span}(\{\text{sign}(\x_0)\})$. For any $\g\in\pa \|\x_0\|_1$, we have that,
\beq
\frac{\li\g,\frac{\text{sign}(\x_0)}{\sqrt{k}}\ri}{L}=\sqrt{\frac{k}{n}}\implies \ang(\text{sign}(\x_0),\pa \|\x_0\|_1)=\sqrt{\frac{k}{n}}\nn
\eeq 
Hence, we immediately have $m\geq \order{k}$ as a lower bound. The idea of choosing such sign vectors can be generalized to the so-called decomposable norms.

\begin{definition}[Decomposable Norm] \label{defdec}
A norm $\norm{\cdot}$ is decomposable at $\x\in\R^n$ if there exist a
subspace $T\subset \R^n$ and a vector $\e\in T$ such that the subdifferential at $\x$ has the form
	\begin{equation}
	\pa \norm{\x}=\{\z\in\R^n \;:\; \Pc_T(\z)=\e \; , \; \norm{\mathcal{P}_{T^\perp}(\z)}^*\leq 1\}.
\label{maindec}\nn
\end{equation}
We refer to $T$ as the \emph{support} and $\e$ as the \emph{sign vector} of $\x$ with respect to $\nor \, $.
\end{definition}

Similar definitions are used in \cite{candes_recht12} and \cite{wright12}. Our definition is simpler and less strict compared to these works. Note that $L$ is a global property of the norm while $\e$ and $T$ depend on both the norm and the point under consideration (decomposability is a local property in this sense).


To give some intuition for Definition \ref{defdec}, we review examples of norms that arise when considering simultaneously sparse and low rank matrices. For a matrix $\X\in \R^{\md_1\times \md_2}$, let $\X_{i,j}$, $\X_{i,.}$ and $\X_{.,j}$ denote its $(i,j)$ entry, $i$th row and $j$th column respectively.
\begin{lemma}[see \cite{candes_recht12}] \label{lemdecomp}The $\ell_1$ norm, the $\ell_{1,2}$ norm and the nuclear norm are decomposable as follows.\vspace{2pt}\\
{\bf{$\bullet$ {$\mathbf{\ell_1}$ norm}}} is decomposable at every $\x\in\R^n$, with sign $\e=\sgn{\x}\,$, and support as
    \[
    T=\supp{\x}=\left\{ \y\in \R^n :\;\; \x_i=0 \;\Rightarrow\; \y_i =0 \;\; \text{for } i=1,\ldots,n   \right\} \,.
    \]

\noindent {\bf{$\bullet$ {$\mathbf{\ell_{1,2}}$ norm}}} is decomposable at every $\X\in\R^{\md_1\times \md_2}$. The support is
    \[
    T= \left\{ \Y\in\R^{\md_1\times \md_2} : \;\; \X_{.,i}=\mathbf{0} \;\Rightarrow\;  \Y_{.,i}=\mathbf{0} \;\; \text{for } i=1,\ldots,\md_2   \right\},
    \]
     and the sign vector $\e\in \R^{\md_1\times \md_2}$ is obtained by normalizing the columns of $\X$ present in the support, $\e_{.,j} = \frac{\X_{.,j}}{\twonorm{\X_{.,j}}} \quad \text{if}\;\; \twonorm{\X_{.,j}}\neq 0$,
    and setting the rest of the columns to zero.

\vspace{5pt}
\noindent{\bf{$\bullet$ {Nuclear norm}}} is decomposable at every $\X\in\R^{\md_1\times \md_2}$. For a matrix $\X$ with rank $r$ and compact singular value decomposition $\X=\U\bSi\V^T$ where $\bSi\in\R^{r\times r}$, we have $\e = \U\V^T$ and
    \beqas
    T &= \left\{\Y\in\R^{\md_1\times \md_2} : \;\; (\Ib-\U\U^T)\Y(\Ib-\V\V^T)=\mathbf{0}  \right\}  \\
    &=\left\{ \Z_1\V^T + \U\Z_2^T \mid \Z_1 \in \R^{\md_1\times r}, \Z_2 \in \R^{\md_2\times r} \right\}.
    \eeqas
    \end{lemma}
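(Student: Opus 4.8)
The plan is to verify, for each of the three norms, that its subdifferential at the relevant point has exactly the form prescribed by Definition \ref{defdec}, by starting from the classical characterization of the subdifferential (as in \cite{watson}; see also \cite{candes_recht12}) and then reading off the subspace $T$, the sign vector $\e$, and checking the dual-norm bound on $\Pc_{T^\perp}$. The template is the same in all three cases: for a norm $\nor$ with dual $\nor^*$ one has $\g\in\pa\|\x\|$ iff $\iprod{\g}{\x}=\|\x\|$ and $\|\g\|^*\le1$; decomposability asserts that the equality constraint is equivalent to an affine constraint $\Pc_T(\g)=\e$ on a fixed subspace, while the inequality can be relaxed to $\|\Pc_{T^\perp}(\g)\|^*\le1$ because the $T$-component of $\g$ is pinned to $\e$ and contributes exactly the right amount to the dual norm. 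So for each norm I would (i) write down the known subdifferential, (ii) identify $T$ and $\e$ and check $\e\in T$, (iii) give explicit formulas for $\Pc_T$ and $\Pc_{T^\perp}$, and (iv) check that the two descriptions of the set coincide.

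For the $\ell_1$ norm this is immediate: $\pa\|\x\|_1=\{\z:\ z_i=\sgn(x_i)\text{ if }x_i\ne0,\ |z_i|\le1\text{ if }x_i=0\}$, the subspace $T=\supp{\x}$ is a coordinate subspace so $\Pc_T$ and $\Pc_{T^\perp}$ are just restrictions to the support and its complement, $\e=\sgn{\x}\in T$, and the dual norm is $\ell_\infty$, so $\|\Pc_{T^\perp}(\z)\|_\infty\le1$ is precisely $|z_i|\le1$ off the support. The $\ell_{1,2}$ norm is handled in the same way, column-block by column-block: its subdifferential consists of matrices whose columns over the support of $\X$ equal the normalized columns $\X_{.,j}/\twonorm{\X_{.,j}}$ and whose remaining columns have $\ell_2$ norm at most $1$; here $T$ is the coordinate-type subspace of matrices supported on the nonzero columns of $\X$, $\Pc_T$ and $\Pc_{T^\perp}$ again act by restricting columns, $\e$ is the column-normalized matrix, which lies in $T$, and the relevant dual norm is $\inftwo{\cdot}=\max_j\twonorm{(\cdot)_{.,j}}$, matching the column-wise bound.

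The nuclear norm is the one requiring care, and I expect it to be the main obstacle. I would start from the classical fact that, for $\X=\U\bSi\V^T$ of rank $r$, $\pa\|\X\|_\star=\{\U\V^T+\W:\ \U^T\W=\mathbf{0},\ \W\V=\mathbf{0},\ \opnorm{\W}\le1\}$, where $\opnorm{\cdot}$ is the spectral norm, the dual of the nuclear norm. The key linear-algebra step is to show that $\{\W:\U^T\W=\mathbf{0},\ \W\V=\mathbf{0}\}$ is exactly $T^\perp$ for $T=\{\Y:(\Ib-\U\U^T)\Y(\Ib-\V\V^T)=\mathbf{0}\}$ — equivalently that $T=\{\Z_1\V^T+\U\Z_2^T\}$ — and that $\R^{\md_1\times\md_2}=T\oplus T^\perp$ orthogonally, with explicit projections $\Pc_T(\Z)=\U\U^T\Z+\Z\V\V^T-\U\U^T\Z\V\V^T$ and $\Pc_{T^\perp}(\Z)=(\Ib-\U\U^T)\Z(\Ib-\V\V^T)$; this is a direct computation using $\U^T\U=\V^T\V=\Ib_r$. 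Once this is in place, $\U\V^T\in T$ since $(\Ib-\U\U^T)\U\V^T=\mathbf{0}$, so any $\g=\U\V^T+\W$ in the subdifferential has $\Pc_T(\g)=\U\V^T=\e$ and $\Pc_{T^\perp}(\g)=\W$, whence $\opnorm{\Pc_{T^\perp}(\g)}=\opnorm{\W}\le1$; conversely any $\g$ with $\Pc_T(\g)=\U\V^T$ and $\opnorm{\Pc_{T^\perp}(\g)}\le1$ equals $\U\V^T+\W$ with $\W=\Pc_{T^\perp}(\g)\in T^\perp$, i.e. $\U^T\W=\mathbf{0}$ and $\W\V=\mathbf{0}$, so it lies in $\pa\|\X\|_\star$. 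The only genuinely delicate points are checking that the two descriptions of $T$ agree (a short rank-factorization argument) and that the spectral-norm bound survives the splitting, which it does because $\U\V^T$ and $\W$ act between orthogonal row/column spaces, so $\opnorm{\U\V^T+\W}=\max(1,\opnorm{\W})$.
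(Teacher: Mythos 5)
Your proposal is correct. Note that the paper itself offers no proof of this lemma --- it is stated with a pointer to \cite{candes_recht12} --- so there is nothing internal to compare against; your verification via the classical subdifferential characterizations (coordinate-wise for $\ell_1$, column-blockwise for $\ell_{1,2}$, and Watson's formula $\pa\|\X\|_\star=\{\U\V^T+\W:\U^T\W=\mathbf{0},\,\W\V=\mathbf{0},\,\opnorm{\W}\leq1\}$ for the nuclear norm) is exactly the standard argument the citation stands in for, and your identification of $T^\perp$ and the projection formulas in the nuclear-norm case is the right way to close the one nontrivial gap. The only superfluous step is the final claim that $\opnorm{\U\V^T+\W}=\max(1,\opnorm{\W})$: Definition \ref{defdec} only constrains $\opnorm{\Pc_{T^\perp}(\g)}$, not the full dual norm of $\g$, so this identity is not needed for decomposability as the paper defines it.
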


The next lemma shows that the sign vector $\e$ will yield the largest $\corr$ with the subdifferential and the best lower bound for such norms.
\begin{lemma} \label{e is best}Let $\|\cdot\|$ be a decomposable norm with support $T$ and sign vector $\e$. For any $\vb\neq 0$, we have that,
\beq
\ang(\vb,\pa\|\x_0\|)\leq \ang({\e},\pa\|\x_0\|)\label{sign bound}
\eeq
Also $\ang({\e},\pa\|\x_0\|)\geq \frac{\|\e\|_2}{L}$.
\end{lemma}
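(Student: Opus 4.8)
The plan is to read off the main inequality from the explicit description of the subdifferential of a decomposable norm. Every $\g\in\pa\|\x_0\|$ has the form $\g=\e+\w$ with $\w\in B:=\{\w\in T^\perp:\|\w\|^*\le 1\}$, and $B$ is a compact, convex, symmetric set spanning $T^\perp$. Two elementary facts do most of the work. First, since $T\perp T^\perp$, every such $\g$ satisfies $\e^T\g=\|\e\|_2^2$ and $\|\g\|_2^2=\|\e\|_2^2+\|\w\|_2^2$; hence if $\w^{\max}\in B$ attains $R:=\sup_{\w\in B}\|\w\|_2$ (which exists by compactness of $B$), the local Lipschitz constant $L_0:=\sup_{\g\in\pa\|\x_0\|}\|\g\|_2$ equals $\sqrt{\|\e\|_2^2+R^2}$ and is attained at $\e\pm\w^{\max}$. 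Second, $\e^T\g=\|\e\|_2^2$ immediately gives $\ang(\e,\pa\|\x_0\|)=\inf_{\g}\frac{\|\e\|_2}{\|\g\|_2}=\frac{\|\e\|_2}{L_0}$; combining this with $L_0\le L$ (any subgradient of a norm has Euclidean length at most $L$, by testing the subgradient inequality in the direction $\g$) already yields the second claim $\ang(\e,\pa\|\x_0\|)\ge\|\e\|_2/L$.

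For the main inequality I would fix $\vb\neq 0$, set $\vb_\perp=\Pc_{T^\perp}(\vb)$, note that $\vb^T\g=\vb^T\e+\vb_\perp^T\w$ for $\g=\e+\w$, and split on $M:=\sup_{\w\in B}|\vb_\perp^T\w|$. If $M\ge|\vb^T\e|$: by compactness, convexity, and symmetry of $B$ the image $\{\vb_\perp^T\w:\w\in B\}$ equals $[-M,M]$, so I can pick $\w\in B$ with $\vb_\perp^T\w=-\vb^T\e$, making $\g=\e+\w\in\pa\|\x_0\|$ satisfy $\vb^T\g=0$, whence $\ang(\vb,\pa\|\x_0\|)=0\le\ang(\e,\pa\|\x_0\|)$. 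If instead $M<|\vb^T\e|$: since $\pm\w^{\max}\in B$ we have $|\vb_\perp^T\w^{\max}|\le M<|\vb^T\e|$; choose $s\in\{+1,-1\}$ so that $\vb^T\e$ and $s\,\vb_\perp^T\w^{\max}$ have opposite signs, and set $\g_s:=\e+s\w^{\max}\in\pa\|\x_0\|$. Then $\|\g_s\|_2=\sqrt{\|\e\|_2^2+R^2}=L_0$ while $|\vb^T\g_s|=|\vb^T\e|-|\vb_\perp^T\w^{\max}|\le|\vb^T\e|\le\|\vb\|_2\|\e\|_2$, so $\ang(\vb,\pa\|\x_0\|)\le\frac{|\vb^T\g_s|}{\|\vb\|_2\|\g_s\|_2}\le\frac{\|\e\|_2}{L_0}=\ang(\e,\pa\|\x_0\|)$.

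The delicate point — the one spot where a naive argument collapses — is the second case: a direction $\w$ correlated with $\vb_\perp$ (good for cancellation) need not have large Euclidean norm, while the maximal-norm direction $\w^{\max}$ need not be correlated with $\vb_\perp$ at all, so one cannot in general both cancel $\vb^T\e$ and keep the denominator at its maximum $L_0$. The resolution is the observation that once full cancellation is impossible ($M<|\vb^T\e|$) the component $\vb^T\e$ automatically dominates $\vb_\perp^T\w$ for every feasible $\w$, so the crude bound $|\vb^T\g_s|\le\|\vb\|_2\|\e\|_2$ together with the maximal denominator $\|\g_s\|_2=L_0$ is already enough. Minor items to check along the way: $B$ is compact so that $\w^{\max}$ exists, $0\in B$ so that $\e$ itself is a subgradient (used implicitly in identifying $\pa\|\x_0\|$ with $\e+B$), and $\e\neq 0$ whenever $\x_0\neq 0$, so that $\ang(\e,\pa\|\x_0\|)>0$ and the statement is non-vacuous.
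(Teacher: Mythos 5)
Your proof is correct and follows essentially the same route as the paper's: both exploit the affine structure $\pa\|\x_0\|=\e+\{\w\in T^\perp:\|\w\|^*\le 1\}$, the identity $\e^T\g=\|\e\|_2^2$, and a case split on whether $\vb^T\e$ can be fully cancelled by an admissible $\w$. The only difference is organizational — the paper compares correlations ray-by-ray over subgradients $\e+\alpha\z$ with a monotonicity-in-$\alpha$ argument pushing to $\alpha=\pm1$, whereas you work globally with the extremal quantities $M$ and $R$ and exhibit a single witnessing subgradient $\e\pm\w^{\max}$ of maximal norm $L_0$ — which is bookkeeping rather than substance.
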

\begin{proof} Let $\vb$ be a unit vector. Without losing generality, assume $\vb^T\e\geq 0$. Pick a vector $\z\in T^{\perp}$ with $\|\z\|^*=1$ such that $\z^T\vb\leq 0$ (otherwise pick $-\z$). Now, consider the class of subgradients $\g(\alpha)=\e+\alpha \z$ for $1\geq \alpha\geq -1$. Then,
\beq
\inf_{-1\leq \alpha\leq 1}\frac{|\vb^T\g(\alpha)|}{\|\g(\alpha)\|_2}=\inf_{0\leq \alpha\leq 1}\frac{|\vb^T\g(\alpha)|}{\|\g(\alpha)\|_2}=\inf_{0\leq \alpha\leq 1}\frac{|\e^T\vb-\alpha |\z^T\vb||}{(\|\e\|_2^2+\alpha^2\|\z\|_2^2)^{1/2}}\nn
\eeq
If $|\z^T\vb|\geq \e^T\vb$, then, the numerator can be made $0$ and $\ang(\vb,\pa \|\x_0\|)=0$. Otherwise, the right hand side is decreasing function of $\alpha$, hence the minimum is achieved at $\alpha=1$, which gives,
\beq
\inf_{-1\leq \alpha\leq 1}\frac{|\vb^T\g(\alpha)|}{\|\g(\alpha)\|_2}=\frac{|\e^T\vb- |\z^T\vb||}{(\|\e\|_2^2+\|\z\|_2^2)^{1/2}}\leq \frac{|\e^T\vb|}{(\|\e\|_2^2+\|\z\|_2^2)^{1/2}}\leq \frac{\|\e\|_2}{(\|\e\|_2^2+\|\z\|_2^2)^{1/2}}=\inf_{-1\leq \alpha\leq 1}\frac{|{\bar{\e}}^T\g(\alpha)|}{\|\g(\alpha)\|_2}\nn
\eeq
where we used $\e^T\g(\alpha)=\e^T\e=\|\e\|^2_2$. Hence, along any direction $\z$, $\e$ yields a higher minimum correlation than $\vb$. To obtain \eqref{sign bound}, further take infimum over all $\z\in T^{\perp},\|\z\|^*\leq 1$ which will yield infimum over $\pa \|\x_0\|$. Finally, use $\|g(\alpha)\|_2\leq L$ to lower bound $\ang({\e},\pa\|\x_0\|)$.

%
\end{proof}

%

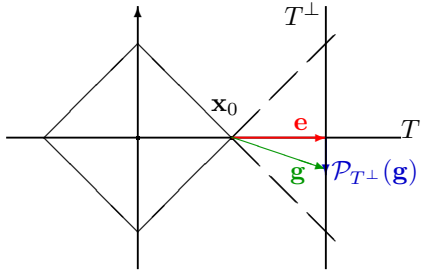
\begin{figure}[t!]
\begin{center}
\setlength{\unitlength}{0.25cm}
\vspace{-.2in}
\begin{picture}(35,-2)
	\color{black}
	\linethickness{.25mm}
	\put(16,-17){\vector(0,1){14}}	
	\put(9,-10){\line(1,0){21}}		
\put(30,-10){$T$}
	\linethickness{.5mm}
	\put(15.9,-10){\line(1,0){.2}}		
	\put(20.9,-10){\line(1,0){.2}}		
	\put(19.9,-8.6){$\x_0$}			

	\linethickness{.25mm}
	\put(11,-10){\line(1,1){5}}		
	\put(21,-10){\line(-1,1){5}}		
	\put(11,-10){\line(1,-1){5}}		
	\put(21,-10){\line(-1,-1){5}}		

\multiput(21,-10)(2,2){3}{\line(1,1){1.5}}		
\multiput(21,-10)(2,-2){3}{\line(1,-1){1.5}}		

\color{red}
	\put(21,-10){\vector(1,0){5}}		
	\put(24.3,-9.6){$\e$}			
\color{darkgreen}
	\put(21,-10){\vector(3,-1){5}}		
	\put(24.1,-12.2){$\g$}			
\color{darkblue}
	\put(26.0,-10){\vector(0,-1){2}}		
	\put(26.3,-12,2){$\Pc_{T^\perp}(\g)$}			

\color{black}
\put(26,-10){\line(0,1){7}}
\put(26,-10){\line(0,-1){7}}
\put(23.8,-4){$T^\perp$}
\end{picture}
\end{center}
\vspace{1.65in}
\caption{An example of a decomposable norm: $\ell_1$ norm is decomposable at $\x_0=(1,0)$. The sign vector $\e$, the support $T$, and shifted subspace $T^\perp$ are illustrated. A subgradient $\g$ at $\x_0$ and its projection onto $T^\perp$ are also shown.
}
\end{figure}

Based on Lemma \ref{e is best}, the individual lower bound would be $\order{\frac{\|\e\|_2^2}{L^2}} n$. Calculating $\frac{\|\e\|_2^2}{L^2}n$ for the norms in Lemma \ref{lemdecomp}, reveals that, this quantity is $k$ for a $k$ sparse vector, $c\md_1$ for a $c$-column sparse matrix and $r\max\{\md_1,\md_2\}$ for a rank $r$ matrix. Compared to bounds obtained by using $\bx_0$, these new quantities are directly proportional to the true model complexities. Finally, we remark that, these new bounds correspond to choosing $\x_0$ that maximizes the value of $\|\bx_0\|_{1},\|\bx_0\|_{\star}$ or $\|\bx_0\|_{1,2}$ while keeping sparsity, rank or column sparsity fixed. In particular, in these examples, $\e$ has the same sparsity, rank, column sparsity as $\x_0$.

The next lemma gives a $\corr$ bound for the combination of decomposable norms as well as a simple lower bound on the sample complexity.
\begin{proposition}\label{decomp prop} Given decomposable norms $\|\cdot\|_{(i)}$ with supports $T_i$ and sign vectors $\e_i$. Let $T_{\cap}= \bigcap_{1\leq i\leq \tau} T_i$. Choose the subspace $\Rc$ to be a subset of $T_\cap$. 
\begin{itemize}
\item Assume $\li\Pc_{\Rc}(\e_i),\Pc_{\Rc}(\e_j)\ri\geq 0$ for all $i,j$ and $\min_{1\leq i\leq \tau} \frac{\|\Pc_{\Rc}(\e_i)\|_2}{\|\e_i\|_2}\geq \upsilon$. Then,
\beq
\ang(\Rc,\pa f(\x_0))\geq \frac{\upsilon}{\sqrt{\tau}}\min_{1\leq i\leq \tau} \ang(\e_i,\pa \|\x_0\|_{(i)}).\nn
\eeq
\item Consider Proposition \ref{bound2} with Gaussian measurements and suppose $\Rc$ is orthogonal to the set \eqref{bad cone}. Let $f(\x)=\sum_{i=1}^\tau \la_i\|\x\|_{(i)}$ for nonnegative $\{\la_i\}$'s. Then, if $m< \text{dim}(\Rc)$, \eqref{rec_class} fails with probability $1$.
\end{itemize}
\end{proposition}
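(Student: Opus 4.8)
The plan is to treat the two parts separately, both resting on the observation that projecting a subgradient of $f$ onto $\Rc$ collapses it to a combination of the projected sign vectors $\Pc_\Rc(\e_i)$. The elementary ingredient is that, since $\Rc\subseteq T_{\cap}\subseteq T_i$, any vector lying in $T_i^\perp$ is orthogonal to $\Rc$, so $\Pc_\Rc=\Pc_\Rc\circ\Pc_{T_i}$; hence decomposability ($\Pc_{T_i}(\g_i)=\e_i$ for every $\g_i\in\pa\|\x_0\|_{(i)}$) gives $\Pc_\Rc(\g_i)=\Pc_\Rc(\e_i)$.

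For the first part I would start from Lemma~\ref{subdiff_chain}: every $\g\in\pa f(\x_0)$ can be written $\g=\sum_{i=1}^\tau w_i\g_i$ with $w_i\ge 0$ and $\g_i\in\pa\|\x_0\|_{(i)}$, so $\Pc_\Rc(\g)=\sum_i w_i\Pc_\Rc(\e_i)$. Expanding $\|\Pc_\Rc(\g)\|_2^2$ and dropping the off-diagonal terms $\langle\Pc_\Rc(\e_i),\Pc_\Rc(\e_j)\rangle\ge 0$ leaves $\|\Pc_\Rc(\g)\|_2^2\ge\sum_i w_i^2\|\Pc_\Rc(\e_i)\|_2^2\ge\upsilon^2\sum_i w_i^2\|\e_i\|_2^2$. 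For the denominator the triangle inequality gives $\|\g\|_2\le\sum_i w_i\|\g_i\|_2\le\sum_i w_iM_i$ with $M_i:=\sup_{\g_i\in\pa\|\x_0\|_{(i)}}\|\g_i\|_2$; since $\e_i^T\g_i=\|\e_i\|_2^2$ for decomposable norms one has $\ang(\e_i,\pa\|\x_0\|_{(i)})=\|\e_i\|_2/M_i$, hence $\|\e_i\|_2\ge\gamma M_i$ with $\gamma:=\min_i\ang(\e_i,\pa\|\x_0\|_{(i)})$. Substituting this, together with Cauchy--Schwarz in the form $\sum_i w_iM_i\le\sqrt{\tau}\,\bigl(\sum_i w_i^2M_i^2\bigr)^{1/2}$, yields $\|\Pc_\Rc(\g)\|_2/\|\g\|_2\ge\upsilon\gamma/\sqrt{\tau}$ uniformly in $\g$; taking the infimum over $\pa f(\x_0)$ is the claim.

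For the second part, specialize to $f=\sum_i\la_i\|\cdot\|_{(i)}$, so that $\pa f(\x_0)=\sum_i\la_i\,\pa\|\x_0\|_{(i)}$ and hence $\Pc_\Rc(\g)=\sum_i\la_i\Pc_\Rc(\e_i)=:\p$ for every $\g\in\pa f(\x_0)$---a \emph{single deterministic vector}, independent of the chosen $\g$ and of $\Meb$; the nonnegative-inner-product hypothesis together with $\|\p\|_2^2\ge\sum_i\la_i^2\|\Pc_\Rc(\e_i)\|_2^2$ gives $\p\neq 0$ in the non-degenerate case (some $\la_i>0$ with $\Pc_\Rc(\e_i)\neq 0$). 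Now argue by contradiction: if $\x_0$ were a minimizer of \eqref{rec_class}, then exactly as in the proof of Proposition~\ref{bound2}---applying the KKT identity of Lemma~\ref{kkt_max} and using that $\Rc$ is orthogonal to the cone \eqref{bad cone}, which annihilates the multiplier $\vb\in\Cc^*$---we would get $\p=\Pc_\Rc(\g)=\Pc_\Rc(\Meb^T)\z$ for some $\z\in\R^m$. Thus recoverability forces the fixed nonzero $\p$ into the column space of the $n\times m$ matrix $\Pc_\Rc(\Meb^T)$. Identifying $\Rc$ with $\R^{\dim(\Rc)}$, the columns of $\Pc_\Rc(\Meb^T)$ are $m$ i.i.d.\ standard Gaussian vectors in $\R^{\dim(\Rc)}$; when $m<\dim(\Rc)$ their span almost surely misses any prescribed nonzero vector (rotate $\Rc$ so that $\p$ becomes the first coordinate direction; then $\p$ could lie in the span only if the projections of the $m$ Gaussians onto the remaining $\dim(\Rc)-1$ coordinates were linearly dependent, which a.s.\ fails since $m\le\dim(\Rc)-1$). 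Hence with probability one $\x_0$ is not a minimizer.

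The step I expect to be the main obstacle is not analytic but the clean justification of $\p\neq 0$: this is the only place the nonnegativity assumption on $\langle\Pc_\Rc(\e_i),\Pc_\Rc(\e_j)\rangle$ does real work, and it is precisely what excludes the degenerate configuration in which $\sum_i\la_i\e_i$ is orthogonal to $\Rc$ and the dimension-counting argument would collapse. Everything else is linear algebra plus the standard genericity of Gaussian subspaces, the only mild care being the phrasing of the ``$m$ i.i.d.\ Gaussians almost surely avoid a fixed line'' fact, handled by the reduction to linear independence in $\R^{\dim(\Rc)-1}$.
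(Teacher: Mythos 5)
Your argument is correct and follows essentially the same route as the paper's proof: Lemma~\ref{subdiff_chain} plus $\Pc_\Rc(\g_i)=\Pc_\Rc(\e_i)$, dropping nonnegative cross terms, the $\upsilon$ bound, and Cauchy--Schwarz for the first bullet; and for the second, the observation that $\Pc_\Rc(\g)$ is a fixed vector that a random $m$-dimensional subspace of $\Rc$ almost surely misses when $m<\dim(\Rc)$. You are in fact slightly more careful than the paper on two points it leaves implicit --- the identity $\ang(\e_i,\pa\|\x_0\|_{(i)})=\|\e_i\|_2/\sup_{\g_i}\|\g_i\|_2$ used to pass from $\sum_i w_i\|\e_i\|_2$ to $\|\g\|_2$, and the non-vanishing of $\sum_i\la_i\Pc_\Rc(\e_i)$ --- but the substance is identical.
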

\begin{proof} Let $\g=\sum_{i=1}^\tau w_i\g_i$ for some $\g_i\in \pa \|\x_0\|_{(i)}$. First, $\|\g\|_2\leq \sum_{i=1}^\tau w_i\|\g_i\|_2$. Next,
\beq
\|\Pc_{\Rc}(\g)\|_2^2=\|\sum_{i=1}^\tau w_i\Pc_{\Rc}(\e_i)\|_2^2\geq \sum_{i=1}^\tau w_i^2\|\Pc_{\Rc}(\e_i)\|_2^2\geq \upsilon^2\sum_{i=1}^\tau w_i^2\|\e_i\|_2^2\geq \frac{\upsilon^2}{\tau}(\sum_{i=1}^\tau w_i\|\e_i\|_2)^2.\nn
\eeq
To see the second statement, consider the line \eqref{look here} from the proof of Proposition \ref{bound2}. $\Pc_\Rc(\g)=\sum_{i=1}^\tau \la_i\Pc_\Rc(\e_i)$. On the other hand, column space of $\Pc_\Rc(\Meb^T)$ is an $m$-dimensional random subspace of $\Rc$. If $m< \text{dim}(\Rc)$, $\Pc_\Rc(\g)$ is linearly independent with $\Pc_\Rc(\Meb^T)$ with probability $1$ and \eqref{look here} will not hold.
\end{proof}


In the next section, we will show how better choices of $\Rc$ (based on the decomposability assumption) can improve the lower bounds for S\&L recovery.

\section{Proofs for Section \ref{sub:SLR}} \label{sec:SLR}
Using the general framework provided in Section \ref{generalres}, in this section we present the proof of Theorem \ref{main_thm_sec3}, which states various convex and nonconvex recovery results for the S\&L models. We start with the proofs of the convex recovery.

\subsection{Convex recovery results for $S\&L$}
In this section, we prove the statements of Theorem \ref{main_thm_sec3} regarding convex approaches, using Theorem \ref{main31} and Proposition \ref{g main31}. We will make use of the decomposable norms to obtain better lower bounds. Hence, we first state a result on the sign vectors and the supports of the S\&L model following Lemma \ref{lemdecomp}. The proof is provided in Appendix \ref{app:norms}.
\begin{lemma} \label{signvec}
Denote the norm $\|\X^T\|_{1,2}$ by $\|\cdot^T\|_{1,2}$. Given a matrix $\X_0\in\R^{\md_1\times \md_2}$, let $\Eb_\st,\Eb_c,\Eb_r$ and $T_\st,T_c,T_r$ be the sign vectors and supports for the norms $\nor_\star$, $\nor_{1,2}$, $\|\cdot^T\|_{1,2}$ respectively. Then,
\begin{itemize}
\item $\Eb_\st,\Eb_r,\Eb_c \in  T_\st\cap T_c\cap T_r$,
\item $\iprod{\Eb_\st}{\Eb_r}\geq 0$, $\iprod{\Eb_\st}{\Eb_c}\geq 0$, and $\iprod{\Eb_c}{\Eb_r}\geq 0$.
\end{itemize}
\end{lemma}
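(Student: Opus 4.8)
The plan is to argue directly from the explicit descriptions of $T_\st,T_c,T_r$ and $\Eb_\st,\Eb_c,\Eb_r$ furnished by Lemma~\ref{lemdecomp}. Write a compact SVD $\X_0=\U\bSi\V^T$ with $\bSi$ diagonal and strictly positive, and let $R$ and $C$ be the index sets of the nonzero rows and nonzero columns of $\X_0$. The one structural fact underlying everything is that the SVD factors inherit this support: since $\U\bSi$ has full column rank and $\bSi\V^T$ has full row rank, $\X_{.,j}=\mathbf{0}$ forces $\V^T\e_j=\mathbf{0}$ and $\X_{i,.}=\mathbf{0}$ forces $\e_i^T\U=\mathbf{0}$; equivalently $\rng(\U)\subseteq\spn\{\e_i:i\in R\}$ and $\rng(\V)\subseteq\spn\{\e_j:j\in C\}$. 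I would establish this rank observation first, so the rest reduces to bookkeeping.

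Next, record the convenient factorizations $\Eb_c=\X_0 D_c$ and $\Eb_r=D_r\X_0$, where $D_c$ and $D_r$ are nonnegative diagonal matrices holding the reciprocal column and row norms of $\X_0$ on their supports (and zeros elsewhere), while $\Eb_\st=\U\V^T$. For the first bullet I would check the nine memberships using three observations: (i) a matrix lies in $T_c$ iff its columns indexed outside $C$ vanish --- immediate for $\Eb_c$; true for $\Eb_r=D_r\X_0$ since every nonzero row of $\X_0$ is itself supported on $C$; and true for $\Eb_\st=\U\V^T$ since $\V^T\e_j=\mathbf{0}$ for $j\notin C$; (ii) the symmetric statement for $T_r$ (using $\e_i^T\U=\mathbf{0}$ for $i\notin R$, and that nonzero columns of $\X_0$ are supported on $R$); (iii) $T_\st$ contains every matrix whose column space lies in $\rng(\U)$ and every matrix whose row space lies in $\rng(\V)$, so $\Eb_c=\U\bSi\V^TD_c$ and $\Eb_r=D_r\U\bSi\V^T$ qualify, and $\Eb_\st=\U\V^T\in T_\st$ is trivial. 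Together these give $\Eb_\st,\Eb_c,\Eb_r\in T_\st\cap T_c\cap T_r$.

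For the second bullet I would evaluate the three Frobenius inner products. The pair $\langle\Eb_c,\Eb_r\rangle$ is nonnegative because entrywise it equals $\sum_{i,j}(\X_0)_{ij}^2/(\twonorm{\X_{.,j}}\twonorm{\X_{i,.}})$, with the convention that a term is $0$ whenever the relevant row or column norm vanishes. For $\langle\Eb_\st,\Eb_c\rangle$, split over columns: the $j$-th term is $(D_c)_{jj}\,(\V^T\e_j)^T\bSi(\V^T\e_j)\ge 0$ since $\bSi\succeq 0$, so the sum is nonnegative; and $\langle\Eb_\st,\Eb_r\rangle\ge 0$ follows by the row-wise analogue, the $i$-th term being $(D_r)_{ii}\,(\U^T\e_i)^T\bSi(\U^T\e_i)$. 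This completes the proof.

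The only genuinely non-routine point is the support-inheritance fact for the SVD factors --- the rank argument upgrading ``a row or column of $\X_0$ is zero'' to ``the matching row of $\U$ or column of $\V$ is zero''. Once that is in hand everything reduces to matching supports and reading off signs of quadratic forms, and the main risk is simply keeping straight which of $T_c,T_r,T_\st$ each of the three sign vectors must be tested against.
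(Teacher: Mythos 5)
Your proof is correct and follows essentially the same route as the paper's: the same factorizations $\Eb_c=\X_0\Db_c$, $\Eb_r=\Db_r\X_0$, $\Eb_\st=\U\V^T$, the same support/range arguments for the memberships, and the same nonnegativity computations for the inner products (your columnwise and entrywise sums are just the expanded forms of the paper's traces of products of PSD matrices). The only addition is that you explicitly prove the support-inheritance of the SVD factors, which the paper simply builds into its notation by choosing $\U,\V$ supported on the nonzero rows and columns; that is a reasonable point to make explicit.
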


\subsubsection{Proof of Theorem \ref{main_thm_sec3}: Convex cases}

\paragraph{Proof of (a1)} We use the functions $\|\cdot\|_{1,2},\|\cdot^T\|_{1,2}$ and $\|\cdot\|_\st$ without the cone constraint, i.e., $\Cc=\R^{\md_1\times \md_2}$. We will apply Proposition \ref{g main31} with $\Rc=T_\st\cap T_c\cap T_r$. From Lemma \ref{signvec} all the sign vectors lie on $\Rc$ and they have pairwise nonnegative inner products. Consequently, applying Proposition \ref{decomp prop},
\beq
\ang(\Rc,\pa f(\X_0))^2\geq \frac{1}{3} \min\{\frac{k_1}{\md_1},\frac{k_2}{\md_2},\frac{r}{\min\{\md_1,\md_2\}}\}\nn
\eeq
If $m<\text{dim}(\Rc)$, we have failure with probability $1$. Hence, assume $m\geq \text{dim}(\Rc)$. Now, apply Proposition \ref{g main31} with the given $m_{low}$. 

\paragraph{Proof of (b1)} In this case, we apply Lemma \ref{useful psd}. We choose $\tcs=T_\st\cap T_c\cap T_r \cap \Sbb^n$, the norms are the same as in the general model, and $\upsilon\geq\frac{1}{\sqrt{2}}$. Also, pairwise  inner products are positive, hence, using Proposition \ref{decomp prop}, $\ang(\Rc,\pa f(\X_0))^2\geq \frac{1}{4} \min\{\frac{k}{\md},\frac{r}{\md}\}$. Again, we may assume $m\geq \text{dim}(\Rc)$. Finally, based on Corollary \ref{psdcorol}, for the PSD cone we have $\bDb(\Cc)\geq\frac{\sqrt{3}}{2}$. The result follows from Proposition \ref{g main31} with the given $m_{low}$.

\paragraph{Proof of (c1)}
For PSD cone, $\bDb(\Cc)\geq\frac{\sqrt{3}}{2}$ and we simply use Theorem \ref{main31} to obtain the result by using $\kappa_{\ell_1}^2=\frac{\|\bX_0\|_1^2}{\md^2}$ and $\kappa_{\star}^2=\frac{\|\bX_0\|_\star^2}{\md}$.

\subsubsection{Proof of Corollary \ref{weighted s&l}}
To show this, we will simply use Theorem \ref{bound} and will substitute $\kappa$'s corresponding to $\ell_1$ and the nuclear norm. $\kappa_\st=\frac{\|{\bar{\X}}_0\|_\star}{\sqrt{\md}}$ and $\kappa_{\ell_1}=\frac{\|{\bar{\X}}_0\|_{\ell_1}}{\md}$. Also observe that, $\la_{\ell_1} L_{\ell_1}=\beta \md$ and $\la_{\star}L_{\star}=(1-\beta) \md$. Hence, $\sum_{i=1}^2 {\bar{\la}}_i \kappa_i=\alpha \|{\bar{\X}}_0\|_1+(1-\alpha)\|{\bar{\X}}_0\|_\star\sqrt{\md}$. Use Proposition \ref{sub gauss} to conclude with sufficiently small $c_1,c_2>0$.

\subsection{Nonconvex recovery results for S\&L}
While Theorem \ref{main_thm_sec3} states the result for Gaussian measurements, we prove the nonconvex recovery for the more general sub-gaussian measurements. We first state a lemma that will be useful in proving the nonconvex results. The proof is provided in the Appendix \ref{appC} and uses standard arguments.
\begin{lemma}\label{abcgcc} Consider the set of matrices $M$ in $\R^{\md_1\times \md_2}$ that are supported over an $s_1\times s_2$ submatrix with rank at most $q$. There exists a constant $c>0$ such that whenever $m\geq  c\min\{(s_1+s_2)q,s_1\log \frac{\md_1}{s_1},s_2\log \frac{\md_2}{s_2}\}$, with probability $1-2\exp(-cm)$, $\Mec(\cdot):\R^{\md_1\times \md_2}\rightarrow \R^m$ with i.i.d. zero-mean and isotropic sub-gaussian rows will satisfy the following,
\beq\label{abcgcceq}
\Mec(\X)\neq 0,~~~\text{for all}~~~\X\in M.
\eeq
\end{lemma}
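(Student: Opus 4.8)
The plan is to show that, with the stated probability, $\Mec$ restricted to $M$ is injective away from the origin; since $M$ is a cone, this is equivalent to $\inf\{\twonorm{\Mec(\X)}:\X\in M,\ \fronorm{\X}=1\}>0$. I would establish this by a union bound combining (i) the combinatorial choice of the $s_1\times s_2$ support block, (ii) a metric-entropy (net) bound for the rank-constrained set inside a fixed block, (iii) a pointwise sub-exponential concentration inequality for $\twonorm{\Mec(\X)}^2$, and (iv) a standard net-to-everywhere approximation. Throughout, write $d:=(s_1+s_2)q+s_1\log\frac{e\md_1}{s_1}+s_2\log\frac{e\md_2}{s_2}$, which, up to an absolute constant, is the threshold appearing in the hypothesis; the target is: failure probability $\le 2\exp(-cm)$ once $m\ge Cd$ for an absolute constant $C$.

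\emph{Counting step.} There are at most $\binom{\md_1}{s_1}\binom{\md_2}{s_2}\le\exp\!\big(s_1\log\frac{e\md_1}{s_1}+s_2\log\frac{e\md_2}{s_2}\big)$ choices of an $s_1\times s_2$ support block $S$. Fixing $S$ and identifying matrices supported on $S$ with $\R^{s_1\times s_2}$, the set of rank-$\le q$ matrices in the unit Frobenius ball admits, for any $\eps\in(0,1)$, a Frobenius $\eps$-net $\Nc_\eps$ with $|\Nc_\eps|\le(9/\eps)^{(s_1+s_2+1)q}$ (the standard bound obtained by netting the $\U,\bSi,\V$ factors of an SVD-type parametrization). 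Hence the union, over all blocks, of these nets has log-cardinality at most $d+(s_1+s_2+1)q\log(9/\eps)=\order{d}$ once $\eps$ is a fixed absolute constant.

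\emph{Concentration step.} For a fixed $\X$ with $\fronorm{\X}=1$ we have $\twonorm{\Mec(\X)}^2=\sum_{i=1}^m\iprod{\a_i}{\vc(\X)}^2$ with $\a_i$ i.i.d.\ zero-mean isotropic sub-gaussian, so the summands are i.i.d.\ sub-exponential with mean $\E\iprod{\a_i}{\vc(\X)}^2=\fronorm{\X}^2=1$. The sub-exponential Bernstein bound (Proposition 5.16 of \cite{Vershynin}, used exactly as in the proof of Proposition \ref{sub gauss}) gives an absolute $c_0>0$ with $\Pro\big(\twonorm{\Mec(\X)}^2\le m/2\big)\le 2\exp(-c_0m)$. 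A union bound over the union of nets from the previous step then shows that, once $c_0m$ exceeds twice that log-cardinality, i.e.\ $m\ge C_1d$, with probability $\ge1-2\exp(-c_1m)$ every net point $\X$ satisfies $\twonorm{\Mec(\X)}\ge\sqrt{m/2}$.

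\emph{Net-to-everywhere step, and the main obstacle.} On the same event I would extend this bound from the nets to all unit-Frobenius $\X\in M$. The hard part is that $M$ is not convex, so one cannot directly invoke an operator-norm bound for the approximation remainder. I would handle this in the usual way: for $\X\in M$ supported on block $S$ and a nearest net point $\X'\in\Nc_\eps$, the difference $\X-\X'$ is still supported on $S$ and has rank $\le 2q$, hence lies in $\eps$ times the analogous rank-$\le 2q$ set; rerunning the net argument once more with $q\mapsto 2q$ (still an $\order{d}$ exponent) yields a uniform upper bound $\sup\twonorm{\Mec(\Z)}\le 2\sqrt{m}$ over all $\Z$ supported on some block with $\rank\Z\le 2q$, $\fronorm{\Z}=1$. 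Then $\twonorm{\Mec(\X)}\ge\twonorm{\Mec(\X')}-\twonorm{\Mec(\X-\X')}\ge\sqrt{m/2}-2\eps\sqrt{m}\ge\sqrt{m}/2>0$ for $\eps$ a small absolute constant, which proves \eqref{abcgcceq}. Finally I would relabel the absolute constants so that a single $c$ serves both in the probability $2\exp(-cm)$ and in the measurement threshold across all three regimes (rank-dominated, $\md_1$-dominated, $\md_2$-dominated). I expect the only real work to be this constant bookkeeping and the non-convex net argument; the probabilistic core and the entropy estimates are entirely routine.
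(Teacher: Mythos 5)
Your proposal is correct and follows essentially the same route as the paper: the paper factors the argument through a generic covering-number lemma for cones (Lemma \ref{ranemb2}, whose proof is exactly your union bound over a $1/4$-net plus sub-exponential Bernstein concentration and a net-to-everywhere extension), combined with the Cand\`es--Plan covering bound for low-rank matrices (Lemma \ref{covnum}) and the same $\binom{\md_1}{s_1}\binom{\md_2}{s_2}$ count over support blocks. Your explicit handling of the non-convexity in the net-to-everywhere step (passing to rank $\le 2q$ differences on the same block) is the standard patch for a step the paper's Lemma \ref{ranemb2} treats only implicitly, and your reading of the threshold as a max/sum of the three terms is the one consistent with how the lemma is actually invoked in Theorem \ref{main_thm_sec3}.
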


\subsubsection{Proof of Theorem \ref{main_thm_sec3}: Nonconvex cases}
Denote the sphere in $\R^{\md_1\times \md_2}$ with unit Frobenius norm by $\Sc^{\md_1\times \md_2}$.

\paragraph{Proof of (a2)} Observe that the function $f(\X)=\frac{\|\X\|_{0,2}}{\|\X_0\|_{0,2}}+\frac{\|\X^T\|_{0,2}}{\|\X_0^T\|_{0,2}}+\frac{\text{rank}(\X)}{\text{rank}(\X_0)}$ satisfies the triangle inequality and we have $f(\X_0)=3$. Hence, if all null space elements $\W\in\Null(\Mec)$ satisfy $f(\W)>6$, we have
\beqas
f(\X)\geq f(\X-\X_0)-f(-\X_0)>3,
\eeqas
for all feasible $\X$ which implies $\X_0$ being the unique minimizer.

Consider the set $M$ of matrices, which are supported over a $6k_1\times 6k_2$ submatrix with rank at most $6r$. Observe that any $\Z$ satisfying $f(\Z)\leq 6$ belongs to $M$. Hence ensuring $\Null(\Mec)\cap M = \{0\}$ would ensure $f(\W)>6$ for all $\W\in\Null(\Mec)$. Since $M$ is a cone, this is equivalent to $\Null(\Mec)\cap (M \cap \Sc^{\md_1\times \md_2}) = \emptyset$.
Now, applying Lemma \ref{abcgcc} with set $M$ and $s_1=6k_1$, $s_2=6k_2$, $q=6r$ we find the desired result.
%

\paragraph{Proof of (b2)} Observe that due to the symmetry constraint,
\[
f(\X)=\frac{\|\X\|_{0,2}}{\|\X_0\|_{0,2}}+\frac{\|\X^T\|_{0,2}}{\|\X_0^T\|_{0,2}}+
\frac{\text{rank}(\X)}{\text{rank}(\X_0)}.
\]
Hence, the minimization is the same as {\bf{(a2)}}, the matrix is rank $r$ contained in a $k\times k$ submatrix and we additionally have the positive semidefinite constraint which can only reduce the amount of required measurements compared to {\bf{(a2)}}. Consequently, the result follows by applying Lemma \ref{abcgcc}, similar to {\bf{(a2)}}.


\paragraph{Proof of (c2)} Let $C=\{\X\neq 0 \big| f(\X)\leq f(\X_0)\}$. Since $\text{rank}(\X_0)=1$, if $f(\X)\leq f(\X_0)=2$, $\text{rank}(\X)=1$. With the symmetry constraint, this means $\X=\pm\x\x^T$ for some $l$-sparse $\x$. Observe that $\X-\X_0$ has rank at most $2$ and is contained in a $2k\times 2k$ submatrix as $l\leq k$. Let $M$ be the set of matrices that are symmetric and whose support lies in a $2k\times 2k$ submatrix. Using Lemma \ref{abcgcc} with $q=2$, $s_1=s_2=2k$, whenever $m\geq ck\log \frac{n}{k}$, with desired probability all nonzero $\W\in M$ will satisfy $\Ac(\W)\neq 0$. Consequently, any $\X\in C$ will have $\Ac(\X)\neq \Ac(\X_0)$, hence $\X_0$ will be the unique minimizer.

\subsection{Existence of a matrix with large $\kappa$'s}\label{explicit construction}

We now argue that, there exists an $S\&L$ matrix that have large $\kappa_{\ell_1},\kappa_{\ell_{1,2}}$ and $\kappa_{\star}$ simultaneously. We will have a deterministic construction that is close to optimal. Our construction will be based on Hadamard matrices. $\Hb_n\in\R^{n\times n}$ is called a Hadamard matrix if it has $\pm 1$ entries and orthogonal rows. Hadamard matrices exist for $n$ that is an integer power of $2$. 


Using $\Hb_n$, our aim will be to construct a $\md_1\times \md_2$ $S\&L$ $(k_1,k_2,r)$ matrix $\X_0$ that satisfy $\|\bX_0\|_1^2\approx k_1k_2$, $\|\bX_0\|_\star^2 \approx r$, $\|\bX_0\|_{1,2}^2 \approx k_2$ and $\|\bX_0^T\|_{1,2}^2 \approx k_1$. To do this, we will construct a $k_1\times k_2$ matrix and then plant it into a larger $\md_1\times \md_2$ matrix. The following lemma summarizes the construction.

\begin{lemma} Without loss of generality, assume $k_2\geq k_1\geq r$. Let $\Hb:=\Hb_{\lfloor\log_2 k_2\rfloor}$. Let $\X\in\R^{k_1\times k_2}$ be so that, $i$'th row of $\X$ is equal to $[i-1~(\text{mod}~r)]+1$'th row of $\Hb$ followed by $0$'s for $1\leq i\leq k_1$. Then,
\beq
\|\bX_0\|_1^2\geq  \frac{k_1k_2}{2},~\|\bX_0\|_\star^2 \geq \frac{r}{2},~\|\bX_0\|_{1,2}^2 \geq \frac{ k_2}{2},~\|\bX_0^T\|_{1,2}^2 = k_1.\nn
\eeq
In particular, if $k_1\equiv 0~(\text{mod}~r)$ and $k_2$ is an integer power of $2$, then,
\beq
\|\bX_0\|_1^2= k_1k_2,~\|\bX_0\|_\star^2 = r,~\|\bX_0\|_{1,2}^2 = k_2,~\|\bX_0^T\|_{1,2}^2 = k_1.\nn
\eeq
\end{lemma}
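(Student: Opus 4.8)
The plan is to read off all four normalized quantities directly from the explicit structure of $\X$, the only non-trivial piece being the nuclear norm, for which I will diagonalize the Gram matrix $\X\X^T$. Throughout write $m_2:=2^{\lfloor\log_2 k_2\rfloor}$, so that $\Hb=\Hb_{m_2}$ is a genuine Hadamard matrix and $k_2/2<m_2\le k_2$; in the intended regime $r\ll k_2$ we also have $m_2\ge r$, so the rows $h_1,\dots,h_r$ of $\Hb$ used in the construction exist (this degenerate case $m_2<r$, possible only when $k_2$ barely exceeds $r$, can be excluded at the outset). By construction $\X$ has exactly $k_1 m_2$ nonzero entries, each $\pm1$, all confined to its first $m_2$ columns; in particular $\|\X\|_F^2=k_1m_2$. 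Since embedding $\X$ into $\R^{\md_1\times\md_2}$ by appending zero rows and columns changes none of $\|\cdot\|_F$, $\|\cdot\|_1$, $\|\cdot\|_{1,2}$, $\|\cdot^T\|_{1,2}$, $\|\cdot\|_\star$, it suffices to prove the bounds for $\X$ itself.

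First I would dispatch the three easy norms by counting. Counting $\pm1$ entries gives $\|\X\|_1=k_1m_2$, hence $\|\bX\|_1^2=k_1m_2> k_1k_2/2$. Each of the first $m_2$ columns of $\X$ has Euclidean norm $\sqrt{k_1}$ and the remaining columns vanish, so $\|\X\|_{1,2}=m_2\sqrt{k_1}$ and $\|\bX\|_{1,2}^2=m_2> k_2/2$. Each of the $k_1$ rows of $\X$ has Euclidean norm $\sqrt{m_2}$, so $\|\X^T\|_{1,2}=k_1\sqrt{m_2}$ and $\|\bX^T\|_{1,2}^2=k_1$ exactly.

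The heart of the argument is the nuclear norm. I would compute $\X\X^T\in\R^{k_1\times k_1}$: its $(i,j)$ entry is $\langle h_{a_i},h_{a_j}\rangle$ with $a_i:=((i-1)\bmod r)+1$, and orthogonality of Hadamard rows, $\langle h_a,h_b\rangle=m_2\,\delta_{ab}$, gives $\X\X^T=m_2 P$ where $P_{ij}=\mathbf 1[\,i\equiv j \pmod r\,]$. Grouping $\{1,\dots,k_1\}$ by residue mod $r$ exhibits $P$, up to a permutation, as block-diagonal with $r$ all-ones blocks of sizes $n_1,\dots,n_r$, where $n_j\in\{\lfloor k_1/r\rfloor,\lceil k_1/r\rceil\}$ and $\sum_j n_j=k_1$ (and each $n_j\ge1$ since $k_1\ge r$). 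Hence the nonzero eigenvalues of $P$ are exactly $n_1,\dots,n_r$, the singular values of $\X$ are $\sigma_j=\sqrt{m_2 n_j}$ for $j=1,\dots,r$ (so $\rank(\X)=r$), and $\|\X\|_\star=\sqrt{m_2}\sum_{j=1}^r\sqrt{n_j}$. Dividing by $\|\X\|_F=\sqrt{m_2k_1}$ gives $\|\bX\|_\star^2=\bigl(\sum_j\sqrt{n_j}\bigr)^2/k_1$. Writing $q=\lfloor k_1/r\rfloor\ge1$, we have $\sum_j\sqrt{n_j}\ge r\sqrt q$ and $k_1=\sum_j n_j\le r(q+1)\le 2rq$, so $\|\bX\|_\star^2\ge r^2q/(2rq)=r/2$. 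When $r\mid k_1$ all $n_j=k_1/r$, so $\sum_j\sqrt{n_j}=\sqrt{rk_1}$ and $\|\bX\|_\star^2=r$; if moreover $k_2$ is a power of two then $m_2=k_2$ and the three earlier inequalities become equalities, completing the proof.

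The only genuine obstacle is recognizing and justifying the block structure of $\X\X^T$ — that the periodic choice of Hadamard rows forces the Gram matrix to have precisely $r$ (nearly equal) nonzero eigenvalues — after which the nuclear-norm estimate is a one-line Cauchy--Schwarz-type bound; the rest is bookkeeping with $\pm1$ entries.
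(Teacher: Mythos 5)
Your proof is correct and follows essentially the same route as the paper: direct counting of $\pm1$ entries for the $\ell_1$, $\ell_{1,2}$ and Frobenius norms, and orthogonality of the repeated Hadamard rows for the nuclear norm. The only difference is that you diagonalize the Gram matrix $\X\X^T$ to get the exact singular values $\sqrt{m_2 n_j}$, whereas the paper merely lower-bounds the $r$ relevant singular values by $\sqrt{2^{\lfloor\log_2 k_2\rfloor}\lfloor k_1/r\rfloor}$ — your version is a slightly more explicit justification of the same estimate (and your remark about the degenerate case $2^{\lfloor\log_2 k_2\rfloor}<r$ is a legitimate caveat the paper glosses over).
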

\begin{proof}
The left $k_1\times 2^{\lfloor\log_2 k_2\rfloor}$ entries of $\X$ are $\pm1$, and the remaining entries are $0$. This makes the calculation of $\ell_1$ and $\ell_{1,2}$ and Frobenius norms trivial. 

In particular, $\|\X_0\|_F^2=\|\X_0\|_1=k_1 2^{\lfloor\log_2 k_2\rfloor}$, $\|\X_0\|_{1,2}=\sqrt{k_1}2^{\lfloor\log_2 k_2\rfloor}$ and $\|\X_0^T\|_{1,2}=k_12^{\frac{\lfloor\log_2 k_2\rfloor}{2}}$. Substituting these yield the results for these norms.

To lower bound the nuclear norm, observe that, each of the first $r$ rows of the $\Hb$ are repeated at least $\lfloor\frac{k_1}{r}\rfloor$ times in $\X$. Combined with the orthogonality, this ensures that each singular value of $\X$ that is associated with the $j$'th row of $\Hb$ is at least $\sqrt{2^{\lfloor\log_2 k_2\rfloor}\lfloor\frac{k_1}{r}\rfloor}$ for all $1\leq j\leq r$. Consequently,
\beq
\|\X\|_\star\geq r\sqrt{2^{\lfloor\log_2 k_2\rfloor}\lfloor\frac{k_1}{r}\rfloor}\nn
\eeq
Hence,
\beq
\|\bX\|_\star \geq \frac{r\sqrt{2^{\lfloor\log_2 k_2\rfloor}\lfloor\frac{k_1}{r}\rfloor}}{\sqrt{k_1 2^{\lfloor\log_2 k_2\rfloor}}}=\frac{r\sqrt{2^{\lfloor\log_2 k_2\rfloor}\lfloor\frac{k_1}{r}\rfloor}}{\sqrt{ 2^{\lfloor\log_2 k_2\rfloor}}}=r\sqrt{\frac{1}{k_1}\lfloor\frac{k_1}{r}\rfloor}\nn
\eeq
Use the fact that $\lfloor\frac{k_1}{r}\rfloor\geq \frac{k_1}{2r}$ as $k_1\geq r$.
\end{proof}
If we are allowed to use complex numbers, one can apply the same idea with the Discrete Fourier Transform (DFT) matrix. Similar to $\Hb_n$, DFT has orthogonal rows and its entries have the same absolute value. However, it exists for any $n\geq 1$; which would make the argument more concise.

%
%
%
%
%
%
%
%
%

\section{Numerical Experiments} \label{sec:numerical}
\newcommand{\fbestone}{\max\{\frac{\tr{\X}}{\tr{\X_0}} , \frac{\norm{\X}_1}{\norm{\X_0}_1}\}}
\newcommand{\fbestonetwo}{\max\{\frac{\tr{\X}}{\tr{\X_0}} , \frac{\norm{\X}_{1,2}}{\norm{\X_0}_{1,2}}\}}

In this section, we numerically verify our theoretical bounds on the number of measurements for the Sparse and Low-rank recovery problem. We demonstrate the empirical performance of the weighted maximum of the norms $f\best$ (see Lemma \ref{best opt}), as well as the weighted sum of norms.

The experimental setup is as follows. Our goal is to explore how the number of required measurements $m$ scales with the size of the matrix $\md$. We consider a grid of $(m,\md)$ values, and generate at least 100 test instances for each grid point
(in the boundary areas, we increase the number of instances to at least 200).

\begin{figure}[t!]
\centering
\psfrag{$\md$}{$\md$}
\psfrag{m}{$m$}
\includegraphics[scale=0.5]{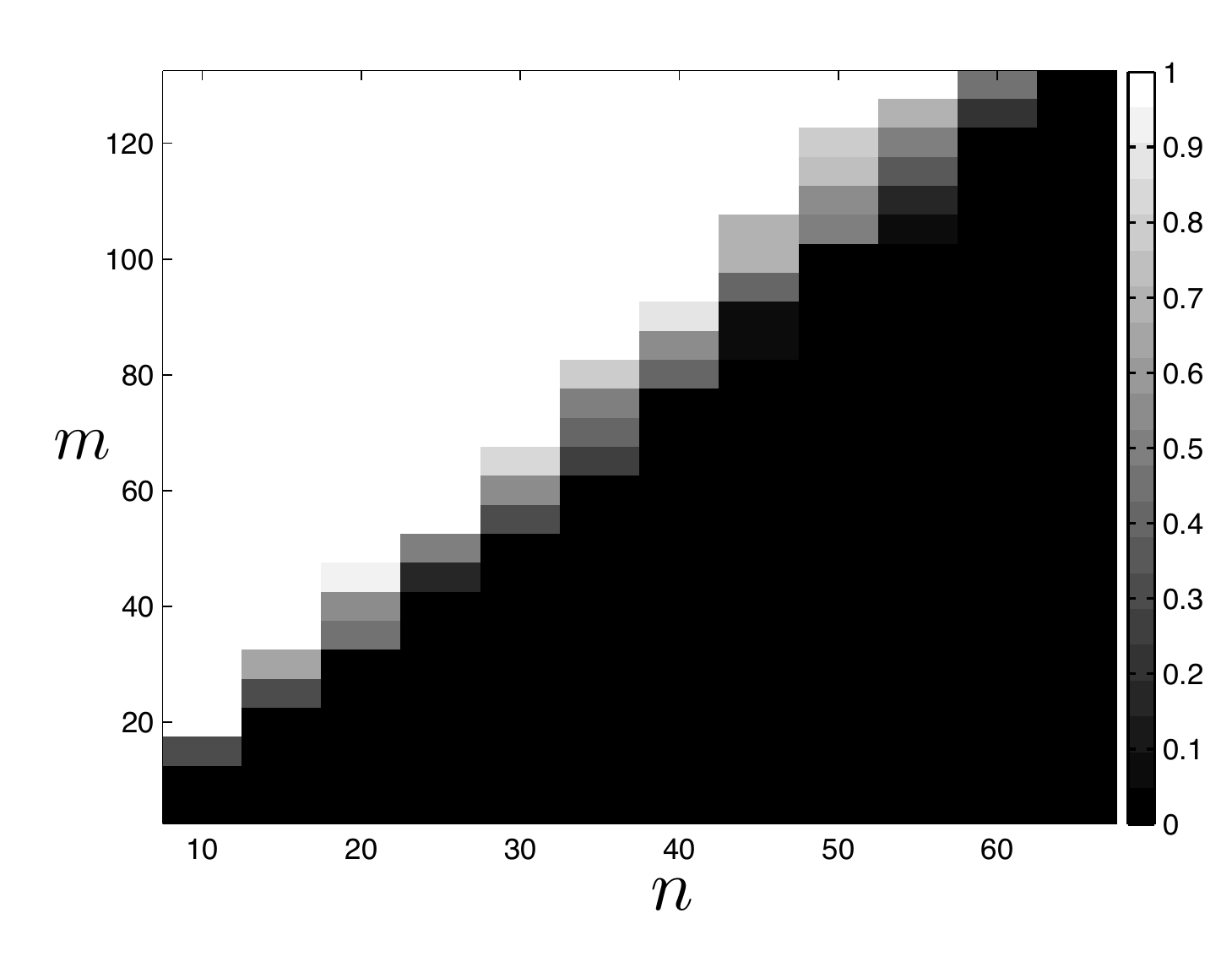}
\caption{Performance of the recovery program minimizing $\fbestonetwo$ with a PSD constraint. The dark region corresponds to the experimental region of failure due to insufficient measurements. As predicted by Theorem \ref{main_thm_sec3}, the number of required measurements increases linearly with $r\md$.}
\label{fig:L12_best_psd}
\end{figure}
We generate the target matrix $\X_0$ by generating a $k\times r$ i.i.d.~Gaussian matrix $\mathbf{G}$, and inserting the $k\times k$ matrix $\mathbf{G}\mathbf{G}^T$ in an $\md\times \md$ matrix of zeros.
We take $r=1$ and $k=8$ in all of the following experiments; even with these small values, we can observe the scaling predicted by our bounds.
In each test, we measure the normalized recovery error $\frac{\fronorm{\X-\X_0}}{\fronorm{\X_0}}$ and declare successful recovery when this error is less than $10^{-4}$.
The optimization programs are solved using the CVX package \cite{cvx}, which calls the SDP solver SeDuMi \cite{sedumi}.

We first test our bound in part (b) of Theorem \ref{main_thm_sec3}, $\Omega(r \md)$, on the number of measurements for recovery in the case of minimizing $\fbestonetwo$ over the set of positive semi-definite matrices. Figure \ref{fig:L12_best_psd} shows the results, which demonstrates $m$ scaling linearly with $\md$ (note that $r=1$).

Next, we replace $\ell_{1,2}$ norm with $\ell_1$ norm and consider a recovery program that emphasizes entry-wise sparsity rather than block sparsity.
Figure \ref{fig:L1_best_psd} demonstrates the lower bound $\Omega(\min\{k^2, \md\})$ in Part (c) of Theorem \ref{main_thm_sec3} where we attempt to recover a rank-1 positive semi-definite matrix $\X_0$ by minimizing $\fbestone$ subject to the measurements and a PSD constraint.
%
%
The green curve in the figure shows the empirical 95\% failure boundary, depicting the region of failure with high probability that our results have predicted. It starts off growing linearly with $\md$, when the term $r\md$ dominates the term $k^2$, and then saturates as $\md$ grows and the $k^2$ term (which is a constant in our experiments) becomes dominant.

\begin{figure}[t!]
\centering
\psfrag{$\md$}{$\md$}
\psfrag{m}{$m$}
\includegraphics[scale=0.5]{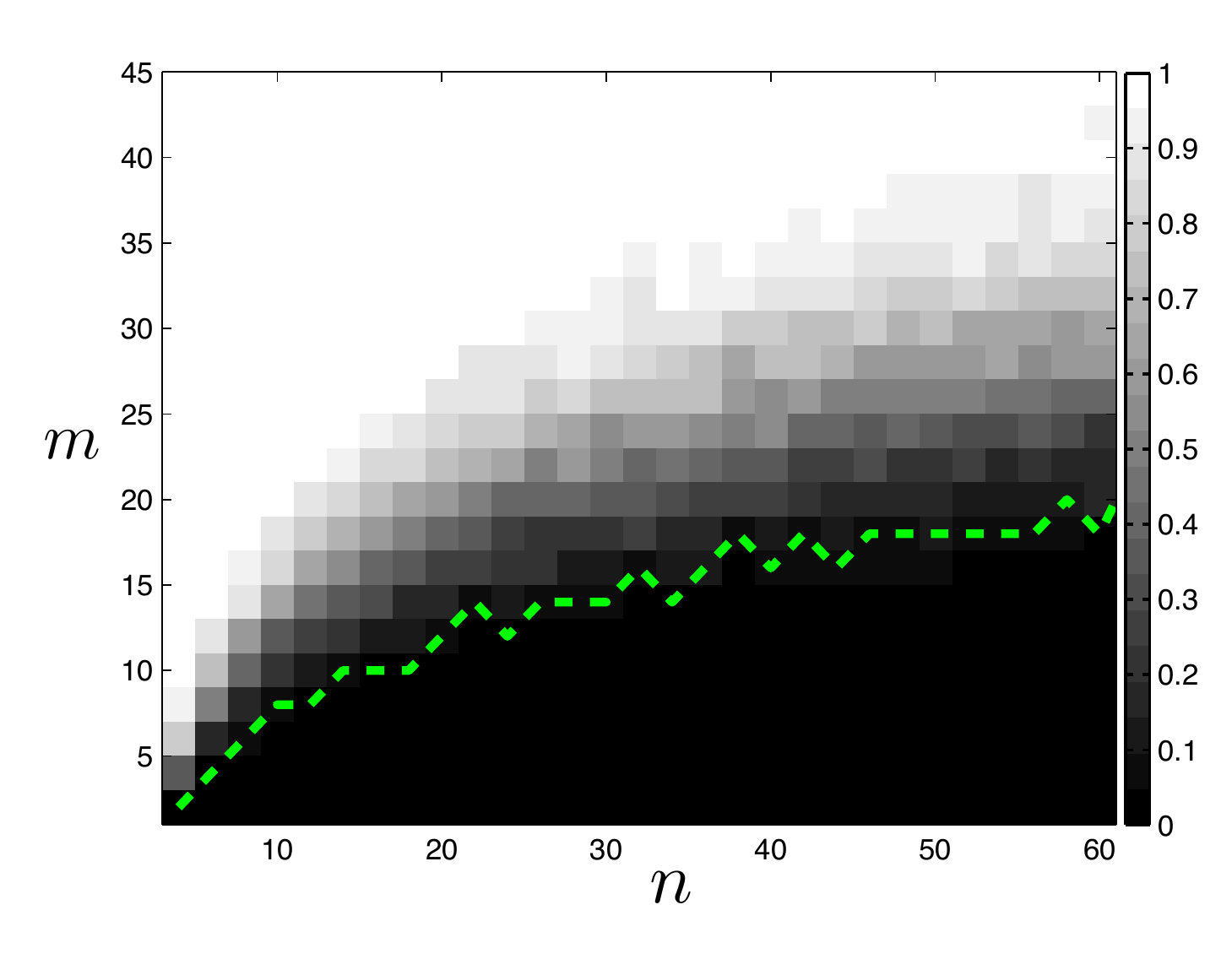}
\caption{Performance of the recovery program minimizing $\fbestone$ with a PSD constraint. $r=1,k=8$ and $\md$ is allowed to vary. The plot shows ${m}$ versus $\md$ to illustrate the lower bound $\Omega(\min\{k^2,\md r\})$ predicted by Theorem \ref{main_thm_sec3}. }
\label{fig:L1_best_psd}
\end{figure}

\begin{figure}[t!]
\begin{minipage}[b]{0.5\linewidth}
\centering
\psfrag{$\md$}{$\md$}
\psfrag{m}{${m}$}
\includegraphics[scale=0.5]{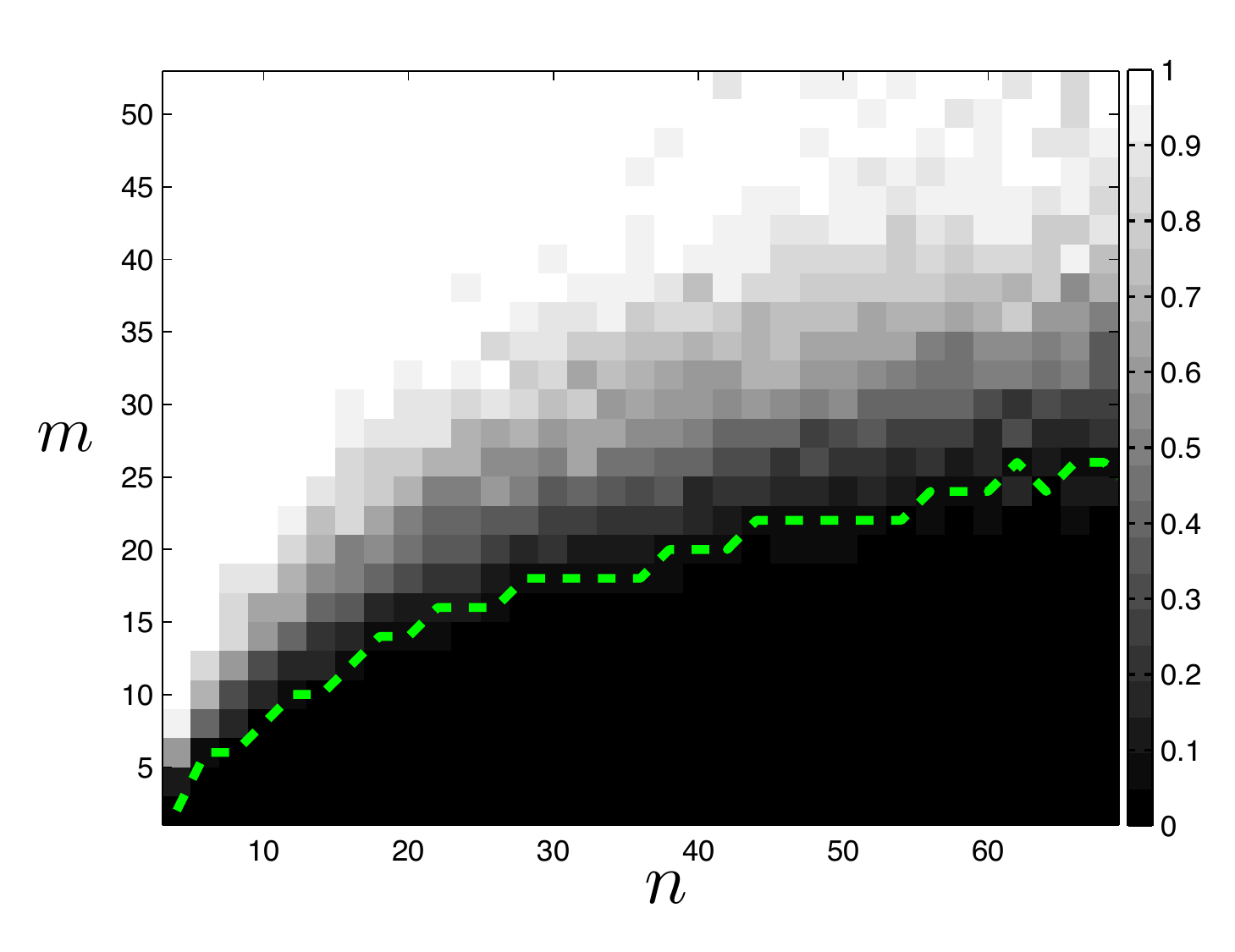}
\end{minipage}
\hspace{0.5cm}
\begin{minipage}[b]{0.5\linewidth}
\centering
\psfrag{$\md$}{$\md$}
\psfrag{m}{${m}$}
\includegraphics[scale=0.5]{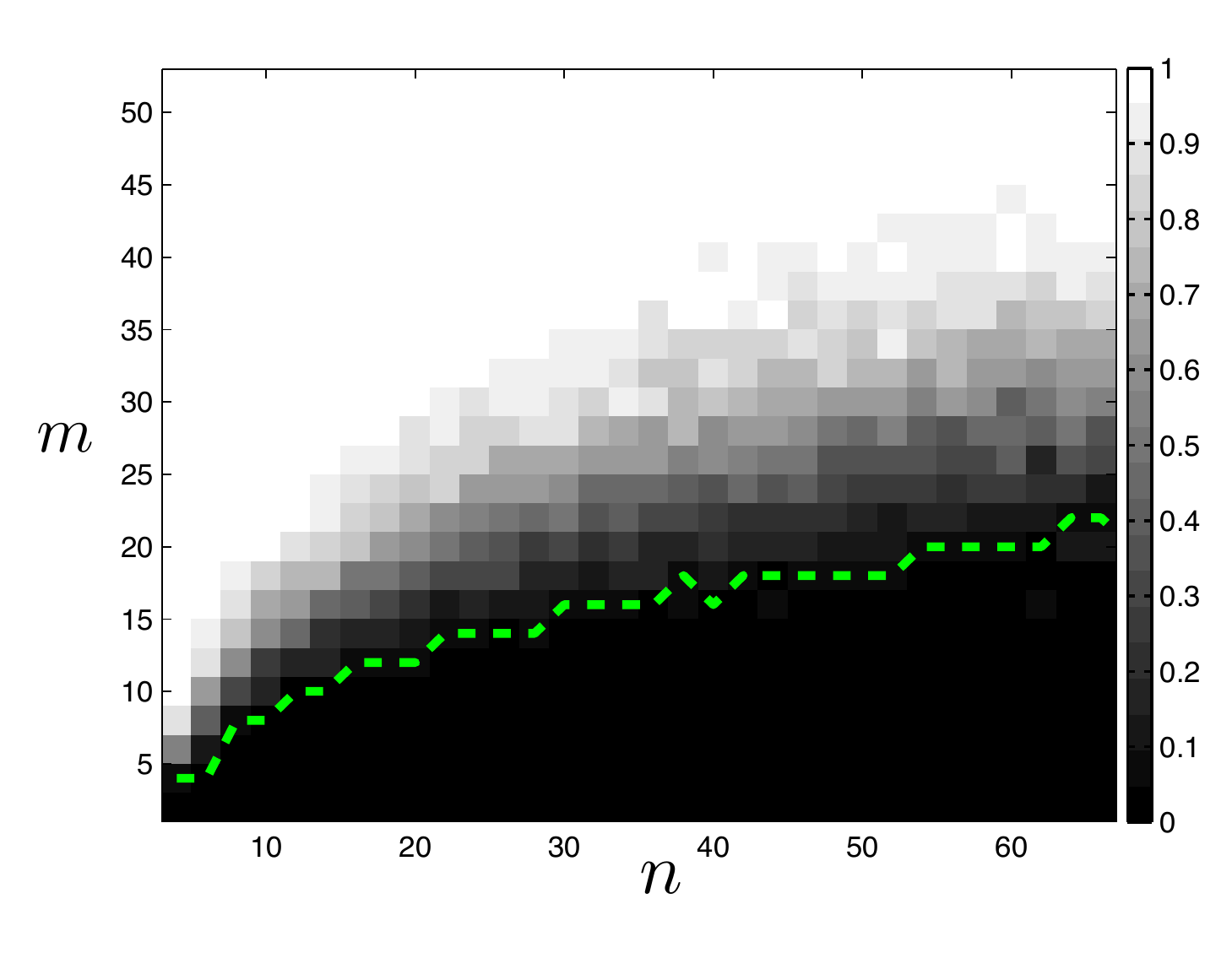}
\end{minipage}
\caption{Performance of the recovery program minimizing $\tr{\X}+\lambda \norm{\X}_1$ with a PSD constraint, for $\lambda= 0.2$ (left) and $\lambda = 0.35$ (right). }
\label{fig:penalty_compare}
\end{figure}


The penalty function $\fbestone$ depends on the norm of $\X_0$. In practice the norm of the solution is not known beforehand, a weighted sum of norms is used instead.
In Figure \ref{fig:penalty_compare} we examine the performance of the weighted sum of norms penalty in recovery of a rank-1 PSD matrix, for different weights. We pick $\lambda=0.20$ and $\lambda=0.35$ for a randomly generated matrix $\X_0$, and it can be seen that we get a reasonable result which is comparable to the performance of $\fbestone$.

In addition, we consider the \emph{amount of error} in the recovery when the program fails.
Figure \ref{twocurves} shows two curves below which we get a $90\%$ percent failure, where for the green curve the normalized error threshold for declaring failure is $10^{-4}$, and for the red curve
it is a larger value of $0.05$. We minimize $\fbestone$ as the objective.
We observe that when the recovery program has an error, it is very likely that this error is large, as the curves for $10^{-4}$ and $0.05$ almost overlap. Thus, when the program fails, it fails badly. This observation agrees with intuition from similar problems in compressed sensing where sharp phase transition is observed.
\begin{figure}[t!]
\centering
\psfrag{$\md$}{$\md$}
\psfrag{mt}{${m}$}
\includegraphics[scale=0.5]{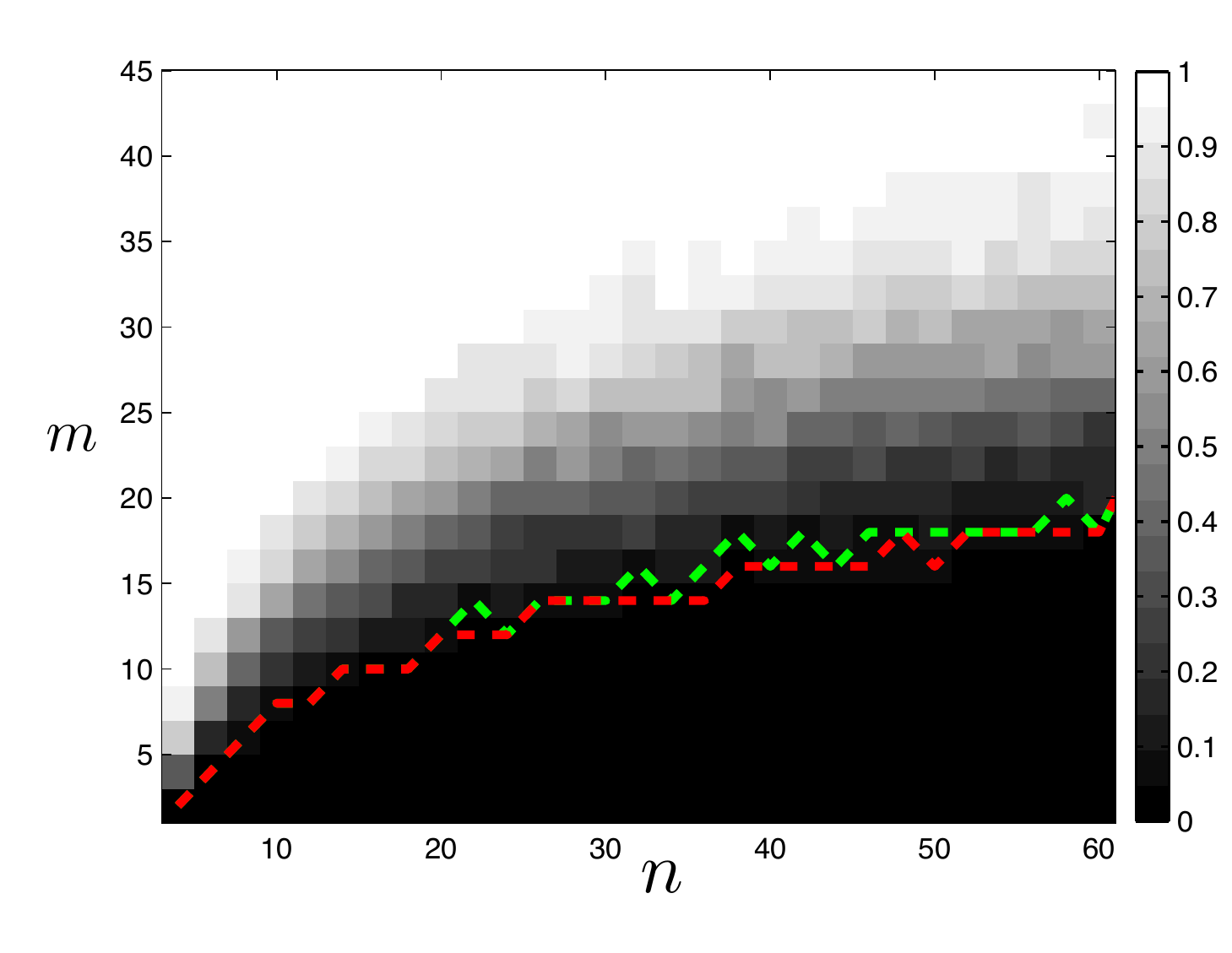}
\caption{90\% frequency of failure where the threshold of recovery is $10^{-4}$ for the green and $0.05$ for the red curve. $\fbestone$ is minimized subject to the PSD constraint and the measurements.}
\label{twocurves}
\end{figure}

As a final comment, observe that, in Figures \ref{fig:L1_best_psd}, \ref{fig:penalty_compare} and \ref{twocurves} the required amount of measurements slowly increases even when $\md$ is large and $k^2=64$ is the dominant constant term. While this is consistent with our lower bound of $\Omega(k^2,\md)$, the slow increase for constant $k$, can be explained by the fact that, as $\md$ gets larger, sparsity becomes the dominant structure and $\ell_1$ minimization by itself requires $\order{k^2\log\frac{\md}{k}}$ measurements rather than $\order{k^2}$. Hence for large $\md$, the number of measurements can be expected to grow logarithmically in $\md$.

\samet{
\begin{figure}[t!]
\centering

\includegraphics[scale=0.26]{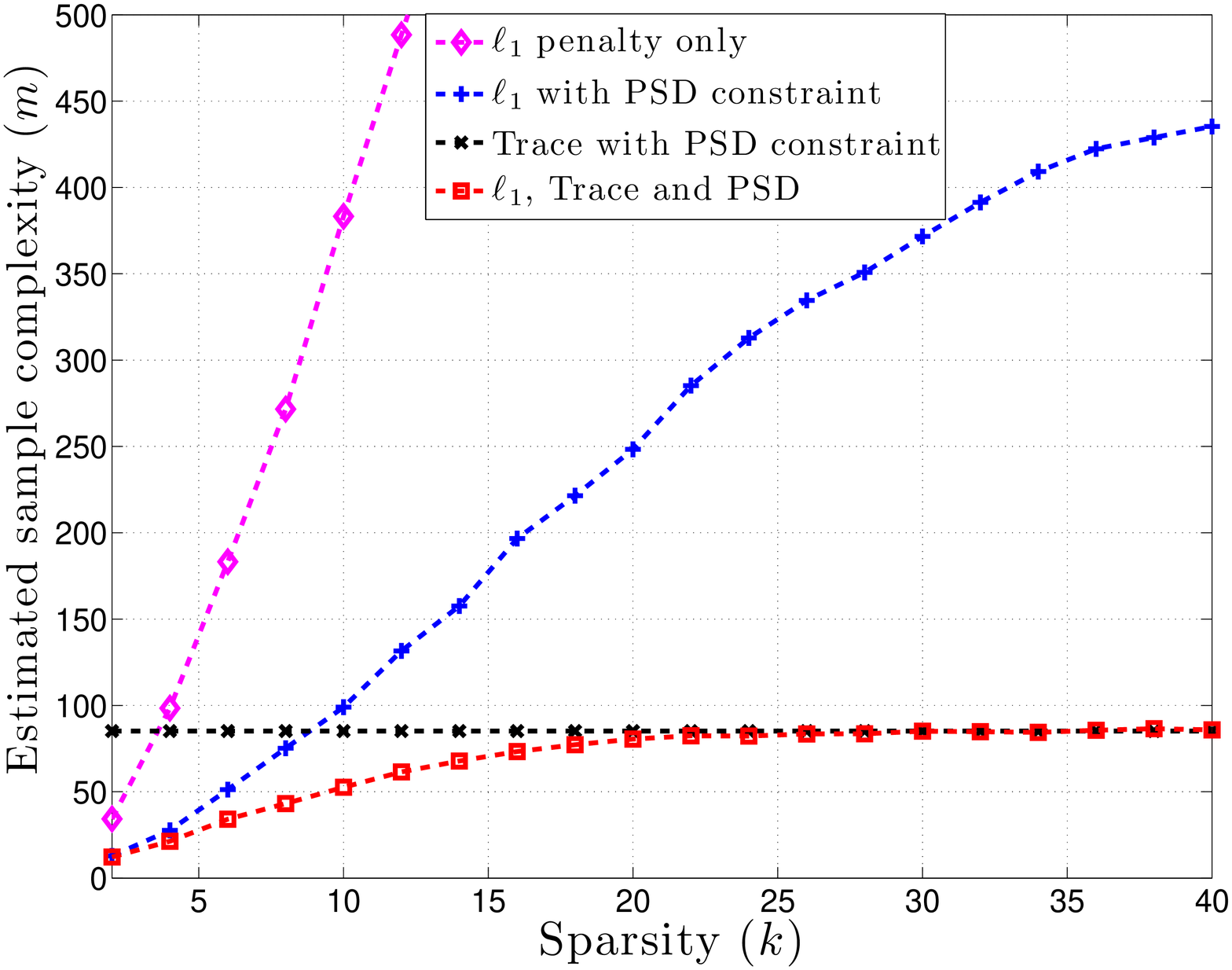}
\caption{We compare sample complexities of different approaches for a rank $1$, $40\times 40$ matrix as function of sparsity. The sample complexities were estimated by a search over $m$, where we chose the $m$ with success rate closest to $50\%$ (over $100$ iterations).}
\label{4compare}
\end{figure}

In Figure \ref{4compare}, we compare the estimated phase transition points for different approaches for varying sparsity levels. The algorithms we compare are,
\begin{itemize}
\item Minimize $\ell_1$ norm,
\item Minimize $\ell_1$ norm subject to the positive-semidefinite constraint,
\item Minimize trace norm subject to the positive-semidefinite constraint,
\item Minimize $\fbestone$ subject to the positive-semidefinite constraint
\end{itemize}

Not surprisingly, the last option outperforms the rest in all cases. On the other hand, its performance is highly comparable to the minimum of the second and third approaches. For all regimes of sparsity, we observe that, measurements required by the last method is at least half as much as the minimum of second and third methods.
}

\section{Discussion} \label{sec:disc}

We have considered the problem of recovery of a simultaneously structured object from limited measurements. It is common in practice to combine known norm penalties corresponding to the individual structures (also known as regularizers in statistics and machine learning applications), and minimize this combined objective in order to recover the object of interest. The common use of this approach motivated us to analyze its performance, in terms of the smallest number of generic measurements needed for correct recovery. We showed that, under a certain assumption on the norms involved, the combined penalty requires more generic measurements than one would expect based on the degrees of freedom of the desired object.
Our lower bounds on the required number of measurements implies that the combined norm penalty cannot perform significantly better than the best individual norm.

These results raise several interesting questions, and lead to directions for future work. We briefly outline some of these directions, as well as connections to some related problems.

\paragraph{Quantifying recovery failure via error bounds.}
We observe from the recovery error plots shown in Figure \ref{twocurves} that whenever our recovery program fails, it fails with a significant recovery error.
The figure shows two curves under which recovery fails with high probability, where failure is defined by the normalized error $\fronorm{\X-\X_0}/ \fronorm{\X_0}$ being above $10^{-4}$ and $0.05$.
The two curves almost coincide.
This observation leads to the question of whether we can characterize how large the error is with a high probability over the random measurements. A lower bound on the recovery error as a function of the number of problem parameters will be very insightful.

\paragraph{Defining new atoms for simultaneously structured models.}
Our results show that combinations of individual norms do not exhibit a strong recovery performance.
On the other hand, the seminal paper \cite{chandra} proposes a remarkably general construction for an appropriate penalty given a set of atoms. Can we revisit a simultaneously structured recovery problem, and define new atoms that capture all structures at the same time? And can we obtain a new norm penalty induced by the convex hull of the atoms? Abstractly, the answer is yes, but such convex hulls may be hard to characterize, and the corresponding penalty may not be efficiently computable.
It is interesting to find special cases where this construction can be carried out and results in a tractable problem. Recent developments in this direction include the ``square norm'' proposed by \cite{SquareDeal} for the low-rank tensor recovery; which provably outperforms \eqref{SSN} for Gaussian measurements and the $(k,q)$-trace norm introduced by Richard et al. to estimate S\&L matrices \cite{Richard}.

\paragraph{Algorithms for minimizing combination of norms.}
Despite the limitation in their theoretical performance, in practice one may still need to solve convex relaxations that combine the different norms, i.e., problem \eqref{rec_class}. Consider the special case of sparse and low-rank matrix recovery. All corresponding optimization problems mentioned in Theorem \ref{main_thm_sec3} can be expressed as a semidefinite program and solved by standard solvers; for example, for the numerical experiments in Section \ref{sec:numerical} we used the interior-point solver SeDuMi \cite{sedumi} via the modeling environment CVX \cite{cvx}. However, interior point methods do not scale for problems with tens of thousands of matrix entries, which are common in machine learning applications. One future research direction is to explore first-order methods, which have been successful in solving problems with a single structure (for example $\ell_1$ or nuclear norm regularization alone). In particular, Alternating Directions Methods of Multipliers (ADMM) appears to be a promising candidate.

\paragraph{Characterizing the tightness of the lower bounds.}
The results provided in this paper are negative in nature, as we characterize the lower bounds on the required amount of measurements for mixed convex recovery problems. However, it would be interesting to see how much we can gain by making use of multiple norms and how tight are these lower bounds. In \cite{samet}, authors investigate a specific simultaneous model where signal $\x\in\R^n$ is sparse in both time and frequency domains, i.e., $\x$ and $\Db\x$ are $k_1,k_2$ sparse respectively where $\Db$ is the Discrete Fourier Transform matrix. For recovery, the authors consider minimizing $\|\x\|_1+\la\|\Db\x\|_1$ subject to measurements. Intuitively, results of this paper would suggest the necessity of $\Omega(\min\{k_1,k_2\})$ measurements for successful recovery. On the other hand, best of the individual functions ($\ell_1$ norms) will require $\Omega(\min\{k_1\log\frac{n}{k_1},k_2\log \frac{n}{k_2}\})$ measurements. In \cite{samet}, it is shown that the mixed approach will require as little as $\max\{k_1,k_2\}\log\log n$ under mild assumptions. 

This shows that the mixed approach can result in a logarithmic improvement over the individual functions when $k_1\approx k_2$ and the lower bound given by this paper might be achievable up to a small factor.

\paragraph{Connection to Sparse PCA.}
The sparse PCA problem (see, e.g. \cite{NesterovSPCA,PCA1,d'Aspremontetal}) seeks sparse principal components given a (possibly noisy) data matrix. Several formulations for this problem exist, and many algorithms have been proposed. In particular, a popular algorithm is the SDP relaxation proposed in \cite{d'Aspremontetal}, which is based on the following formulation.

For the first principal component to be sparse, we seek an $\x\in\R^{n}$ that maximizes $\x^T\mathbf{A}\x$ for a given data matrix $\mathbf{A}$, and minimizes $\|\x\|_0$. Similar to the sparse phase retrieval problem, this problem can be reformulated in terms of a rank-1, PSD matrix $\X=\x\x^T$ which is also row- and column-sparse.
Thus we seek a simultaneously low-rank and sparse $\X$. This problem is different from the recovery problem studied in this paper, since we do not have $m$ random measurements of
$\X$. Yet, it will be interesting to connect this paper's results to the sparse PCA problem to potentially provide new insights for sparse PCA.

\paragraph{\textbf{Acknowledgements.}}
This work was supported in part by the National Science Foundation under grants CCF-0729203, CNS-0932428 and CCF-1018927, by the Office of Naval Research under the MURI grant N00014-08-1-0747, by Caltech's Lee Center for Advanced Networking,
and by the National Science Foundation CAREER award ECCS-0847077.
The work of Y. Eldar is supported in part by the Israel Science Foundation under Grant no. 170/10, in part by the Ollendorf Foundation, and in part by a Magnet grant Metro450 from the Israel Ministry of Industry and Trade.


\newpage
\appendix
\begin{center}\large{{APPENDIX}}\end{center}
\section{Properties of Cones} \label{app_cone}
In this appendix, we state some results regarding cones which are used in the proof of general recovery. Recall the definitions of polar and dual cones from Section \ref{setup}.

\begin{theorem}[Moreau's decomposition theorem, \cite{moreau}] \label{decamp}Let $\Cc$ be a closed and convex cone in $\R^n$. Then, for any $\x\in\R^n$, we have
\begin{itemize}
\item $\x=\Pc_\Cc(\x)+\Pc_\Cp(\x)$.
\item $\li\Pc_\Cc(\x),\Pc_\Cp(\x)\ri=0$.
\end{itemize}
\end{theorem}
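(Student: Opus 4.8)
The plan is to first establish the existence of \emph{some} decomposition $\x = \y + \w$ with $\y\in\Cc$, $\w\in\Cp$, and $\li\y,\w\ri=0$, and then show separately that any such decomposition must consist of the two projections. For existence, I would set $\y = \Pc_\Cc(\x)$, which is well-defined and unique because $\Cc$ is closed and convex. The variational characterization of the projection onto a convex set gives $\li \x-\y,\z-\y\ri \le 0$ for every $\z\in\Cc$. Since $\Cc$ is a cone, both $0$ and $2\y$ lie in $\Cc$; substituting $\z = 2\y$ gives $\li\x-\y,\y\ri\le 0$ and substituting $\z = 0$ gives $\li\x-\y,\y\ri\ge 0$, hence $\li\x-\y,\y\ri = 0$. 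Plugging this back, $\li\x-\y,\z\ri = \li\x-\y,\z-\y\ri \le 0$ for all $\z\in\Cc$, which by the definition of the dual cone in Section \ref{setup} means $-(\x-\y)\in\Cc^*$, i.e. $\w := \x-\y \in \Cp$. This produces the desired decomposition.

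For the second step, I would prove the claim: if $\x = \a+\b$ with $\a\in\Cc$, $\b\in\Cp$ and $\li\a,\b\ri = 0$, then $\a = \Pc_\Cc(\x)$ and $\b=\Pc_\Cp(\x)$. Fix any $\z\in\Cc$ and expand
\[
\|\x-\z\|_2^2 = \|(\a-\z)+\b\|_2^2 = \|\a-\z\|_2^2 + 2\li\a-\z,\b\ri + \|\b\|_2^2.
\]
Because $\b\in\Cp=-\Cc^*$ we have $\li\z,\b\ri\le 0$, and by hypothesis $\li\a,\b\ri=0$, so $\li\a-\z,\b\ri = -\li\z,\b\ri\ge 0$. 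Hence $\|\x-\z\|_2^2 \ge \|\a-\z\|_2^2 + \|\b\|_2^2 \ge \|\b\|_2^2 = \|\x-\a\|_2^2$, with equality forcing $\z=\a$; thus $\a$ is the unique closest point of $\Cc$ to $\x$, i.e. $\a=\Pc_\Cc(\x)$. The statement $\b = \Pc_\Cp(\x)$ follows by the identical computation with $\Cc$ and $\Cp$ interchanged, using that $\a\in\Cc$ implies $\li\z',\a\ri\le0$ for all $\z'\in\Cp$, which is immediate from $\Cp = -\Cc^*$ and the definition of $\Cc^*$; so no appeal to the bipolar theorem is needed.

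Combining the two steps, the decomposition $\x = \y+\w$ from Step 1 satisfies the hypotheses of the claim, so $\y = \Pc_\Cc(\x)$ and $\w = \Pc_\Cp(\x)$, which gives both $\x = \Pc_\Cc(\x)+\Pc_\Cp(\x)$ and $\li\Pc_\Cc(\x),\Pc_\Cp(\x)\ri = \li\y,\w\ri = 0$. I do not expect a genuine obstacle here; the only points requiring care are the sign bookkeeping between the polar cone $\Cp$ and the dual cone $\Cc^*$, and the use of scaling invariance of the cone to upgrade the one-sided variational inequality to the exact orthogonality relation $\li\x-\y,\y\ri=0$.
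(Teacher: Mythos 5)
Your proof is correct and complete: the paper does not actually prove this statement, it simply quotes it from Moreau's 1962 note, so there is nothing internal to compare against. Your argument is the standard one — use the variational inequality for the projection onto the closed convex set $\Cc$, exploit the scaling invariance of the cone (testing against $0$ and $2\y$) to upgrade the inequality to the exact orthogonality $\li\x-\y,\y\ri=0$, conclude $\x-\Pc_\Cc(\x)\in\Cp$, and then verify by direct expansion of $\|\x-\z\|_2^2$ that any orthogonal decomposition into $\Cc$ and $\Cp$ components must be the pair of projections. The sign bookkeeping with the paper's conventions ($\Cc^*=\{\z:\li\z,\vv\ri\geq 0\ \forall\vv\in\Cc\}$ and $\Cp=-\Cc^*$) is handled correctly, and you are right that no bipolar theorem is needed for the symmetric half of the argument.
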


\begin{lemma}[Projection is nonexpansive]\label{lem_proj}
Let $\Cc\in\R^n$ be a closed and convex set and $\ab,\bb\in\R^n$ be vectors. Then,
\beq
\|\Pc_\Cc(\ab)-\Pc_\Cc(\bb)\|_2\leq\|\ab-\bb\|_2.\nn
\eeq
\end{lemma}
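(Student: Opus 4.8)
The plan is to invoke the standard variational (obtuse-angle) characterization of the Euclidean projection onto a nonempty closed convex set. Recall that since $\Cc\subset\R^n$ is closed and convex (and nonempty, as is implicit for $\Pc_\Cc$ to be well defined), the minimizer in $\Pc_\Cc(\x)=\arg\min_{\ub\in\Cc}\twonorm{\x-\ub}$ exists and is unique, and the first-order optimality condition for this convex program states that $\ub^\star=\Pc_\Cc(\x)$ is characterized by $\iprod{\x-\ub^\star}{\ub-\ub^\star}\le 0$ for every $\ub\in\Cc$. If one prefers a self-contained derivation: for $\ub\in\Cc$ and $t\in[0,1]$, convexity gives $(1-t)\ub^\star+t\ub\in\Cc$, so $\twonorm{\x-\ub^\star}^2\le\twonorm{\x-\ub^\star-t(\ub-\ub^\star)}^2=\twonorm{\x-\ub^\star}^2-2t\iprod{\x-\ub^\star}{\ub-\ub^\star}+t^2\twonorm{\ub-\ub^\star}^2$; dividing by $t>0$ and letting $t\downarrow 0$ yields $\iprod{\x-\ub^\star}{\ub-\ub^\star}\le 0$.

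Next I would set $\ub=\Pc_\Cc(\ab)$ and $\vb=\Pc_\Cc(\bb)$. Applying the characterization at the point $\ab$ with the test point $\vb\in\Cc$ gives $\iprod{\ab-\ub}{\vb-\ub}\le 0$, and at the point $\bb$ with the test point $\ub\in\Cc$ gives $\iprod{\bb-\vb}{\ub-\vb}\le 0$. Adding these two inequalities and rewriting $\ub-\vb=-(\vb-\ub)$ gives
\[
\iprod{(\ab-\bb)-(\ub-\vb)}{\,\vb-\ub\,}\le 0,
\]
which, after expanding and using $\iprod{\ub-\vb}{\vb-\ub}=-\twonorm{\ub-\vb}^2$, rearranges to $\twonorm{\ub-\vb}^2\le\iprod{\ab-\bb}{\ub-\vb}$.

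Finally, Cauchy--Schwarz on the right-hand side gives $\twonorm{\ub-\vb}^2\le\twonorm{\ab-\bb}\,\twonorm{\ub-\vb}$; dividing by $\twonorm{\ub-\vb}$ (the claimed inequality being trivial when this quantity vanishes) yields exactly $\twonorm{\Pc_\Cc(\ab)-\Pc_\Cc(\bb)}\le\twonorm{\ab-\bb}$, as desired. There is essentially no obstacle here: the only step needing minor care is the variational characterization of the projection, which is just the first-order condition for a smooth convex minimization over a convex set, so I would either cite it or include the two-line argument above.
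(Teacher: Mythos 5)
Your proof is correct and complete: the obtuse-angle characterization of the projection, adding the two inequalities, and Cauchy--Schwarz is the canonical argument, and in fact it establishes the stronger firm nonexpansiveness $\twonorm{\Pc_\Cc(\ab)-\Pc_\Cc(\bb)}^2\le\iprod{\ab-\bb}{\Pc_\Cc(\ab)-\Pc_\Cc(\bb)}$ along the way. The paper states this lemma without proof as a standard fact, so there is no alternative argument to compare against.
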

\begin{corollary}\label{wanter} Let $\Cc$ be a closed convex cone and $\ab,\bb$ be vectors satisfying $\Pc_\Cc(\ab-\bb)=0$. Then
\beq
\|\bb\|_2\geq \|\Pc_\Cc(\ab)\|_2.\nn
\eeq
\end{corollary}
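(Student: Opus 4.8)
The plan is to deduce the inequality directly from the nonexpansiveness of the metric projection (Lemma \ref{lem_proj}), noting that a closed convex cone is in particular a closed convex set, so that lemma applies to $\Cc$.

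First I would use the hypothesis $\Pc_\Cc(\ab-\bb)=0$ to rewrite $\Pc_\Cc(\ab)$ as the \emph{difference} of two projections, namely $\Pc_\Cc(\ab)=\Pc_\Cc(\ab)-\Pc_\Cc(\ab-\bb)$. Then I would apply Lemma \ref{lem_proj} to the pair of points $\ab$ and $\ab-\bb$, which gives
\[
\|\Pc_\Cc(\ab)-\Pc_\Cc(\ab-\bb)\|_2\;\leq\;\|\ab-(\ab-\bb)\|_2\;=\;\|\bb\|_2 .
\]
Combining the two observations yields $\|\Pc_\Cc(\ab)\|_2\leq\|\bb\|_2$, which is exactly the claim.

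There is essentially no obstacle here; the only points to be careful about are that Lemma \ref{lem_proj} does apply (it requires $\Cc$ closed and convex, which holds for a closed convex cone) and that the trivial identity $\Pc_\Cc(\ab)=\Pc_\Cc(\ab)-\Pc_\Cc(\ab-\bb)$ is invoked with the correct orientation. As an alternative route that avoids nonexpansiveness, one could instead use Moreau's decomposition (Theorem \ref{decamp}): from $\Pc_\Cc(\ab-\bb)=0$ one gets $\ab-\bb\in\Cp$; writing $\ab=\Pc_\Cc(\ab)+\Pc_{\Cp}(\ab)$ with $\langle\Pc_\Cc(\ab),\Pc_{\Cp}(\ab)\rangle=0$, one computes $\langle\bb,\Pc_\Cc(\ab)\rangle=\|\Pc_\Cc(\ab)\|_2^2-\langle\ab-\bb,\Pc_\Cc(\ab)\rangle\geq\|\Pc_\Cc(\ab)\|_2^2$, the last inequality because $\ab-\bb\in\Cp$ and $\Pc_\Cc(\ab)\in\Cc$, and then Cauchy--Schwarz finishes the proof. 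Either way the argument is just a couple of lines.
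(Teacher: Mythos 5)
Your main argument is exactly the paper's proof: the identity $\Pc_\Cc(\ab)=\Pc_\Cc(\ab)-\Pc_\Cc(\ab-\bb)$ followed by the nonexpansiveness of the projection (Lemma \ref{lem_proj}) applied to $\ab$ and $\ab-\bb$, which is the one-line argument given in the paper. The Moreau-decomposition alternative you sketch is also correct, but it is not needed; the primary route matches the paper's proof verbatim.
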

\begin{proof}
Using Lemma \ref{lem_proj}, we have $\|\Pc_\Cc(\ab)\|_2=\|\Pc_\Cc(\ab)-\Pc_\Cc(\ab-\bb)\|_2\leq \|\bb\|_2$.
\end{proof}
%
The unit sphere in $\R^n$ will be denoted by $\Sc^{n-1}$ for the following theorems.
\begin{theorem}[Escape through a mesh, \cite{Gordon}]\label{ETM}
For a given set $\D\in\Sc^{n-1}$, define the \emph{Gaussian width} as
\[
\omega(\D) = \E \left[ \sup_{\x\in\D} \iprod{\x}{\g} \right],
\]
in which $\g\in\R^n$ has i.i.d.~standard Gaussian entries. Given $m$, let $d=\sqrt{n-m}-\frac{1}{4\sqrt{n-m}}$. Provided that $\omega(\D)\leq d$ a random $m-$dimensional subspace which is uniformly drawn w.r.t.~Haar measure will have no intersection with $\D$ with probability at least
\beqa \label{gordon_prob}
1-3.5 \exp(-(d-\omega(\D))^2).
\eeqa
\end{theorem}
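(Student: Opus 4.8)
\textit{Proof proposal.} The plan is to realize the random $m$-dimensional subspace as the kernel of a Gaussian matrix and then apply Gordon's Gaussian min--max comparison inequality. First, let $\Ab \in \R^{(n-m)\times n}$ have i.i.d.\ $\Nc(0,1)$ entries; almost surely $\rk(\Ab) = n-m$, so $\ker(\Ab)$ is an $m$-dimensional subspace, and since $\Ab\mathbf{O} \stackrel{d}{=} \Ab$ for every orthogonal $\mathbf{O}$, the law of $\ker(\Ab)$ is orthogonally invariant and hence is the Haar measure on the Grassmannian. It therefore suffices to lower bound $\Pr[\ker(\Ab) \cap \D = \emptyset]$. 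Since $\ker(\Ab)\cap\D = \emptyset$ whenever $\nu(\Ab) := \inf_{\x\in\D}\|\Ab\x\|_2 > 0$, and since $\|\Ab\x\|_2 = \max_{\ub\in\Sc^{n-m-1}}\iprod{\ub}{\Ab\x}$, it is enough to control the min--max of the centered Gaussian process $X_{\x,\ub} = \iprod{\ub}{\Ab\x}$ over $(\x,\ub)\in\D\times\Sc^{n-m-1}$.

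Next I would invoke Gordon's inequality in its min--max tail form, comparing $X_{\x,\ub}$ with the associated process $Y_{\x,\ub} = \iprod{\g}{\x} + \iprod{\h}{\ub}$, where $\g\in\R^n$ and $\h\in\R^{n-m}$ are independent standard Gaussian vectors. Using $\|\x\|_2 = \|\ub\|_2 = 1$, one verifies the required variance and slope conditions (equal variances, covariances nondecreasing in the inner ``max'' variable $\ub$ and nonincreasing across distinct outer ``min'' variables $\x$). Gordon's inequality then gives, for every $\lambda$,
\[
\Pr[\nu(\Ab) \geq \lambda] \;\geq\; \Pr\Big[\,\inf_{\x\in\D}\iprod{\g}{\x} + \|\h\|_2 \geq \lambda \,\Big] \;=\; \Pr\Big[\,\|\h\|_2 - \sup_{\x\in\D}\iprod{\g}{\x} \geq \lambda \,\Big],
\]
using $\g \stackrel{d}{=} -\g$ in the last step. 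Letting $\lambda\downarrow 0$ yields $\Pr[\ker(\Ab)\cap\D=\emptyset] \geq \Pr[\nu(\Ab)>0] \geq \Pr[\|\h\|_2 > \sup_{\x\in\D}\iprod{\g}{\x}]$.

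It then remains to bound the right-hand probability. The maps $\h\mapsto\|\h\|_2$ and $\g\mapsto\sup_{\x\in\D}\iprod{\g}{\x}$ are $1$-Lipschitz, with $\E\sup_{\x\in\D}\iprod{\g}{\x} = \omega(\D)$ and $\E\|\h\|_2 \geq d$, where $d = \sqrt{n-m} - \tfrac{1}{4\sqrt{n-m}}$ is the standard lower bound for the mean of a chi variable in dimension $n-m$. Applying Gaussian Lipschitz concentration to the two independent quantities around the common level $\tfrac12(d+\omega(\D))$ shows $\Pr[\|\h\|_2 \leq \sup_{\x\in\D}\iprod{\g}{\x}]$ decays like $\exp(-c\,(d-\omega(\D))^2)$; to recover the precise constant $3.5$ and the full exponent $(d-\omega(\D))^2$ one follows Gordon's sharper single-event estimate rather than a crude two-term union bound, which is exactly the quantitative content of \cite{Gordon}.

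I expect the main obstacle to be twofold: (i) matching the covariance/slope hypotheses of Gordon's inequality to its min--max version --- the ordering of the ``min'' and ``max'' index sets must be exactly as in Gordon's statement --- together with a routine separability argument so that the inf and sup over the possibly non-closed set $\D$ are measurable and Gordon's inequality applies by approximation on countable dense subsets; and (ii) extracting the sharp constant $3.5$ and the undivided exponent $(d-\omega(\D))^2$, which requires Gordon's precise concentration bookkeeping rather than the off-the-shelf argument, and which is the reason the result is quoted verbatim from \cite{Gordon}.
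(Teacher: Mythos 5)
The paper does not actually prove this statement: Theorem \ref{ETM} is Gordon's escape-through-a-mesh theorem, imported verbatim from \cite{Gordon} and used as a black box (in the proof of Theorem \ref{proj_norm}), so there is no in-paper argument to compare yours against. On its own merits, your sketch follows the standard route --- indeed essentially Gordon's own: realize the Haar-random $m$-dimensional subspace as $\ker(\Ab)$ for a Gaussian $\Ab\in\R^{(n-m)\times n}$, reduce to lower-bounding $\inf_{\x\in\D}\|\Ab\x\|_2$, apply the Gaussian min--max comparison against $Y_{\x,\ub}=\iprod{\g}{\x}+\iprod{\h}{\ub}$, and finish with concentration of $\|\h\|_2-\sup_{\x\in\D}\iprod{\g}{\x}$ using $\E\|\h\|_2\geq d$. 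Two points in your sketch are loose, though neither is a conceptual gap. First, the ``equal variances'' condition does not hold for the processes as you define them: $\E X_{\x,\ub}^2=\|\x\|_2^2\|\ub\|_2^2=1$ while $\E Y_{\x,\ub}^2=\|\x\|_2^2+\|\ub\|_2^2=2$; the standard fix is to augment $X_{\x,\ub}$ with an independent scalar Gaussian term $g_0\|\x\|_2\|\ub\|_2$ (or to verify Gordon's hypotheses in their increment form), after which the covariance comparison reduces to $(1-\iprod{\x}{\x'})(1-\iprod{\ub}{\ub'})\geq 0$ with equality when $\x=\x'$. Second, the concentration step as you describe it --- two independent $1$-Lipschitz functionals compared around the midpoint $\tfrac12(d+\omega(\D))$ --- only yields an exponent of order $(d-\omega(\D))^2/4$ rather than the full $(d-\omega(\D))^2$ with the constant $3.5$; you flag this yourself and correctly defer the sharp bookkeeping to \cite{Gordon}, which is exactly what the paper does by quoting the result rather than reproving it.
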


\begin{theorem} \label{proj_norm}
Consider a random Gaussian map $\Gc:\R^n\to\R^m$ with i.i.d.~entires and the corresponding adjoint operator $\Gc^*$. Let $\Cc$ be a closed and convex cone and recalling Definition \ref{cwidth}, let
\[
\zeta(\Cc):=1-\bDb(\Cc),~\gamma(\Cc):= 2\sqrt{\frac{1+\bDb(\Cc)}{1-\bDb(\Cc)}}.
\]
where $\bDb(\Cc)=\frac{\Db(\Cc)}{\sqrt{n}}$. Then, if $m\leq \frac{7\zeta(\Cc)}{16}n$, with probability at least $1-6\exp(-(\frac{\zeta(\Cc)}{4})^2n)$, for all $\z\in\R^n$ we have
\beqa \label{eq:proj_norm}
\twonorm{\Gc^*(\z)} \leq \gamma(\Cc) \twonorm{\Pc_\Cc (\Gc^*(\z))}.
\eeqa
\end{theorem}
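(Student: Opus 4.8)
The plan is to recast \eqref{eq:proj_norm} as the event that a random $m$-dimensional subspace misses a ``bad set'' on the sphere, and then estimate the Gaussian width of that set. Since $\Gc^*(\z)=\Meb^T\z$ with $\Meb$ having i.i.d.\ $\Nc(0,1)$ entries, the subspace $V:=\rng(\Meb^T)$ is almost surely $m$-dimensional (as $m\le n$) and uniformly distributed with respect to Haar measure: indeed $\Meb U^T\eqd\Meb$ for every orthogonal $U$, so $V=\rng(\Meb^T)$ is rotationally invariant in law. Because $\Pc_\Cc$ is positively homogeneous, \eqref{eq:proj_norm} holds for every $\z$ if and only if every unit vector $\w\in V$ obeys $\twonorm{\Pc_\Cc(\w)}\ge 1/\gamma(\Cc)$, i.e.\ if and only if $V$ does not meet
\[
\D:=\{\,\w\in\Sc^{n-1}\ :\ \twonorm{\Pc_\Cc(\w)}\le 1/\gamma(\Cc)\,\}.
\]
I would then apply Gordon's escape-through-a-mesh theorem (Theorem~\ref{ETM}) to $\D$; this reduces the problem to bounding $\omega(\D)$, verifying $\omega(\D)\le\sqrt{n-m}-\tfrac{1}{4\sqrt{n-m}}$, and translating Gordon's probability into the stated one.

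\textbf{Bounding the width.} Fix $\w\in\D$. Moreau's theorem (Theorem~\ref{decamp}) gives $\w=\Pc_\Cc(\w)+\Pc_\Cp(\w)$ with the two pieces orthogonal, so $\twonorm{\Pc_\Cc(\w)}^2+\twonorm{\Pc_\Cp(\w)}^2=1$. Using the elementary identity $\sup\{\iprod{\ub}{\g}:\ub\in K,\ \twonorm{\ub}\le1\}=\twonorm{\Pc_K(\g)}$ for any closed convex cone $K$ (applied with $K=\Cc$ and $K=\Cp$),
\[
\iprod{\w}{\g}\le \twonorm{\Pc_\Cc(\w)}\,\twonorm{\Pc_\Cc(\g)}+\twonorm{\Pc_\Cp(\w)}\,\twonorm{\Pc_\Cp(\g)}\le \tfrac{1}{\gamma(\Cc)}\twonorm{\Pc_\Cc(\g)}+\twonorm{\Pc_\Cp(\g)}.
\]
Taking the supremum over $\w\in\D$ and then the expectation, and recognizing via Moreau that $\E\twonorm{\Pc_\Cp(\g)}=\E\,\dtt(\g,\Cc)=\Db(\Cc)$ and $\E\twonorm{\Pc_\Cc(\g)}=\E\,\dtt(\g,\Cp)=\Db(\Cp)$, I obtain $\omega(\D)\le \Db(\Cc)+\tfrac{1}{\gamma(\Cc)}\Db(\Cp)$.

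\textbf{Finishing.} The crucial input is that $\Db(\Cp)$ is small precisely when $\Db(\Cc)$ is large: by Jensen and Moreau, $\Db(\Cp)^2\le \E\twonorm{\Pc_\Cc(\g)}^2$ and $\Db(\Cc)^2\le \E\twonorm{\Pc_\Cp(\g)}^2$, while $\E\twonorm{\Pc_\Cc(\g)}^2+\E\twonorm{\Pc_\Cp(\g)}^2=\E\twonorm{\g}^2=n$; hence $\Db(\Cp)^2\le n-\Db(\Cc)^2=n(1-\bDb(\Cc)^2)$. Plugging this together with $1/\gamma(\Cc)=\tfrac12\sqrt{(1-\bDb(\Cc))/(1+\bDb(\Cc))}$ into the width bound collapses it to $\omega(\D)\le \sqrt n\big(\bDb(\Cc)+\tfrac12(1-\bDb(\Cc))\big)=\tfrac{1+\bDb(\Cc)}{2}\sqrt n$. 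It remains to check $\tfrac{1+\bDb(\Cc)}{2}\sqrt n\le \sqrt{n-m}-\tfrac{1}{4\sqrt{n-m}}$ under $m\le\tfrac{7\zeta(\Cc)}{16}n$ with $\zeta(\Cc)=1-\bDb(\Cc)$; after squaring, the leading requirement reduces to the elementary inequality $(1-\bDb(\Cc))(4\bDb(\Cc)+5)\ge0$, and the lower-order $\tfrac{1}{4\sqrt{n-m}}$ term is absorbed because the statement is vacuous when $m=0$ (i.e.\ when $\zeta(\Cc)<16/(7n)$). Gordon's theorem then yields failure with probability at most $3.5\exp(-(d-\omega(\D))^2)$, and since $d-\omega(\D)\ge\tfrac{\zeta(\Cc)}{4}\sqrt n-\order{n^{-1/2}}$ this is at most $6\exp(-(\zeta(\Cc)/4)^2 n)$, giving \eqref{eq:proj_norm} with the claimed probability.

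\textbf{Main obstacle.} The delicate step is the width estimate. Covering $\D$ crudely by a $(1/\gamma(\Cc))$-neighborhood of $\Cp\cap B_2^n$ gives only $\omega(\D)\lesssim \Db(\Cc)+\tfrac{1}{\gamma(\Cc)}\sqrt n$, which can exceed $\sqrt n$ when $\bDb(\Cc)$ is near $1$ and is therefore worthless there; obtaining the sharp bound requires exploiting the complementarity $\E\twonorm{\Pc_\Cc(\g)}^2+\E\twonorm{\Pc_\Cp(\g)}^2=n$ to control $\Db(\Cp)$ and then matching it against the specific constant built into $\gamma(\Cc)$, which is exactly what makes $m\le\tfrac{7\zeta(\Cc)}{16}n$ the right threshold.
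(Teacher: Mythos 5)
Your proposal is correct and follows essentially the same route as the paper's proof: reduce \eqref{eq:proj_norm} to the event that the Haar-uniform range of $\Gc^*$ misses the set $\D=\{\w\in\Sc^{n-1}:\twonorm{\Pc_\Cc(\w)}\le 1/\gamma(\Cc)\}$, bound $\omega(\D)\le \gamma(\Cc)^{-1}\Db(\Cp)+\Db(\Cc)\le\frac{1+\bDb(\Cc)}{2}\sqrt n$ via Moreau's decomposition and the complementarity $\bDb(\Cc)^2+\bDb(\Cp)^2\le1$, and finish with Gordon's escape-through-a-mesh theorem, absorbing the $\tfrac{1}{4\sqrt{n-m}}$ correction into the constant $6$. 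The only cosmetic difference is that you invoke the identity $\sup_{\ub\in K,\twonorm{\ub}\le1}\iprod{\ub}{\g}=\twonorm{\Pc_K(\g)}$ where the paper drops the nonpositive cross terms and applies Cauchy--Schwarz directly; both yield the same width bound.
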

\begin{proof}
For notational simplicity, let $\zeta=\zeta(\Cc)$ and $\gamma=\gamma(\Cc)$. Consider the set
\[
\D = \left\{  \x\in\Sc^{n-1}: \;\; \twonorm{\x} \geq \gamma \twonorm{\Pc_\Cc(\x)}  \right\}.
\]
and we are going to show that with high probability, the range of $\Gc^*$ misses $\D$. 
Using Theorem \ref{decamp}, for any $\x\in\D$, we may write
\begin{align}
\iprod{\x}{\g} &=\iprod{\Pc_\Cc(\x)+\Pc_\Cp(\x)}{\Pc_\Cc(\g)+\Pc_\Cp(\g)}\nn\\
&\leq \iprod{\Pc_\Cc(\x)}{\Pc_\Cc(\g)}  + \iprod{\Pc_\Cp(\x)}{\Pc_\Cp(\g)} \label{ineq1}\\
&\leq \twonorm{\Pc_\Cc(\x)}\twonorm{\Pc_\Cc(\g)}  + \twonorm{\Pc_\Cp(\x)}\twonorm{\Pc_\Cp(\g)} \nn\\
&\leq \gamma^{-1}\twonorm{\Pc_\Cc(\g)}  + \twonorm{\Pc_\Cp(\g)}\nn
\end{align}
where in \eqref{ineq1} we used the fact that elements of $\Cc$ and $\Cp$ have nonpositive inner products and $\twonorm{\Pc_\Cc(\x)}  \leq \twonorm{\x}$ is by Lemma \ref{lem_proj}.
Hence, from the definition of Gaussian width,
\begin{align*}
\omega(\D) =  \E \left[ \sup_{\x \in \D} \iprod{\x}{\g} \right] &\leq  \gamma^{-1}\E \left[  \twonorm{\Pc_\Cc(\g)}  \right] + \E \left[ \twonorm{\Pc_\Cp(\g)}  \right] \\
&\leq \sqrt{n}(\gamma^{-1} \bDb(\Cp)+\bDb(\Cc)) \leq \frac{2-\zeta}{2}\sqrt{n}.
\end{align*}
Where we used the fact that $\gamma\geq \frac{2\bDb(\Cp)}{1-\bDb(\Cc)}$; which follows from $\bDb(\Cc)^2+\bDb(\Cp)^2\leq 1$ (see Theorem \ref{decamp} above). Hence, whenever,
\beq
m\leq \frac{7\zeta}{16}n\leq (1-(\frac{4-\zeta}{4})^2)n = m',\nn
\eeq
using the upper bound on $\omega(\D)$, we have,
\begin{align}
(\sqrt{n-m}-\omega(\D)-\frac{1}{4\sqrt{n-m}})^2\geq (\sqrt{n-m}-\omega(\D))^2-\frac{1}{2}\geq (\frac{\zeta}{4})^2n-\frac{1}{2}.
\end{align}
Now, using Theorem \ref{ETM}, the range space of $\Gc^*$ will miss the undesired set $\D$ with probability at least $ 1-3.5\exp(-(\frac{\zeta}{4})^2n+\frac{1}{2})\geq 1-6\exp(-(\frac{\zeta}{4})^2n)$.
\end{proof}
%
%
%
%
%
%
\begin{lemma} \label{psdcorol}Consider the cones $\Sbb^\md$ and $\Sbb^\md_+$ in the space $\R^{\md\times \md}$. Then, $\bDb(\Sbb^\md)<\frac{1}{\sqrt{2}}$ and $\bDb(\Sbb^\md_+)<\frac{\sqrt{3}}{2}$.

\end{lemma}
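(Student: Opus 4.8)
The plan is to express the Gaussian distance of each cone as the expected norm of a projection onto its polar cone, bound this by a second moment via Jensen's inequality, and then compute the second moment explicitly using the sign symmetry of the Gaussian ensemble. Throughout, let $\G\in\R^{\md\times\md}$ have i.i.d.\ $\Nc(0,1)$ entries; this is the vector $\h$ of Definition~\ref{cwidth} with ambient dimension $n=\md^2$, so that $\bDb(\Cc)=\Db(\Cc)/\md$ and $\twonorm{\cdot}$ is the Frobenius norm.

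First I would recall that, for a closed convex cone $\Cc$, Moreau's decomposition (Theorem~\ref{decamp}) gives $\inf_{\vb\in\Cc}\twonorm{\G-\vb}=\twonorm{\G-\Pc_\Cc(\G)}=\twonorm{\Pc_{\Cc^\circ}(\G)}$, hence $\Db(\Cc)=\E\,\twonorm{\Pc_{\Cc^\circ}(\G)}$. By concavity of $\sqrt{\cdot}$, $\Db(\Cc)\le(\E\,\twonorm{\Pc_{\Cc^\circ}(\G)}^2)^{1/2}$, so it suffices to show $\E\,\twonorm{\Pc_{\Cc^\circ}(\G)}^2<\tfrac12\md^2$ for $\Cc=\Sbb^\md$ and $<\tfrac34\md^2$ for $\Cc=\Sbb^\md_+$.

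For $\Cc=\Sbb^\md$, a subspace, the polar cone is the orthogonal complement, namely the antisymmetric matrices, and $\Pc_{(\Sbb^\md)^\perp}(\G)=\tfrac12(\G-\G^T)$; its entries vanish on the diagonal and have variance $\tfrac12$ off it, so $\E\,\twonorm{\Pc_{(\Sbb^\md)^\perp}(\G)}^2=\tfrac12\md(\md-1)$ and $\bDb(\Sbb^\md)\le\sqrt{(\md-1)/(2\md)}<1/\sqrt2$. For $\Cc=\Sbb^\md_+$, the step needing care is identifying the polar cone inside all of $\R^{\md\times\md}$: because $\iprod{Y}{X}=\iprod{\tfrac12(Y+Y^T)}{X}$ when $X$ is symmetric, one finds $(\Sbb^\md_+)^\circ=\{Y:\ Y+Y^T\preceq0\}$, which is the orthogonal direct sum of the antisymmetric subspace and the cone $-\Sbb^\md_+$ of negative semidefinite symmetric matrices. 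Hence the projection splits, and with $\G_a=\tfrac12(\G-\G^T)$, $\G_s=\tfrac12(\G+\G^T)$,
\[
\twonorm{\Pc_{(\Sbb^\md_+)^\circ}(\G)}^2=\twonorm{\G_a}^2+\twonorm{\Pc_{-\Sbb^\md_+}(\G_s)}^2 .
\]
The first term contributes $\tfrac12\md(\md-1)$ in expectation. For the second, $\twonorm{\Pc_{-\Sbb^\md_+}(\G_s)}^2=\sum_{\lambda_i(\G_s)<0}\lambda_i(\G_s)^2$, and since $\G_s\eqd-\G_s$ the negative eigenvalues carry, in expectation, half of $\E\,\twonorm{\G_s}^2=\md+\tfrac12\md(\md-1)=\tfrac12\md(\md+1)$, i.e.\ $\tfrac14\md(\md+1)$. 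Adding, $\E\,\twonorm{\Pc_{(\Sbb^\md_+)^\circ}(\G)}^2=\tfrac12\md(\md-1)+\tfrac14\md(\md+1)=\tfrac14(3\md^2-\md)$, so $\bDb(\Sbb^\md_+)\le\sqrt{(3\md-1)/(4\md)}<\sqrt3/2$.

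The entrywise variance bookkeeping and the symmetric-Gaussian second moment are routine; the only genuinely delicate point is the description of $(\Sbb^\md_+)^\circ$ as an orthogonal direct sum (the antisymmetric part is \emph{free}, not forced to be zero, since we polarize in $\R^{\md\times\md}$ rather than in $\Sbb^\md$), which is what makes the projection—and hence the expectation—split, together with the use of the $X\mapsto-X$ symmetry to avoid any semicircle-law computation.
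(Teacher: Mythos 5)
Your proof is correct and follows essentially the same route as the paper's: both reduce $\Db(\Cc)$ to the second moment of $\twonorm{\Pc_{\Cc^\circ}(\G)}$ and evaluate it using the sign symmetry $\G\eqd-\G$, arriving at the same quantities $\tfrac{\md(\md-1)}{2}$ and $\tfrac{\md(3\md-1)}{4}$. You are in fact slightly more careful on two points: you make the Jensen step $\E\twonorm{\cdot}\le(\E\twonorm{\cdot}^2)^{1/2}$ explicit (the paper writes it as an equality), and you identify $(\Sbb^\md_+)^\circ$ in the ambient space $\R^{\md\times\md}$ correctly as $\{Y:\,Y+Y^T\preceq 0\}$ rather than just $\Sbb^\md_-$, although the paper's arithmetic is unaffected because it obtains the polar second moment by Moreau complementation from $\md^2$.
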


\begin{proof} Let $\G$ be a $\md\times \md$ matrix with i.i.d. standard normal entries. Set of symmetric matrices $\Sbb^\md$ is an $\frac{\md(\md+1)}{2}$ dimensional subspace of $\R^{\md\times \md}$. Hence, $\E\|\Pc_{\Sbb^\md}(\G)\|_F^2=\frac{\md(\md+1)}{2}$ and $\E\|\Pc_{(\Sbb^\md)^\circ}(\G)\|_F^2=\frac{\md(\md-1)}{2}$. Hence,
\beq
\bDb(\Sbb^\md)=\sqrt{\frac{\md(\md-1)}{2\md^2}}<\frac{1}{\sqrt{2}}.\nn
\eeq
To prove the second statement, observe that projection of a matrix $\Ab\in\R^{\md\times \md}$ onto $\Sbb^\md_+$ is obtained by first projecting $\Ab$ onto $\Sbb^\md$ and then taking the matrix induced by  the positive eigenvalues of $\Pc_{\Sbb^\md}(\Ab)$. Since, $\G$ and $-\G$ are identically distributed and $\Sbb^\md_+$ is a self dual cone, $\Pc_{\Sbb^\md_+}(\G)$ is identically distributed as $-\Pc_{\Sbb^\md_-}(\G)$ where $\Sbb^\md_- = (\Sbb^\md_+)^\circ$ stands for negative semidefinite matrices. Hence,
\beq
\E\|\Pc_{\Sbb^\md_+}(\G)\|_F^2=\frac{\E\|\Pc_{\Sbb^\md}(\G)\|_F^2}{2}=\frac{\md(\md+1)}{4},~\E\|\Pc_{(\Sbb^\md_+)^\circ}(\G)\|_F^2=\frac{\md(3\md-1)}{4}.\nn
\eeq
Consequently, $\bDb(\Sbb^\md_+)=\sqrt{\frac{3}{4}-\frac{1}{4\md}}<\sqrt{\frac{3}{4}}$.

\end{proof}

\section{Norms in Sparse and Low-rank Model} \label{app:norms}   \label{sec:proof51}

\subsection{Relevant notation for the proofs}\label{appnot}
Let $[k]$ denote the set $\{1,2,\dots,k\}$. Let $S_c,S_r$ denote the indexes of the nonzero columns and rows of $\X_0$ so that nonzero entries of $\X_0$ lies on $S_r\times S_c$ submatrix. $\Sc_c,\Sc_r$ denotes the $k_1,k_2$ dimensional subspaces of vectors whose nonzero entries lie on $S_c$ and $S_r$ respectively.

Let $\X_0$ have singular value decomposition $\U\bSi\V^T$ such that $\bSi\in \R^{r\times r}$ and columns of $\U,\V$ lies on $\Sc_c,\Sc_r$ respectively.

\subsection{Proof of Lemma \ref{signvec}}
\begin{proof}
Observe that $T_c=\R^\md\times \Sc_c$ and $T_r=\Sc_r\times \R^\md$ hence $T_c\cap T_r$ is the set of matrices that lie on $S_r\times S_c$. Hence, $\Eb_\st=\U\V^T\in T_c\cap T_r$. Similarly, $\Eb_c$ and $\Eb_r$ are the matrices obtained by scaling columns and rows of $\X_0$ to have unit size. As a result, they also lie on $S_r\times S_c$ and $T_c\cap T_r$. $\Eb_\st\in T_\st$ by definition.

Next, we may write $\Eb_c = \X_0 \Db_c$ where $\Db_c$ is the scaling nonnegative diagonal matrix. Consequently, $\Eb_c$ lies on the range space of $\X_0$ and belongs to $T_\st$. This follows from definition of $T_\st$ in Lemma \ref{lemdecomp} and the fact that $(\Ib-\U\U^T)\Eb_c=0$.

In the exact same way, $\Eb_r=\Db_r\X_0$ for some nonnegative diagonal $\Db_r$ and lies on the range space of $\X^T$ and hence lies on $T_\st$. Consequently, $\Eb_\st,\Eb_c,\Eb_r$ lies on $T_c\cap T_r\cap T_\st$.

Now, consider
\[
\iprod{\Eb_c}{\Eb_\st} = \iprod{\X_0\Db_c}{\U\V^T} = \tr{\V\U^T\U\bSi\V^T\Db_c} = \tr{\V\bSi\V^T\Db^c}\geq 0.
\]
since both $\V\bSi\V^T$ and $\Db^c$ are positive semidefinite matrices. In the exact same way, we have $\iprod{\Eb_c}{\Eb_\st}\geq 0$. Finally,
\beq
\iprod{\Eb_c}{\Eb_r}= \iprod{\X_0\Db_c}{\Db_r\X_0}   =\tr{\Db_c\X_0^T\Db_r\X_0} \geq 0,\nn
\eeq
since both $\Db_c$ and $\X_0^T\Db_r\X_0$ are PSD matrices. Overall, the pairwise inner products of $\Eb_r,\Eb_c,\Eb_\st$ are nonnegative.


\end{proof}

\subsection{Results on the positive semidefinite constraint}
\begin{lemma}\label{verysimp} Assume $\X,\Y\in \Sbb_+^\md$ have eigenvalue decompositions $\X=\sum_{i=1}^{\text{rank}(\X)}\sigma_i\ub_i\ub_i^T$ and $\Y=\sum_{i=1}^{\text{rank}(\Y)} c_i\vb_i\vb_i^T$. Further, assume $\li\Y,\X\ri=0$. Then, $\U^T\Y=0$ where $\U=[\ub_1~\ub_2~\dots~\ub_{\text{rank}(\X)}]$.
\end{lemma}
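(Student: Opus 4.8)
The statement is a clean consequence of positive semidefiniteness together with the trace-inner-product being zero. The plan is to expand $\langle \Y, \X\rangle$ using the eigendecomposition of $\X$ and exploit the fact that each rank-one summand $\sigma_i \ub_i\ub_i^T$ contributes a nonnegative term. First I would write
\beq
0 = \langle \X, \Y\rangle = \Big\langle \sum_{i=1}^{\rk(\X)} \sigma_i \ub_i\ub_i^T,\; \Y\Big\rangle = \sum_{i=1}^{\rk(\X)} \sigma_i\, \ub_i^T \Y \ub_i. \nn
\eeq
Since $\Y \succeq 0$, each term $\ub_i^T \Y \ub_i \geq 0$, and since each $\sigma_i > 0$ (these are the nonzero eigenvalues of $\X$), every summand is nonnegative; a sum of nonnegative numbers being zero forces $\sigma_i\, \ub_i^T \Y \ub_i = 0$, hence $\ub_i^T \Y \ub_i = 0$ for all $i = 1,\dots,\rk(\X)$.

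Next I would upgrade "$\ub_i^T\Y\ub_i = 0$" to "$\Y\ub_i = 0$". Writing $\Y = \Y^{1/2}\Y^{1/2}$ with $\Y^{1/2}\succeq 0$ (the PSD square root), we get $0 = \ub_i^T\Y\ub_i = \|\Y^{1/2}\ub_i\|_2^2$, so $\Y^{1/2}\ub_i = 0$, and therefore $\Y\ub_i = \Y^{1/2}(\Y^{1/2}\ub_i) = 0$. Alternatively one can use the eigendecomposition $\Y = \sum_j c_j \vb_j\vb_j^T$ with $c_j > 0$: then $0 = \ub_i^T\Y\ub_i = \sum_j c_j (\vb_j^T\ub_i)^2$ forces $\vb_j^T\ub_i = 0$ for every $j$, hence $\Y\ub_i = \sum_j c_j \vb_j(\vb_j^T\ub_i) = 0$. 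Either way, $\ub_i \in \Null(\Y)$ for each $i = 1,\dots,\rk(\X)$.

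Finally, since $\U = [\ub_1~\ub_2~\cdots~\ub_{\rk(\X)}]$ has the $\ub_i$ as its columns and $\Y\ub_i = 0$ for each $i$, we conclude $\Y\U = 0$, and transposing (using $\Y = \Y^T$) gives $\U^T\Y = 0$, as desired. There is no real obstacle here; the only point requiring a moment's care is the passage from the quadratic form $\ub_i^T\Y\ub_i = 0$ to the vector identity $\Y\ub_i = 0$, which is exactly where positive semidefiniteness of $\Y$ (not merely symmetry) is used — for an indefinite $\Y$ the conclusion would fail.
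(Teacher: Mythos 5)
Your proof is correct and follows essentially the same route as the paper: the paper expands $\li\Y,\X\ri$ simultaneously in both eigenbases to get $\sum_{i,j}\sigma_i c_j|\ub_i^T\vb_j|^2=0$ and concludes all cross terms vanish, which is exactly your second ("alternatively") argument, while your primary square-root variant is only a cosmetic repackaging of the same positivity argument. No gaps.
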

\begin{proof} Observe that,
\beq
\li\Y,\X\ri=\sum_{i=1}^{\text{rank}(\X)}\sum_{j=1}^{\text{rank}(\Y)}\sigma_ic_j|\ub_i^T\vb_j|^2.\nn
\eeq
Since $\sigma_i,c_j>0$, right hand side is $0$ if and only if $\ub_i^T\vb_j=0$ for all $i,j$. Hence, the result follows.
\end{proof}

\begin{lemma} \label{useful psd}Assume $\X_0\in\Sbb_+^\md$ so that in Section \ref{appnot}, $S_c=S_r$, $T_c=T_r$, $k_1=k_2=k$ and $\U=\V$. Let $\Rc= T_c\cap T_r\cap T_\st \cap \Sbb^\md$, $S_\st=T_\st\cap \Sbb^\md$, and,
\beq
\Yc=\{\Y\big|\Y\in (\Sbb_+^\md)^*,~\li\Y,\X_0\ri=0\},\nn
\eeq
 Then, the following statements hold.
\begin{itemize}
\item $S_\st\subseteq\text{span}(\Yc)^\perp$. Hence, $\tcs\subseteq S_\st$ and is orthogonal to $\Yc$.
\item $\Eb_\st\in \tcs$, $\frac{\|\Pc_{\tcs}(\Eb_c)\|_F}{\|\Eb_c\|_F}=\frac{\|\Pc_{\tcs}(\Eb_r)\|_F}{\|\Eb_r\|_F}\geq \frac{1}{\sqrt{2}}$.
\end{itemize}
\end{lemma}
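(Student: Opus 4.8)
The plan is to establish the two bullets separately, relying on Lemma \ref{verysimp}, Lemma \ref{lemdecomp}, and Lemma \ref{signvec}, and on the fact that $\tcs$ and $S_\st$ are linear subspaces, so $\Pc_\tcs$ is the orthogonal (Frobenius) projection.

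\emph{First bullet.} The idea is that $\li\Y,\X_0\ri=0$ for a PSD $\Y$ forces $\Y$ to avoid the range of $\U$. I would fix $\Y\in\Yc$; since $(\Sbb^\md_+)^*=\Sbb^\md_+$, we have $\Y\succeq 0$ with $\li\Y,\X_0\ri=0$, so Lemma \ref{verysimp} applied to $\X_0=\U\bSi\U^T$ gives $\U^T\Y=0$, and hence $\Y\U=(\U^T\Y)^T=0$ by symmetry of $\Y$. Now take any $\M\in S_\st=T_\st\cap\Sbb^\md$ and use the parametrization of $T_\st$ from Lemma \ref{lemdecomp} (with $\V=\U$ here): $\M=\Z_1\U^T+\U\Z_2^T$. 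Then, using $\M^T=\M$,
\[
\li\M,\Y\ri=\tr{\M\Y}=\tr{\Z_1\U^T\Y}+\tr{\Z_2^T\,\Y\U}=0 ,
\]
since $\U^T\Y=0$ and $\Y\U=0$. As $\Y\in\Yc$ was arbitrary, $\M\perp\spn(\Yc)$, i.e. $S_\st\subseteq\spn(\Yc)^\perp$. The ``hence'' is then immediate: $\tcs=T_c\cap T_r\cap T_\st\cap\Sbb^\md\subseteq T_\st\cap\Sbb^\md=S_\st$, so $\tcs$ is also orthogonal to $\Yc$.

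\emph{Second bullet.} That $\Eb_\st\in\tcs$ is immediate: $\Eb_\st=\U\U^T$ is symmetric and, by Lemma \ref{signvec}, lies in $T_\st\cap T_c\cap T_r$. The real work is to compute $\Pc_\tcs(\Eb_c)$. Since $\X_0$ is symmetric, its $i$-th row and $i$-th column have equal norm, so the diagonal matrices that normalize the rows and the columns coincide; writing $\Eb_c=\X_0\Db$ for this common diagonal $\Db\succeq 0$ gives $\Eb_r=\Db\X_0=\Eb_c^T$. By Lemma \ref{signvec}, both $\Eb_c$ and $\Eb_c^T=\Eb_r$ lie in the subspace $W:=T_c\cap T_r\cap T_\st$. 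Hence $\tfrac12(\Eb_c+\Eb_c^T)\in W$, is symmetric, and so lies in $\tcs=W\cap\Sbb^\md$; while $\tfrac12(\Eb_c-\Eb_c^T)$ is skew-symmetric, hence orthogonal to $\Sbb^\md\supseteq\tcs$. This identifies $\Pc_\tcs(\Eb_c)=\tfrac12(\Eb_c+\Eb_c^T)$, and expanding,
\[
\fronorm{\Pc_\tcs(\Eb_c)}^2=\tfrac14\fronorm{\Eb_c+\Eb_c^T}^2=\tfrac12\fronorm{\Eb_c}^2+\tfrac12\li\Eb_c,\Eb_c^T\ri\ \geq\ \tfrac12\fronorm{\Eb_c}^2 ,
\]
where $\li\Eb_c,\Eb_c^T\ri=\li\Eb_c,\Eb_r\ri\geq 0$ by Lemma \ref{signvec}. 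This gives $\fronorm{\Pc_\tcs(\Eb_c)}/\fronorm{\Eb_c}\geq 1/\sqrt2$. Running the identical computation with the roles of $\Eb_c$ and $\Eb_r$ exchanged yields $\fronorm{\Pc_\tcs(\Eb_r)}^2=\tfrac12\fronorm{\Eb_r}^2+\tfrac12\li\Eb_r,\Eb_c\ri$; since $\fronorm{\Eb_c}=\fronorm{\Eb_r}$ and this expression is symmetric under the swap, the two ratios are equal, which finishes the claim.

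\emph{Main obstacle.} The only step needing genuine care is the identification $\Pc_\tcs(\Eb_c)=\tfrac12(\Eb_c+\Eb_c^T)$: one must check that $\Eb_c$ \emph{and} its transpose both lie in $T_c\cap T_r\cap T_\st$ (so that their average stays inside $\tcs$), which is exactly where Lemma \ref{signvec} together with the identity $\Eb_r=\Eb_c^T$ is invoked, and that the remaining skew-symmetric part is $\tcs$-orthogonal, which is automatic because $\tcs\subseteq\Sbb^\md$. Everything else is bookkeeping with traces and the explicit description of $T_\st$.
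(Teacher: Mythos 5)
Your argument follows the paper's proof essentially step for step: both bullets rest on Lemma \ref{verysimp} to annihilate the range of $\U$, on Lemma \ref{signvec} for membership of the sign vectors in $T_\st\cap T_c\cap T_r$ and for $\li\Eb_c,\Eb_r\ri\geq 0$, and on the identification $\Pc_\tcs(\Eb_c)=\tfrac{1}{2}(\Eb_c+\Eb_r)$ followed by expanding $\|\Eb_c+\Eb_r\|_F^2=2\|\Eb_c\|_F^2+2\li\Eb_c,\Eb_r\ri$. The second bullet is fine, and you actually justify the projection identity (symmetric part lands in $\tcs$, skew part is orthogonal to $\Sbb^\md$) more explicitly than the paper does.

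The one step that needs repair is the claim that $(\Sbb^\md_+)^*=\Sbb^\md_+$, hence $\Y\succeq 0$. The dual cone here is taken in the ambient space $\R^{\md\times\md}$ (this is how $\Cc^*$ enters Proposition \ref{bound2}), and there $(\Sbb_+^\md)^*=\Sbb_+^\md+\text{Skew}^\md$: a general $\Y\in\Yc$ is a PSD matrix plus an arbitrary skew-symmetric matrix, so it need not be symmetric, and your step ``$\Y\U=(\U^T\Y)^T=0$ by symmetry of $\Y$'' does not apply to it as written. The paper's proof symmetrizes first, replacing $\Y$ by $\Z=\Y+\Y^T\succeq 0$, which still satisfies $\li\Z,\X_0\ri=0$, and uses $\li\Y,\X\ri=\li\Z,\X\ri/2$ for symmetric $\X$. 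The same one-line fix rescues your version: the skew part of $\Y$ is orthogonal to every element of $S_\st\subseteq\Sbb^\md$, and the symmetric part is PSD with zero inner product against $\X_0$, so your computation applies to it verbatim. With that amendment the proof is correct and coincides with the paper's.
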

\begin{proof}
The dual of $\Sbb_+^\md$ with respect to $\R^{\md\times \md}$ is the set sum of $\Sbb_+^\md$ and $\text{Skew}^\md$ where $\text{Skew}^\md$ is the set of skew-symmetric matrices. Now, assume, $\Y\in \Yc$ and $\X\in S_\st$. Then, $\li\Y,\X\ri=\li\frac{\Z}{2},\X\ri$ where $\Z=\Y+\Y^T\in \Sbb_+^\md$ and $\li\Z,\X_0\ri=0$. Since $\X_0$, $\Z$ are both PSD, applying Lemma \ref{verysimp}, we have $\U^T\Z=0$ hence $(\Ib-\U\U^T)\Z(\Ib-\U\U^T)=\Z$ which means $\Z\in T_\st^\perp$. Hence, $\li\Z,\X\ri=\li\Y,\X\ri=0$ as $\X\in S_\st\subset T_\st$. Hence, $\text{span}(\Yc)\subseteq S_\st^\perp$. 

%

For the second statement, let $T_\cap=T_\st\cap T_c\cap T_r$. Recalling Lemma \ref{signvec}, observe that $\Eb_\st\in T_\cap$. Since $\Eb_\st$ is also symmetric, $\Eb_\st\in\tcs$. Similarly, $\Eb_c,\Eb_r\in T_\cap$, $\li\Eb_c,\Eb_r\ri\geq 0$ and $\|\Pc_\tcs(\Eb_c)\|=\|\frac{\Eb_c+\Eb_r}{2}\|_F\geq \frac{\|\Eb_c\|_F}{\sqrt{2}}$. Similar result is true for $\Eb_r$.

%
\end{proof}

\bigskip
\section{Results on non-convex recovery}\label{appC}

Next two lemmas are standard results on sub-gaussian measurement operators.
\begin{lemma}[Properties of sub-gaussian mappings]\label{ranemb} Assume $\X$ is an arbitrary matrix with unit Frobenius norm. A measurement operator $\Mec(\cdot)$ with i.i.d zero-mean isotropic subgaussian rows (see Section \ref{measure}) satisfies the following:
\begin{itemize}
\item $\E[\|\Mec(\X)\|_2^ 2]=m$.
\item There exists an absolute constant $c>0$ such that, for all $1\geq \eps\geq 0$, we have
\beq
\Pro(|\|\Mec(\X)\|_2^2-m|\geq \eps m)\leq 2\exp(-c\eps^2m).\nn
\eeq
\end{itemize}
\end{lemma}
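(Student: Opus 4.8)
The plan is to vectorize the problem and then invoke a standard Bernstein-type concentration inequality for sums of independent centered sub-exponential random variables. Write $\x=\vc(\X)\in\R^{\md_1\md_2}$, so that $\twonorm{\x}=\fronorm{\X}=1$ and $\Mec(\X)=\Meb\x$, where the rows $\ab_1,\dots,\ab_m$ of $\Meb$ are i.i.d.\ zero-mean isotropic sub-gaussian vectors. Then $\|\Mec(\X)\|_2^2=\sum_{i=1}^m (\ab_i^T\x)^2$. By isotropy, $\E[(\ab_i^T\x)^2]=\x^T(\E[\ab_i\ab_i^T])\x=\twonorm{\x}^2=1$, and summing over $i$ gives $\E[\|\Mec(\X)\|_2^2]=m$, which proves the first claim.

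For the concentration bound, I would first observe that $\ab_i^T\x$ is a mean-zero sub-gaussian scalar with sub-gaussian norm at most $\|\ab_i\|_{\Psi_2}\twonorm{\x}=\|\ab_i\|_{\Psi_2}$ (by the definition of the sub-gaussian norm of a vector), a quantity that depends only on the sub-gaussian norm of the rows of $\Meb$. Consequently $Y_i:=(\ab_i^T\x)^2-1$ is a centered sub-exponential random variable whose sub-exponential norm is bounded by an absolute constant multiple of $\|\ab_i\|_{\Psi_2}^2$, using the fact recalled in Definition \ref{sub gauss def} that the square of a sub-gaussian variable is sub-exponential. The $Y_i$ are i.i.d.\ and $\|\Mec(\X)\|_2^2-m=\sum_{i=1}^m Y_i$.

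Now I would apply Proposition 5.16 of \cite{Vershynin} to this sum, exactly as in the proof of Proposition \ref{sub gauss}: there is a constant $c'>0$, depending only on the sub-exponential norm of the $Y_i$ (hence only on the sub-gaussian norm of the rows), such that
\[
\Pro\Big( \Big| \sum\nolimits_{i=1}^m Y_i \Big| \ge t \Big) \;\le\; 2\exp\big( -c'\min\{ t^2/m,\, t \} \big), \qquad t\ge 0.
\]
Taking $t=\eps m$ with $0\le\eps\le 1$ gives $\min\{\eps^2 m,\ \eps m\}=\eps^2 m$, so
\[
\Pro\big( \big|\|\Mec(\X)\|_2^2-m\big| \ge \eps m \big) \;\le\; 2\exp(-c\eps^2 m)
\]
with $c=c'$, as claimed.

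I do not expect a substantive obstacle. The only point requiring care is that the hypothesis covers the two flavors of isotropic sub-gaussian rows described in Section \ref{measure} (a copy of a fixed-$\ell_2$-norm sub-gaussian vector, or one with independent unit-variance sub-gaussian entries); in both cases $\E[\ab_i\ab_i^T]=\Iden$ and the one-dimensional marginal $\ab_i^T\x$ has sub-gaussian norm bounded by an absolute constant, so the argument above applies verbatim to both.
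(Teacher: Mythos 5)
Your proof is correct and follows essentially the same route as the paper: compute the mean via isotropy, observe that $\|\Mec(\X)\|_2^2$ is a sum of $m$ i.i.d.\ centered sub-exponential variables with unit mean, and apply the Bernstein-type inequality from \cite{Vershynin} (the paper cites Corollary 5.17 there, which is the i.i.d.\ specialization of the Proposition 5.16 you invoke). Your version simply spells out the details the paper leaves implicit, including the reduction $\min\{\eps^2 m,\eps m\}=\eps^2 m$ for $\eps\le 1$.
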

\begin{proof}Observe that, when $\|\X\|_F=1$, entries of $\Mec(\X)$ are zero-mean with unit variance. Hence, the first statement follows directly. For the second statement, we use the fact that square of a sub-gaussian random variable is sub-exponential and view $\|\Mec(\X)\|_2^2$ as a sum of $m$ i.i.d. subexponentials with unit mean. Then, result follows from Corollary 5.17 of \cite{Vershynin}.
\end{proof}


For the consequent lemmas, $\Sc^{\md_1\times \md_2}$ denotes the unit Frobenius norm sphere in $\R^{\md_1\times \md_2}$.
\begin{lemma}\label{ranemb2}
Let $\Dc\in\R^{\md_1\times \md_2}$ be an arbitrary cone and $\Mec(\cdot):\R^{\md_1\times \md_2}\rightarrow \R^m$ be a measurement operator with i.i.d zero-mean and isotropic sub-gaussian rows. Assume that the set $\bar\Dc = \Sc^{\md_1\times \md_2}\cap \Dc$ has $\eps$-covering number bounded above by $\eta(\eps)$. Then, there exists constants $c_1,c_2>0$ such that whenever $m\geq c_1\log \eta(1/4)$, with probability $1-2\exp(-c_2 m)$, we have
\beqas
 \Dc\cap \Null(\Mec)=\{0\}.\nn
\eeqas
\end{lemma}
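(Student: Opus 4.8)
The plan is to use a standard $\eps$-net argument combined with the concentration inequality in Lemma \ref{ranemb}. First I would fix $\eps = 1/4$ and let $\mathcal{N}$ be a $1/4$-net of the compact set $\bar{\Dc} = \Sc^{\md_1\times \md_2}\cap \Dc$ with $|\mathcal{N}| \leq \eta(1/4)$. For each fixed $\X \in \mathcal{N}$ we have $\|\X\|_F = 1$, so Lemma \ref{ranemb} gives $\Pro(\|\Mec(\X)\|_2^2 < \tfrac{1}{2}m) \leq 2\exp(-c m)$ (taking $\eps = 1/2$ there). A union bound over the net shows that, with probability at least $1 - 2\eta(1/4)\exp(-c m)$, every $\X \in \mathcal{N}$ satisfies $\|\Mec(\X)\|_2 \geq \sqrt{m/2}$. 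As long as $m \geq c_1 \log \eta(1/4)$ for a suitable constant $c_1$, this probability is at least $1 - 2\exp(-c_2 m)$ for some $c_2 > 0$, since the exponent $cm - \log\eta(1/4) \geq (c - c_1^{-1})m$ can be made a positive multiple of $m$.

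Next I would propagate the bound from the net to all of $\bar{\Dc}$. Let $M = \sup_{\Z \in \bar{\Dc}} \|\Mec(\Z)\|_2$, which is finite by compactness of $\bar{\Dc}$ and linearity of $\Mec$; write $M = \|\Mec(\Z^*)\|_2$ for some $\Z^* \in \bar{\Dc}$ achieving the supremum. Pick $\X \in \mathcal{N}$ with $\|\Z^* - \X\|_F \leq 1/4$. Then $\Z^* - \X$, after normalization, lies in $\Dc$ (since $\Dc$ is a cone and both $\Z^*,\X \in \Dc$ — here I should double-check that $\Dc$ being a cone makes it closed under the relevant differences, or restrict attention to the conic hull; if $\Dc$ is merely a cone and not convex, I would instead use that $\bar\Dc - \bar\Dc$ is covered by scaled copies of $\bar\Dc$ up to a constant, which is the standard fix). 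Using the triangle inequality, $M \leq \|\Mec(\X)\|_2 + \|\Mec(\Z^* - \X)\|_2 \leq \|\Mec(\X)\|_2 + \tfrac{1}{4}M$, so $M \leq \tfrac{4}{3}\sup_{\X \in \mathcal{N}}\|\Mec(\X)\|_2$. Combined with a two-sided version of the net bound (upper bound on $\|\Mec(\X)\|_2$ for $\X \in \mathcal{N}$ via Lemma \ref{ranemb} with $\eps = 1/2$ again), this yields a uniform upper bound $M \lesssim \sqrt{m}$; and more importantly, the analogous argument applied to the \emph{lower} bound gives $\inf_{\Z \in \bar{\Dc}}\|\Mec(\Z)\|_2 \geq \sqrt{m/2} - \tfrac{1}{4}M \geq c\sqrt{m} > 0$ on the same high-probability event. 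Hence $\Mec(\Z) \neq 0$ for every $\Z \in \bar{\Dc}$, and since $\Dc$ is a cone this means $\Dc \cap \Null(\Mec) = \{0\}$.

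The main obstacle I anticipate is the passage from the net to the full set when $\Dc$ is only assumed to be a cone (not convex): the difference $\Z^* - \X$ need not lie in $\Dc$, so one cannot directly apply the concentration bound to it. The clean resolution is the usual one — either (i) observe that the argument only needs a uniform \emph{upper} bound $\|\Mec(\Z)\|_2 \leq K\sqrt{m}$ on all of $\R^{\md_1\times\md_2}$ restricted to a ball, which follows from bounding the operator norm of $\Mec$ via a net of the ambient sphere, and then a uniform lower bound on $\bar\Dc$ alone via the net of $\bar\Dc$; or (ii) absorb the increment $\Mec(\Z^*-\X)$ using $\|\Mec\|_{op}\|\Z^*-\X\|_F$ and choose the net fine enough that $\tfrac{1}{4}\|\Mec\|_{op}$ is a small fraction of $\sqrt{m}$. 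Either way this is routine once set up correctly; the constants $c_1, c_2$ are then read off from the two applications of Lemma \ref{ranemb} and the union bound. I would also note that this lemma is exactly the tool invoked (with a specific choice of $\Dc$ and an explicit covering-number estimate) in the proof of Lemma \ref{abcgcc}, so the covering number $\eta(1/4)$ need only be controlled for the structured cone of sparse-and-low-rank matrices there.
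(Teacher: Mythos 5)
Your proposal is correct and follows essentially the same route as the paper's own proof: a $1/4$-net of $\bar\Dc$, the concentration bound of Lemma \ref{ranemb} plus a union bound over the net, and then the triangle-inequality self-bounding argument to control $\sup_{\Z\in\bar\Dc}\|\Mec(\Z)\|_2$ and hence lower-bound the infimum. You are in fact slightly more careful than the paper, since you explicitly flag (and give the standard fixes for) the step where $\Z^*-\X$ is treated as if its normalization lay in $\bar\Dc$ -- a point the paper's proof passes over silently, and which is harmless in the application because differences of sparse-and-low-rank matrices remain in a constant-factor-larger set of the same type.
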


\begin{proof} Let $\eta=\eta(\frac{1}{4})$, and $\{\X_i\}_{i=1}^{\eta}$ be a $\frac{1}{4}$-covering of $\bar\Dc$. With probability at least $1-2\eta\exp(-c\eps^2m)$, for all $i$, we have
\beq
(1-\eps)m\leq \|\Ac(\X_i)\|_2^2\leq (1+\eps)m.\nn
\eeq
Now, let $\X_{\sup}=\arg\sup_{\X\in\bar\Dc}\|\A(\X)\|_2$. Choose $1\leq a\leq \eta$ such that $\|\X_a-\X_{\sup}\|_2\leq 1/4$. Then:
\beq
\|\A(\X_{\sup})\|_2\leq \|\A(\X_a)\|_2+\|\A(\X_{\sup}-\X_a)\|_2\leq (1+\eps)m+\frac{1}{4} \|\A(\X_{\sup})\|_2.\nn
\eeq
Hence, $\|\A(\X_{\sup})\|_2\leq \frac{4}{3}(1+\eps)m$. Similarly, let $\X_{\inf}=\arg\inf_{\X\in\bar\Dc}\|\A(\X)\|_2$. Choose $1\leq b\leq \eta$ satisfying $\|\X_b-\X_{\inf}\|\leq 1/4$. Then,
\beq
\|\A(\X_{\inf})\|_2\geq \|\A(\X_b)\|_2-\|\A(\X_{\inf}-\X_b)\|_2\geq (1-\eps)m-\frac{1}{3}(1+\eps)m.\nn
\eeq
This yields $\|\A(\X_{\inf})\|_2\geq \frac{2-4\eps}{3}m$. Choosing $\eps=1/4$ whenever $m\geq \frac{32}{c}\log(\eta)$ with the desired probability, $\|\A(\X_{\inf})\|_2>0$. Equivalently, $\bar\Dc\cap \Null(\A)=\emptyset$. Since $\A(\cdot)$ is linear and $\Dc$ is a cone, the claim is proved.
\end{proof}

The following lemma gives a covering number of the set of low rank matrices.
\begin{lemma}[Candes and Plan, \cite{CandesPlan}] \label{covnum}Let $M$ be the set of matrices in $\R^{\md_1\times \md_2}$ with rank at most $r$. Then, for any $\eps>0$, there exists a covering of $\Sc^{\md_1\times \md_2}\cap M$ with size at most $(\frac{c_3}{\eps})^{(\md_1+\md_2)r}$ where $c_3$ is an absolute constant. In particular, $\log(\eta(1/4))$ is upper bounded by $C^{(\md_1+\md_2)r}$ for some constant $C>0$.
\end{lemma}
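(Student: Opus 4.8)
The plan is to imitate the classical Stiefel/SVD covering argument (this is the construction of Cand\`es and Plan): rather than netting the manifold of rank-$r$ matrices directly, I would net each of the three factors of a singular value decomposition and then glue the nets together. Concretely, write an arbitrary $\X\in\Sc^{\md_1\times\md_2}\cap M$ as $\X=\U\bSi\V^T$, where $\U\in\R^{\md_1\times r}$ and $\V\in\R^{\md_2\times r}$ have orthonormal columns (so $\opnorm{\U}=\opnorm{\V}=1$ in the operator norm) and $\bSi\in\R^{r\times r}$ is diagonal with $\fronorm{\bSi}=\fronorm{\X}=1$ (pad the SVD with zero singular values if $\rank(\X)<r$). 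The point of the decomposition is that $\U$ ranges over the operator-norm unit ball of $\R^{\md_1\times r}$, an $\md_1 r$-dimensional normed space; $\V$ over that of $\R^{\md_2\times r}$; and the diagonal of $\bSi$ over the Euclidean unit ball of $\R^r$. By the standard volumetric estimate, the unit ball of an $N$-dimensional normed space has a $\delta$-net of cardinality at most $(3/\delta)^N$, so I would obtain $\delta$-nets $\Nc_{\U},\Nc_{\V},\Nc_{\bSi}$ (operator norm for the first two, Euclidean for the last) of sizes $(3/\delta)^{\md_1 r}$, $(3/\delta)^{\md_2 r}$, $(3/\delta)^{r}$.

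Next I would glue: given $\X=\U\bSi\V^T$, choose $\bar{\U}\in\Nc_{\U}$, $\bar{\V}\in\Nc_{\V}$, $\bar{\bSi}\in\Nc_{\bSi}$ within $\delta$ of the corresponding factors and bound the error by the telescoping inequality
\beq
\fronorm{\X-\bar{\U}\bar{\bSi}\bar{\V}^T}\;\leq\;\fronorm{(\U-\bar{\U})\bSi\V^T}+\fronorm{\bar{\U}(\bSi-\bar{\bSi})\V^T}+\fronorm{\bar{\U}\bar{\bSi}(\V-\bar{\V})^T}.\nn
\eeq
Using the mixed-norm inequality $\fronorm{ABC}\leq\opnorm{A}\,\fronorm{B}\,\opnorm{C}$ together with $\opnorm{\U}=\opnorm{\V}=1$, $\fronorm{\bSi}=1$, $\opnorm{\bar{\U}}\leq1$, $\opnorm{\bar{\V}}\leq1$ and $\fronorm{\bar{\bSi}}\leq1+\delta\leq2$, each term is at most $2\delta$, so the sum is at most $4\delta$ (any explicit small constant works). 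Taking $\delta=\eps/4$ therefore produces an $\eps$-net of $\Sc^{\md_1\times\md_2}\cap M$ of size at most $(12/\eps)^{(\md_1+\md_2+1)r}$, and since $r\leq\min\{\md_1,\md_2\}$ the extra summand in the exponent is absorbed into the constant $c_3$ (and is, in any case, irrelevant downstream, where only $\log\eta(1/4)$ is used). The ``in particular'' claim then follows by setting $\eps=1/4$: $\log\eta(1/4)\leq(\md_1+\md_2+1)r\log 48=O\big((\md_1+\md_2)r\big)$.

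I do not expect a real obstacle here; the content is entirely standard and the only place that needs a moment's thought is bookkeeping the norms in the gluing step — it is essential to net $\U$ and $\V$ in the \emph{operator} norm (not the Frobenius norm), since that is exactly what makes $\fronorm{ABC}\leq\opnorm{A}\,\fronorm{B}\,\opnorm{C}$ usable while keeping the net sizes at $(3/\delta)^{\md_i r}$. I would also note, for cleanliness, that one loses nothing by relaxing the constraint on $\U,\V$ from ``orthonormal columns'' to ``operator norm $\leq1$'', because a $\delta$-net of a superset restricts to a $\delta$-net of the set, and the superset (the operator-norm ball) already has a net of the stated cardinality.
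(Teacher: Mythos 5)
Your proposal is correct, and it is essentially \emph{the} proof of this statement: the paper does not prove the lemma itself but defers to Cand\`es and Plan, and the argument in that reference (Lemma 3.1 there) is exactly your construction --- net $\U$ and $\V$ in the operator norm, net the singular values in the Euclidean norm, and glue via $\fronorm{\Ab\B\Cb}\leq\opnorm{\Ab}\,\fronorm{\B}\,\opnorm{\Cb}$. The only bookkeeping worth flagging is that this route naturally yields the exponent $(\md_1+\md_2+1)r$ rather than $(\md_1+\md_2)r$, which you correctly absorb into the constant and which is immaterial downstream, where only $\log\eta(1/4)=\order{(\md_1+\md_2)r}$ is used.
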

Now, we use Lemma \ref{covnum} to find the covering number of the set of simultaneously low rank and sparse matrices.
\subsection{Proof of Lemma \ref{abcgcc}}
\begin{proof} Assume $M$ has $\frac{1}{4}$-covering number $N$. Then, using Lemma \ref{ranemb2}, whenever $m\geq c_1\log N$, (\ref{abcgcceq}) will hold. What remains is to find $N$. To do this, we cover each individual $s_1\times s_2$ submatrix and then take the union of the covers. For a fixed submatrix, using Lemma \ref{covnum}, $\frac{1}{4}$-covering number is given by $C^{(s_1+s_2)q}$. In total there are ${\md_1\choose s_1}\times {\md_2\choose s_2}$ distinct submatrices. Consequently, by using $\log {\md\choose s}\approx s\log \frac{\md}{s}+s$, we find
\beq
\log N\leq \log\left({\md_1\choose s_1}\times {\md_2\choose s_2}C^{(s_1+s_2)q}\right)\leq s_1\log\frac{\md_1}{s_1}+s_1+s_2\log\frac{\md_2}{s_2}+s_2+(s_1+s_2)q\log C,\nn
\eeq
and obtain the desired result.
\end{proof}

%

\end{document}